\documentclass[10pt,twocolumn]{IEEEtran}
%
\IEEEoverridecommandlockouts
\linespread{0.97}
\parskip 0pt

\usepackage{verbatim}
\usepackage{epsfig}
\usepackage{amsmath}
\usepackage{amsthm}
\usepackage{amssymb}
\usepackage{psfrag}
\usepackage[T1]{fontenc}
\usepackage[ruled]{algorithm2e}

\usepackage[usenames,dvipsnames]{pstricks}
\usepackage{pst-grad} 
\usepackage{pst-plot} 
\usepackage[ruled]{algorithm2e}
\usepackage{etoolbox}
\usepackage{subfig}
\makeatletter
\patchcmd{\@makecaption}
  {\scshape}
  {}
  {}
  {}
\makeatother
\bibliographystyle{IEEEtran}

\usepackage[font=small,skip=0pt]{caption}
\usepackage[noadjust]{cite}


\newtheorem{theorem}{Theorem}
\newtheorem{lemma}{Lemma}

\newtheorem{definition}{Definition}
\newtheorem{assump}{Assumption}

\newtheorem{corollary}{Corollary}[theorem]
\newtheorem{remark}{Remark}

\newcommand{\E}{\mathcal{E}}
\newcommand{\V}{\mathcal{V}}
\newcommand{\G}{\mathcal{G}}

\newcommand{\x}{\mathbf{x}}
\newcommand{\y}{\mathbf{y}}
\newcommand{\z}{\mathbf{z}}
\newcommand{\p}{\mathbf{P}}
\newcommand{\I}{\mathbf{I}}
\newcommand{\A}{\mathbf{A}}
\newcommand{\B}{\mathbf{B}}

\newcommand{\M}{\mathbf{M}}

\newcommand{\nrps}{\textbf{\texttt{NR-PushSum}}}

\definecolor{color1}{rgb}{0,0,0}

\IEEEpeerreviewmaketitle

\title{Distributed Average Consensus Over Noisy Communication Links in Directed Graphs}
\author{Vivek Khatana$^1$ and Murti V. Salapaka$^{1}$ \\
\thanks{This work is supported by the Advanced Research Projects Agency-Energy OPEN through the project titled "Rapidly Viable Sustained Grid" via grant no. DE-AR0001016.}
\thanks{$^{1}$ Vivek Khatana \{{\tt\small khata010@umn.edu}\} and Murti V. Salapaka \{{\tt\small murtis@umn.edu\}} are with Department of Electrical and Computer Engineering, University of Minnesota, Minneapolis, USA,
}
}
\begin{document}

\maketitle

\begin{abstract}
Motivated by the needs of resiliency, scalability and plug-and-play operation, distributed decision making is becoming increasingly prevalent. The problem of achieving consensus in a multi-agent system is at the core of distributed decision making. In this article, we study the problem of achieving average consensus over a \textit{directed} multi-agent network when the communication links are corrupted with \textit{noise}. We propose an algorithm where each agent updates its estimates based on local mixing of information and adds its weighted noise-free initial information to its updates during every iteration. We demonstrate that with appropriately designed weights the agents achieve consensus under additive communication noise. We establish that when the communication links are \textit{noiseless} the proposed algorithm moves towards consensus at a geometric rate. Under communication noise we prove that the agent estimates reach to a consensus value \textit{almost surely}. We present numerical experiments to corroborate the efficacy of the proposed algorithm under different noise realizations and various algorithm parameters.
\\\\
\textit{keywords}: Average consensus, push sum, ratio consensus, communication noise,  directed graphs, multi-agent networks, almost sure convergence.

\end{abstract}

\begin{section}{Introduction}\label{sec:introduction}
Enabled by the large number of low-cost sensors, distributed computation and storage, multi-agent systems (MAS), are being leveraged across diverse applications in modern day systems including, power systems \cite{patel2017distributed, patel2020distributed, patel2020codit}, banking and financial systems \cite{ala2016classifiers}, unmanned aerial and autonomous underwater vehicle control \cite{FaxMur04, olfati2007consensus}, and machine learning \cite{nedic2020distributed} that contain multiple decision makers. Due to the inherent distributed nature of the available information in MAS, a distributed form of decision making is required \cite{arrow1958decentralization, degroot1974reaching, lynch1996distributed}. Moreover, distributed control applications need estimates of the global quantities to be made available locally to design inputs of the agents \cite{jadbabaie2003coordination, olfati2007consensus}. A consensus algorithm allowing all agents to agree upon the final decision (or state value) is an efficient way of getting global information/decisions in MAS. The problem of reaching to a consensus in MAS is studied widely in systems and control theory (see \cite{olfati2004consensus, kempe2003gossip, hadjicostis2012average,  saraswat2019distributed, hadjicostis2013average, mangalJournal, melbourne2020geometry,   melbourne2020convex,   olfati2007consensus}
and references therein). A special case of consensus among the agents is when the agreed upon state is the average of all the initial values of the agents; called average consensus. 


A key issue in the application of consensus protocols in MAS arises due to the noise in wireless communication channels between the agents.  The noisy communication links can obscure the true states of the agents and can drive the agent estimates far away from the original consensus value (the average) or even lead to divergence of the consensus protocol. Researchers have investigated the problem of achieving consensus under communication noise in the literature \cite{liu2011distributed, xiao2007distributed, carli2009average, pan2015consensus, hanada2020stochastic, dasarathan2015robust, li2018analysis, huang2007stochastic, kar2007distributed, kar2008distributed, pescosolido2008average, li2010consensus, aysal2010convergence, rajagopal2010network, schizas2007consensus, schizas2008consensus, huang2009coordination, kibangou2011finite, zhou2013discrete, he_bounded, rego2015consensus, jadbabaie2016performance, morral2017success, chen2017critical, granichin2020simultaneous, mateos2016noise, touri2009distributed, wang2010dynamic, wang2013distributed, pu2018flocking, long2015distributed, sheipak2020reaching, huang2009stochastic, wang2015consensus, li2009mean, yaziciouglu2020high, wang2009distributed, wang2016robust, zong2015stochastic, cheng2013mean}. Majority of these works devise consensus protocols with discrete-time updates \cite{liu2011distributed, xiao2007distributed, carli2009average, pan2015consensus, hanada2020stochastic, dasarathan2015robust, li2018analysis, huang2007stochastic, kar2007distributed, kar2008distributed, pescosolido2008average, li2010consensus, aysal2010convergence, rajagopal2010network, schizas2007consensus, schizas2008consensus, huang2009coordination, kibangou2011finite, zhou2013discrete, he_bounded, rego2015consensus, jadbabaie2016performance, morral2017success, chen2017critical, granichin2020simultaneous, mateos2016noise, touri2009distributed, wang2010dynamic, wang2013distributed, pu2018flocking, long2015distributed, sheipak2020reaching, huang2009stochastic, wang2015consensus}; while some researchers have presented consensus algorithms in the continuous time setting \cite{li2009mean, yaziciouglu2020high, wang2009distributed, wang2016robust, zong2015stochastic, cheng2013mean}. The articles \cite{liu2011distributed,li2009mean, xiao2007distributed, carli2009average, pan2015consensus, dasarathan2015robust, li2018analysis, kar2007distributed, kar2008distributed, pescosolido2008average, li2010consensus, rajagopal2010network, schizas2007consensus, schizas2008consensus, huang2009coordination, kibangou2011finite, zhou2013discrete, he_bounded, rego2015consensus, jadbabaie2016performance, chen2017critical, granichin2020simultaneous, yaziciouglu2020high, touri2009distributed, wang2009distributed, wang2010dynamic, wang2013distributed, pu2018flocking, long2015distributed, sheipak2020reaching,  wang2015consensus, wang2016robust, zong2015stochastic, cheng2013mean} are focused on undirected graphs where the agents are connected via bidirectional edges and the communication topology is symmetric. Moreover, most of the existing algorithms are \textit{synthesized centrally} by utilizing global information of the interconnection structure. Here, a symmetric double stochastic matrix (sum of entries of all the rows and columns is equal to one) or a balanced Laplacian matrix is utilized to capture the communication interconnection among the agents \cite{liu2011distributed, li2009mean, xiao2007distributed, carli2009average, pan2015consensus, hanada2020stochastic, dasarathan2015robust, li2018analysis, kar2007distributed, kar2008distributed, pescosolido2008average, li2010consensus, rajagopal2010network, schizas2007consensus, schizas2008consensus, kibangou2011finite, zhou2013discrete, he_bounded, rego2015consensus, chen2017critical, granichin2020simultaneous, yaziciouglu2020high, mateos2016noise, touri2009distributed, wang2009distributed, wang2010dynamic, wang2013distributed, pu2018flocking, long2015distributed, sheipak2020reaching, wang2015consensus, wang2016robust, zong2015stochastic, cheng2013mean}. The consensus protocols in \cite{huang2007stochastic, huang2009coordination, huang2009stochastic} employ a row stochastic (sum of entries of all the rows is equal to one) update. However, in case of row stochastic weight matrices the estimates incur a bias away from the average of the initial values of the agents. Articles \cite{wang2009distributed, wang2010dynamic, wang2013distributed} require a balanced feasible graph Laplacian for achieving agreement between the agents. In most schemes, that address the communication channel noise, it is assumed that the noise is a zero mean random variable independent across different agents and time-instants \cite{liu2011distributed, li2009mean, xiao2007distributed, carli2009average, pan2015consensus, hanada2020stochastic, li2018analysis, huang2007stochastic,huang2009stochastic, kar2007distributed, kar2008distributed, pescosolido2008average, li2010consensus, aysal2010convergence, rajagopal2010network, schizas2007consensus, schizas2008consensus, huang2009coordination, kibangou2011finite, jadbabaie2016performance, chen2017critical, granichin2020simultaneous, yaziciouglu2020high, mateos2016noise, touri2009distributed, wang2009distributed, wang2010dynamic, wang2013distributed, pu2018flocking, sheipak2020reaching, wang2015consensus, wang2016robust, zong2015stochastic, cheng2013mean}.
Article \cite{carli2009average} considers the channel noise to have a Poisson distribution, \cite{dasarathan2015robust} assumes a zero median (e.g., its PDF is symmetric about zero) communication noise, \cite{huang2007stochastic} analyzes the algorithm under the assumption of communication noise with bounded $p$-th moments i.e., $\mathbb{E}| w_t|^p < \infty, p \in (1,2]$, and in \cite{long2015distributed} a multiplicative noise model is considered where the noise intensities depend on the relative states of agents and form a martingale. A bounded noise model is assumed in \cite{zhou2013discrete, he_bounded, rego2015consensus, granichin2020simultaneous} where, the communication link noise is allowed to change randomly between an upper and lower bound. In terms of the convergence guarantees the existing literature can be categorized into three categories: (i) articles \cite{liu2011distributed, li2009mean, xiao2007distributed, carli2009average, pan2015consensus, li2018analysis,huang2007stochastic, li2010consensus, schizas2007consensus, schizas2008consensus, huang2009coordination, kibangou2011finite, chen2017critical, touri2009distributed, wang2009distributed, wang2010dynamic, wang2013distributed, pu2018flocking, long2015distributed, huang2009stochastic, wang2015consensus, wang2016robust, zong2015stochastic, cheng2013mean} provide guarantees about weak consensus where the agreement between the agents is reached in the mean square sense; (ii) works in \cite{liu2011distributed, dasarathan2015robust,huang2007stochastic, kar2007distributed, kar2008distributed, li2010consensus, aysal2010convergence, rajagopal2010network, morral2017success, touri2009distributed, wang2009distributed, long2015distributed, huang2009stochastic, zong2015stochastic} establish a strong consensus guarantee which provide an almost sure agreement between the agent states. However, no analysis about closeness of the asymptotic consensus value to the average of the initial values of the agents is provided; (iii) under the assumption of bounded unknown noise, analysis in \cite{zhou2013discrete,he_bounded, rego2015consensus, granichin2020simultaneous, mateos2016noise, sheipak2020reaching} provide an uniform upper bound on the state residuals (difference between the estimates of different agents) of the agents in the MAS; however the agent estimates are not shown to achieve consensus.

The algorithms proposed in the literature under communication noise \cite{liu2011distributed, xiao2007distributed, carli2009average, pan2015consensus, hanada2020stochastic, dasarathan2015robust, li2018analysis, huang2007stochastic, kar2007distributed, kar2008distributed, pescosolido2008average, li2010consensus, aysal2010convergence, rajagopal2010network, schizas2007consensus, schizas2008consensus, huang2009coordination, kibangou2011finite, zhou2013discrete, he_bounded, rego2015consensus, jadbabaie2016performance, morral2017success, chen2017critical, granichin2020simultaneous, mateos2016noise, touri2009distributed, wang2010dynamic, wang2013distributed, pu2018flocking, long2015distributed, sheipak2020reaching, huang2009stochastic, wang2015consensus, li2009mean, yaziciouglu2020high, wang2009distributed, wang2016robust, zong2015stochastic, cheng2013mean} work over undirected graphs, require symmetric and balanced algorithm structures, and cannot be synthesized locally by agents without the knowledge of global information. In recent years, there is a wide interest in having distributed decision making (distributed optimization and consensus) over general directed graph topologies (see \cite{khatana2020gradient, khatana2020d, nedic2014distributed, nedic2017achieving, pu2020push} and references therein). A natural question that comes up is to develop a distributed average consensus algorithm that can work over directed graph topologies under communication noise. In the \textit{noiseless} case, article \cite{kempe2003gossip} developed an average consensus algorithm called the PushSum/Ratio-consensus algorithm with a focus on achieving average consensus in MAS with directed interconnection structure. An important property of the PushSum algorithm is that it can be \textit{synthesized locally} by each agent in the MAS without requiring any global information. The PushSum algorithm is used extensively in the distributed optimization literature for distributed decision making over general directed graphs with noiseless communication (see \cite{khatana2020gradient,khatana2019gradient, khatana2020d,khatana2020dc, nedic2014distributed, nedic2017achieving} and references therein). Articles \cite{hadjicostis2012average, saraswat2019distributed} have studied the convergence of the PushSum algorithm when the underlying communication topology of the MAS is time-varying. The properties of the PushSum algorithm under delays in communication channels is studied in \cite{hadjicostis2013average, mangalJournal}. Authors in \cite{melbourne2020geometry, melbourne2020convex, khatana2020gradient} have extended the PushSum algorithm to higher dimensions where the decision variables are vector valued. Moreover, a finite-time termination criteria for the PushSum algorithm is developed in \cite{saraswat2019distributed,mangalJournal, melbourne2020geometry, melbourne2020convex, khatana2020gradient}. Although the articles \cite{hadjicostis2012average, saraswat2019distributed, hadjicostis2013average, mangalJournal, melbourne2020geometry, melbourne2020convex, khatana2020gradient} have extended the applicability of the PushSum algorithm under relaxed graph and communication scenarios, the performance guarantees of the PushSum algorithm under communication noise remains unaddressed. To this end, the current work addresses the following key issues:
\begin{itemize}
    \item[1.] Can the celebrated PushSum algorithm that is utilized widely in the distributed decision making applications over directed graphs, work under communication noise?
    \item[2.] If the Push-Sum algorithm and its variants in the state-of-the-art do not perform satisfactorily under noisy communication, then device  a consensus algorithm that can with all the attractive features of the Push-Sum algorithm: distributed synthesis and implementation,  guarantees about the agents achieving consensus and converging to the average of the initial values of the agents.
\end{itemize}

In this article, we address the problem of designing an algorithm to achieve average consensus in directed multi-agent graphs where the communication links between the agents are corrupted with noise. In the proposed algorithm the agents update their states by local mixing among their neighbors connected via a directed link. We establish guarantees about agreement between the agents' states under the proposed scheme. Further, we corroborate the performance of the developed algorithm with numerical simulations. The contributions of this article are summarized below:
\begin{itemize}
    \item[1)] The current article establishes that the PushSum algorithm does not perform well under imperfect communication with additive noise. 
    
    \item[2)] An important contribution of the current work is the proposed \textbf{\underline{N}oise \underline{R}esilient PushSum} ($\nrps$) algorithm $\nrps$ that extends the well studied PushSum algorithm to practical settings with noisy communication channels. We emphasize that the current work is the \textit{first} work to present such an important extension.
    
    \item[3)] The developed $\nrps$ algorithm has the following features: 
    \begin{itemize}
        \item[a)] Compared to existing algorithms in the literature that focus on getting average consensus under noisy communication, the proposed $\nrps$ algorithm achieves average consensus over general directed graph topologies and does not require restrictive assumptions of the communication graph to be symmetric and/or balanced. 
        
        \item[b)] The $\nrps$ algorithm can be synthesized distributively. In particular, $\nrps$ can be {\it designed} (and not just implemented distributively) using only decentralized information. This is a significant advantage that enables any new agent to join the network with ease.
        
        \item[c)] The $\nrps$ algorithm achieves average consensus at a geometric rate under noiseless communication. In case of the communication noise, agents' estimates under the $\nrps$ algorithm achieve consensus (perfect agreement) \textit{almost surely}. Moreover, the algorithm parameters can be controlled to make the consensus value close to the average of the initial states of the agents.
    \end{itemize}

\end{itemize}

The rest of the article is organized as follows. In Section~\ref{sec:defn_probdes}, we discuss the problem description and present some basic definitions and notations that are used in the article. Section~\ref{sec:algorithm_section} presents the proposed $\nrps$ algorithm in detail. In Section~\ref{sec:convgAnalysis}, we provide the convergence analysis of the $\nrps$ algorithm under both: perfect and noisy communication. Numerical simulations are provided in Section~\ref{sec:sim_results} to demonstrate the performance of the $\nrps$ algorithm. Section~\ref{sec:conclusion} presents the concluding remarks.
\end{section}

\begin{section}{Definitions, Notations and Problem Description}\label{sec:defn_probdes}
\subsection{Definitions and notations}
First, we present some definitions that will be useful for the rest of the development. Detailed description of most of these notions are available in \cite{Die06} and \cite{horn2012matrix}.

\begin{definition}(Directed Graph)
A directed graph $\mathcal{G}$ is a pair $(\mathcal{V},\mathcal{E})$ where $\mathcal{V}$ is a set of vertices (or nodes) and $\mathcal{E}$ is a set of edges, which are ordered subsets of two distinct elements of $\mathcal{V}$. If an edge from $j \in \mathcal{V}$ to $i \in \mathcal{V}$ exists then it is denoted as $(i,j)\in \mathcal{E}$. 
\end{definition}

\begin{definition}(Path) 
In a directed graph, a directed path from node $i$ to $j$
exists if there is a sequence of distinct directed edges of $\mathcal{G}$ of
the form $(k_{1},i),(k_{2},k_{1}),...,(j,k_{m}).$
\end{definition}

\begin{definition}(Strongly Connected Graph) A directed graph is strongly connected if it has a directed path between each pair of distinct nodes $i$ and $j.$ 
\end{definition}

\begin{definition}(In-neighborhood) The set of in-neighbors of node $i \in \mathcal{V}$ is denoted by $\mathcal{N}^{I}_i = {\{j: (i,j)\in \mathcal{E}}\}$. The number of agents $|\mathcal{N}^{I}_i|$ in the in-neighborhood of an agent $i$ is called the in-degree of agent $i$.
\end{definition}

\begin{definition}(Out-neighborhood) The set of out-neighbors of node $i \in \mathcal{V}$ is denoted by $\mathcal{N}^{O}_i = {\{j: (j,i)\in \mathcal{E}}\}$. The number of agents $|\mathcal{N}^{O}_i|$ in the out-neighborhood of an agent $i$ is called the out-degree of agent $i$.
\end{definition}
\begin{definition}(Column Stochastic Matrix) A real $n\times n$ matrix $A=[a_{ij}]$
is called a column-stochastic matrix if $0 \leq a_{ij}\leq 1$ 
and $\sum_{i=1}^{n}a_{ij}=1$ for $1\leq i,j\leq n.$ 
\end{definition}

\begin{definition}(Irreducible Matrix)  An $n \times n$ matrix $A$ is reducible if we may partition
$\{1,\dots,n\}$ into two non-empty subsets $E$, $F$ such that $A_{ij} = 0$ if $i \in E, j \in F$. Matrix $A$ is called irreducible if it is not reducible. 
\end{definition}

\begin{definition}(Aperiodic Matrix)
Let $A$ be a $n \times n$ matrix. Let $\G(\V,\E)$ be a directed graph induced by the zero/non-zero structure of $A$. Then matrix $A$ is aperiodic if $\G(\V,\E)$ is strongly connected with at least one node has a self loop.
\end{definition}

\begin{definition}(Filtration)
Let $(\Omega ,{\mathcal {A}},P) $ be a probability space and let $T$ be an index set with a total order $\leq$  (often $ \mathbb {N} , \mathbb {R}^{+},$ or a subset of $\mathbb {R}^{+}$). For every $ t\in T$ let $\mathcal {F}_{t}$ be a sub-$\sigma$-algebra of $\mathcal {A}$. Then the collection $\{\mathcal {F}_{t}\}_{t \in T}$ is called a filtration, if $\mathcal {F}_{k}\subseteq \mathcal {F}_{\ell}$ for all $ k\leq \ell$. Thus, filtrations are families of $\sigma$-algebras that are ordered non-decreasingly.
\end{definition}

Define $[n] := \{1,2,\dots,n\}$. We denote by $p_{ij}$ the entry in $i^{th} \in [m]$ row and $j^{th} \in [n]$ column of the matrix $P \in \mathbb{R}^{m\times n}$. $\mathbf{1}$ denotes a vector of appropriate dimension with all entries equal to one. $\I_q \in \mathbb{R}^{q \times q}$ denotes the identity matrix of dimension $q$. For a vector $b \in \mathbb{R}^q$, we denote $\text{\textbf{diag}}(b)$, to be a $q \times q$ square diagonal matrix with the elements of vector $b$ on the main diagonal. 

We denote the Hadamard (or element-wise/entry-wise) product and division of matrices by $\odot$ and $\oslash$ respectively. $O(.)$ and $o(.)$ denote the standard \textit{Big-O} and \textit{Little-o} notations respectively. We use the term \textit{almost surely} to state that an event happens with probability 1. We use the words node and agent interchangeably in the text.

\subsection{Setup}
Consider a set of $n$ agents connected through a directed graph $\mathcal{G}(\mathcal{V},\mathcal{E})$. The agents communicate via uni-directional communication links in $\mathcal{G}(\mathcal{V},\mathcal{E})$. The communication link between any agent $i$ and its neighbor $j$ is corrupted by noise. Each agent $i \in \mathcal{V}$ has an initial state $u_i \in \mathbb{R}^p$. The objective is to design an algorithm that allows each agent to distributively compute the average, 
\begin{align}\label{eq:ini_avg}
    \overline{u}:= \textstyle \frac{1}{n} \sum_{i=1}^n u_i,
\end{align}
by interacting only with its neighboring agents in $\mathcal{G}(\mathcal{V},\mathcal{E})$ under the communication constraints: uni-directional communication links with additive noise.

\subsection{PushSum with noisy communication}
Under the PushSum algorithm, at any discrete time-instant $k$ each agent $i$ maintains three estimates $x_i(k) \in \mathbb{R}^p, y_i(k) \in \mathbb{R}$ and $z_i(k) \in \mathbb{R}^p$. Due to noise in communication channels in $\G(\mathcal{V},\mathcal{E})$ the agent $i$ receives noisy version of the information sent by the agent $j$.
Node $i$ updates its estimates at $(k+1)^{th}$ discrete time-instant, by performing a weighted combination of the (noisy) information received from its in-neighbors and its own information (available without any additive noise at the agent $i$) as follows:
\begin{align}
 x_{i}(k+1) & = p_{ii}x_{i}(k) + \textstyle \sum_{j \in \mathcal{N}^{I}_i }(p_{ij} x_{j}(k) + \eta_{x_{ij}}(k)), \label{eq:consensus_num_ps}\\
y_{i}(k+1) & = p_{ii}y_{i}(k) + \textstyle \sum_{j \in \mathcal{N}^{I}_i }(p_{ij} y_{j}(k) + \eta_{y_{ij}}(k)), \label{eq:consensus_den_ps}
\end{align}
where, $ x_i(0) = u_i$, and  $y_i(0) = 1$ for all $i \in [n]$ and the terms $\eta_{x_{ij}}(k)$ and,  $\eta_{y_{ij}}(k)$ are the additive noise realizations in the estimates of agent $j$ sent to agent $i$ at any instant $k$. The variable $z_i(k)$ is the estimate of the average  $\overline{u}$ maintained by agent $i \in \mathcal{V}$ at time-instant $k$ which is updated as,
\begin{align}
z_i(k+1) &=  \textstyle \frac{1}{y_i(k+1)}x_i(k+1). \label{eq:consensus_ratio_ps}
\end{align}

\noindent Let $\p$ be weight matrix corresponding to~(\ref{eq:consensus_num_ps}) and~(\ref{eq:consensus_den_ps}) with $\p_{ij} \in [0,1]$ and,
\begin{align}
    \p_{ij} &= \begin{cases}\label{eq:pmat}
    p_{ij} \neq 0, & \text{if} \ i = j \ \text{or} \ (i,j) \in \E, \\
    0, &  \text{otherwise}. 
\end{cases}
\end{align}

\noindent The following assumption is typically made to study the PushSum algorithm:
\begin{assump}\label{assp:strg_colmstoc}
The directed graph $\mathcal{G}(\mathcal{V},\mathcal{E})$ is strongly connected and associated weight matrix $\p$ is column stochastic. 
\end{assump}

\noindent One choice of weights that satisfy Assumption~\ref{assp:strg_colmstoc} is the \textit{out-degree based equal neighbor weights rule}: Here, 
\begin{align}
    p_{ij} &= \begin{cases}\label{eq:weightrule}
    \textstyle \frac{1}{1+|\mathcal{N}^{O}_j|}, & \text{if} \ i \in \mathcal{N}^{O}_j, \ \forall j, \\
    0, &  \text{if}\ i \notin \mathcal{N}^{O}_j, \ \forall j. 
\end{cases}
\end{align}
Define for all $i \in [n]$,
\begin{align}\label{eq:notation_1}
 \eta_{x_i}(k) := \textstyle \sum_{j \in \mathcal{N}^{I}_i } \eta_{x_{ij}}(k), \ \eta_{y_i}(k) := \textstyle \sum_{j \in \mathcal{N}^{I}_i } \eta_{y_{ij}}(k).
\end{align}
We stack the agents' estimates in a column vector form and define the following quantities:
\begin{align}
\x(k) & := [x_1(k); \dots; x_n(k)] \in \mathbb{R}^{n \times p}, \label{eq:notation_2} \\ 
\y(k) & := [y_1(k); \dots; y_n(k)] \in \mathbb{R}^{n}, \label{eq:notation_3}\\
 \eta_{x}(k) & := [\eta_{x_1}(k); \dots; \eta_{x_n}(k)] \in \mathbb{R}^{n \times p}, \ \text{and} \label{eq:notation_4} \\ 
 \eta_{y}(k) &:= [\eta_{y_1}(k); \dots; \eta_{y_n}(k)] \in \mathbb{R}^{n}.\label{eq:notation_5}
\end{align}
Thus, updates~(\ref{eq:consensus_num_ps}) and~(\ref{eq:consensus_den_ps}) can be written as:
\begin{align} \nonumber
\hspace{-0.1in} \x(k+1) = \p \x(k) + \eta_{x}(k), 
\y(k+1) = \p \y(k) + \eta_{y}(k),
\end{align}
where, $\eta_x(k)$ and $\eta_y(k)$ are the additive noise. Left multiplying by $\mathbf{1}^\top$ on both sides of the above equations, we have 
\begin{align*}
    \mathbf{1}^\top \x(k+1) &= \mathbf{1}^\top \x(k) + \mathbf{1}^\top \eta_x(k), \ \text{and} \\
    \mathbf{1}^\top \y(k+1) &= \mathbf{1}^\top \y(k) + \mathbf{1}^\top \eta_y(k).
\end{align*}
Thus, the sum of estimates $\x(k)$ and $\y(k)$ integrate the communication noise and thus is equivalent to a random walk with unbounded variance. Note, that in the case when the communication links are noiseless the sum of the estimates $\x(k)$ and $\y(k)$ are conserved during the iterations \cite{patel2020distributed} that plays a crucial role in the convergence of the PushSum algorithm to the average of the initial values. The random walk behavior of the sums imply that the estimates $\x(k)$ and $\y(k)$ and hence, $\z(k)$ do not converge to a meaningful value. This is clear from the trajectories of the agent states $\z_i(k)$ under the PushSum algorithm for a $10$ node directed graph with noisy communication shown in Fig.~\ref{fig:ratio}. Here, the initial values of the agents are chosen to be $[1,2,3,4,5,6,7,8,9,10]$ (with the average being $5.5$). The communication links have an additive noise modeled as a normal random variable, $\mathcal{N}(0,0.1)$. 
\begin{figure}[h]
\centering
    \includegraphics[scale=0.21,trim={0.35cm 5.2cm 1.12cm 6.1cm},clip] {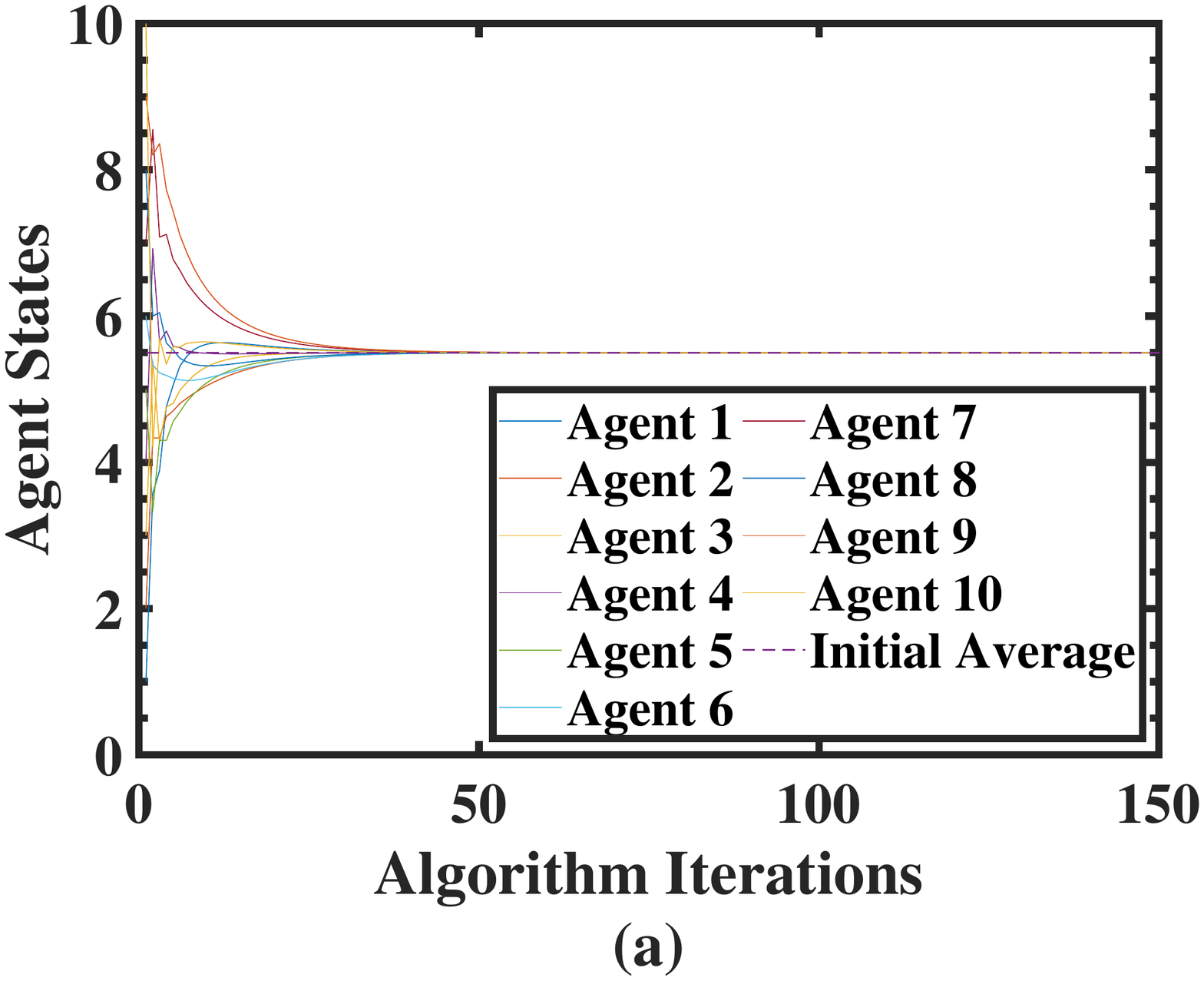}
    \includegraphics[scale=0.21,trim={0.35cm 5.2cm 1.12cm 6.1cm},clip] {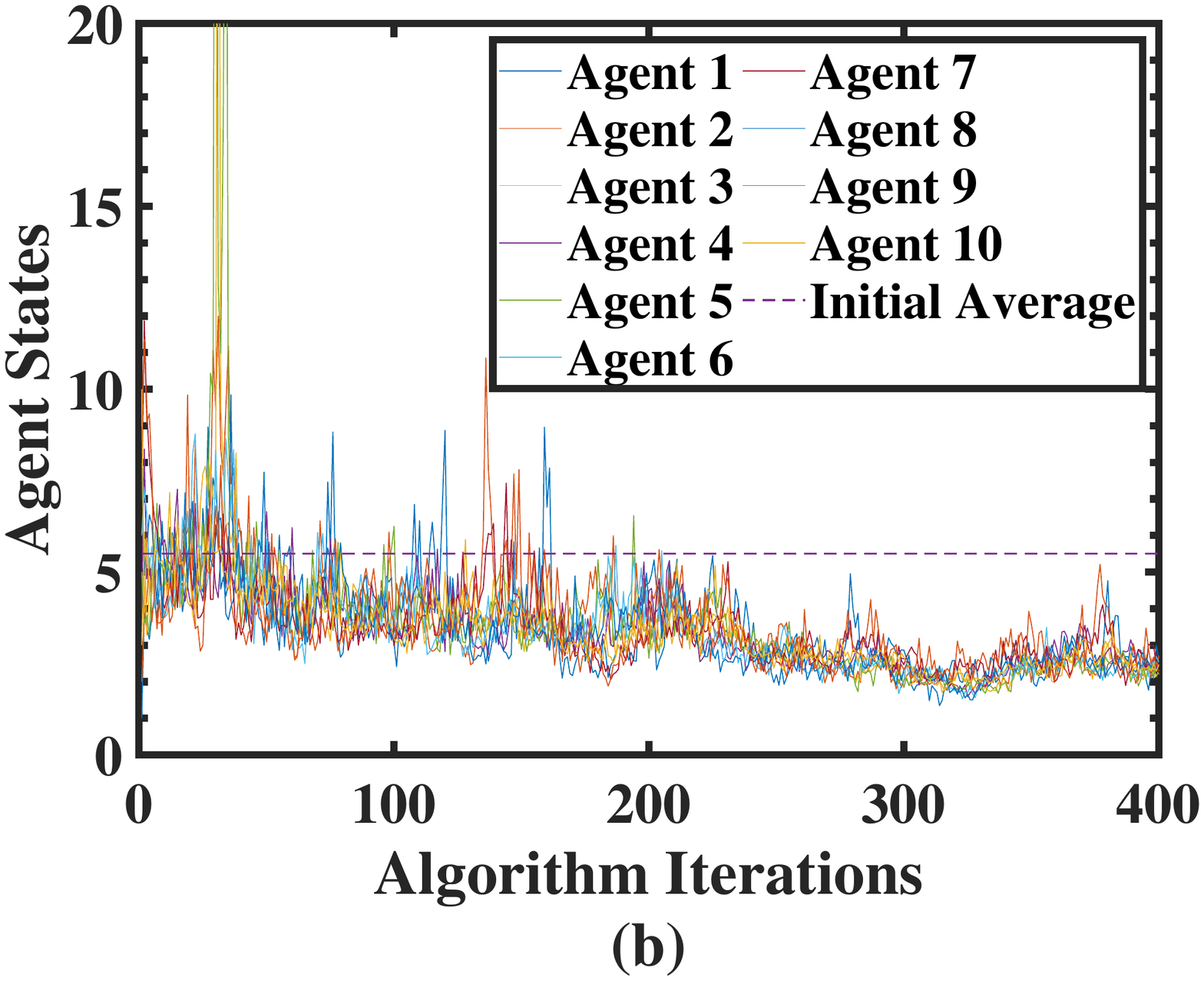}
  \caption{Performance of the PushSum algorithm under the perfect communication (a) and with communication noise (b).}
  \label{fig:ratio} 
\end{figure}
These observations suggest that the PushSum algorithm \textit{does not} perform adequately under communication noise. 
\end{section}

\section{Distributed Average Consensus on noisy directed networks}\label{sec:algorithm_section}

In this section, we propose a distributed algorithm which allows all the agents to determine the initial average $\overline{u}$ (defined in~(\ref{eq:ini_avg})) under noisy communication. We call it the $\nrps$ algorithm as it is based on the PushSum algorithm, similar to which here each agent $i$, at any discrete time-instant $k$, maintains three estimates $x_i(k) \in \mathbb{R}^p, y_i(k) \in \mathbb{R}$ and $z_i(k) \in \mathbb{R}^p$. We utilize two strategies to design the $\nrps$ algorithm that counter the effect of the communication noise: \\
1. The estimates $x_i(k), y_i(k)$ are updated via a local mixing across the neighborhood of the agent $i \in \mathcal{V}$. Every agent $i$, weighs the information sent by its neighbors (agent $j \in \mathcal{N}^{I}_i $) and its own information to appropriately control the amount of the noise to be added to its updated estimates.  \\
2. Each agent $i$, utilizes its initial values $x_i(0)$ and $y_i(0)$ to update its estimates. The initial values $x_i(0)$ and $y_i(0)$ are deterministic and known to the agent $i$. Utilizing the noise-free initial values provide a tether to the agents' estimates in tracking the initial average in presence of the communication noise. 

Incorporating these two strategies together, the estimates in the $\nrps$ algorithm are updated as follows, for $i \in \V$:
\begin{align}
 x_{i}(k+1) & = \alpha_i(k) p_{ii}x_{i}(k) + \beta(k) \textstyle \sum_{j \in \mathcal{N}^{I}_i }(p_{ij} x_{j}(k) \nonumber \\  & \hspace{0.98in} + \eta_{x_{ij}}(k)) + \theta(k) x_i(0), \label{eq:consensus_num}\\
y_{i}(k+1) & = \alpha_i(k) p_{ii}y_{i}(k) + \beta(k) \textstyle \sum_{j \in \mathcal{N}^{I}_i }(p_{ij} y_{j}(k) \nonumber \\  & \hspace{0.98in} + \eta_{y_{ij}}(k)) + \theta(k) y_i(0), \label{eq:consensus_den}
\end{align}
\begin{figure}[!b] 
\centering
    \includegraphics[scale=0.28,trim={1.7cm 0cm 2cm 2.5cm},clip] {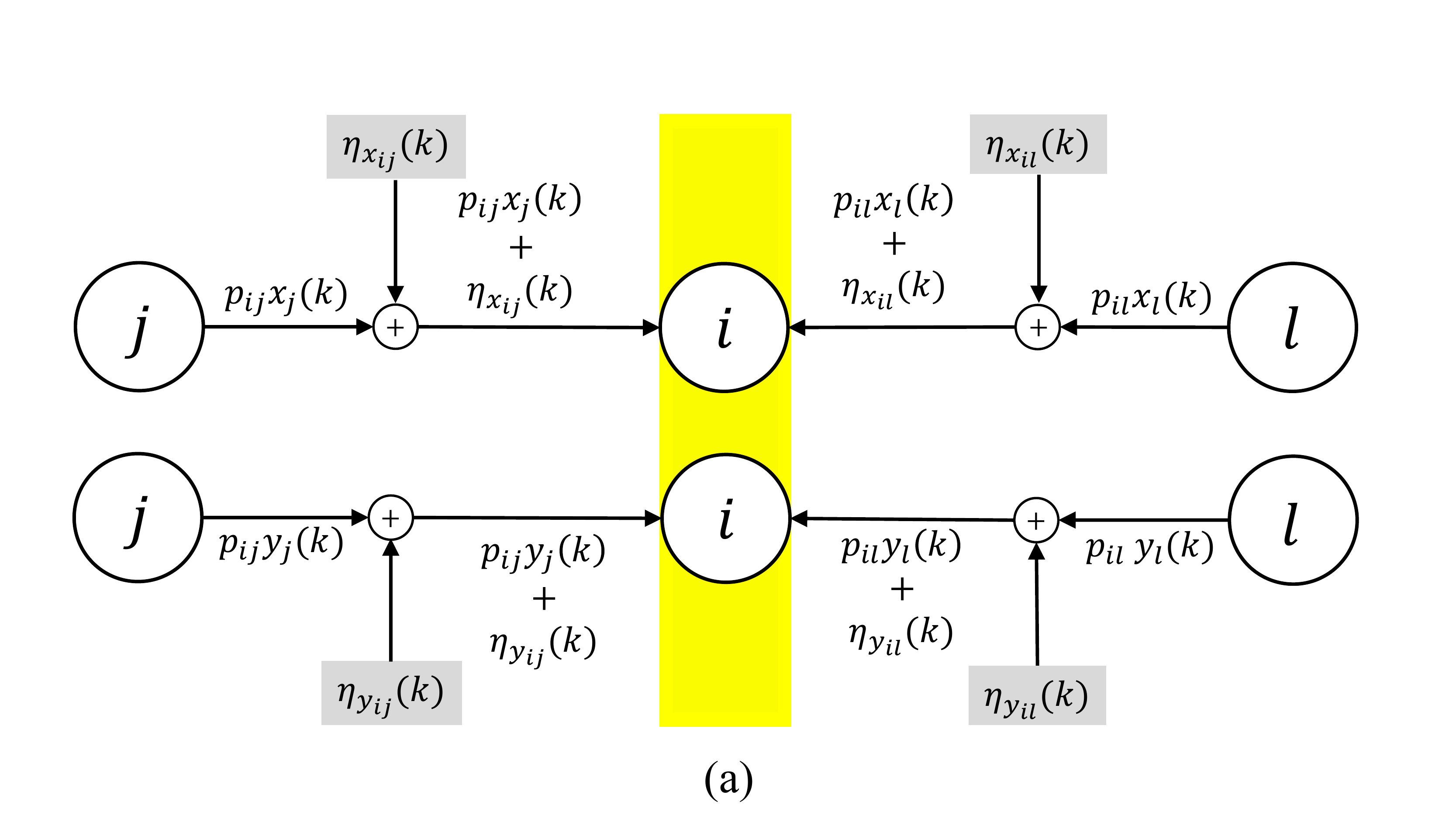}
    \includegraphics[scale=0.28,trim={4.8cm 0cm 1.2cm 0cm},clip] {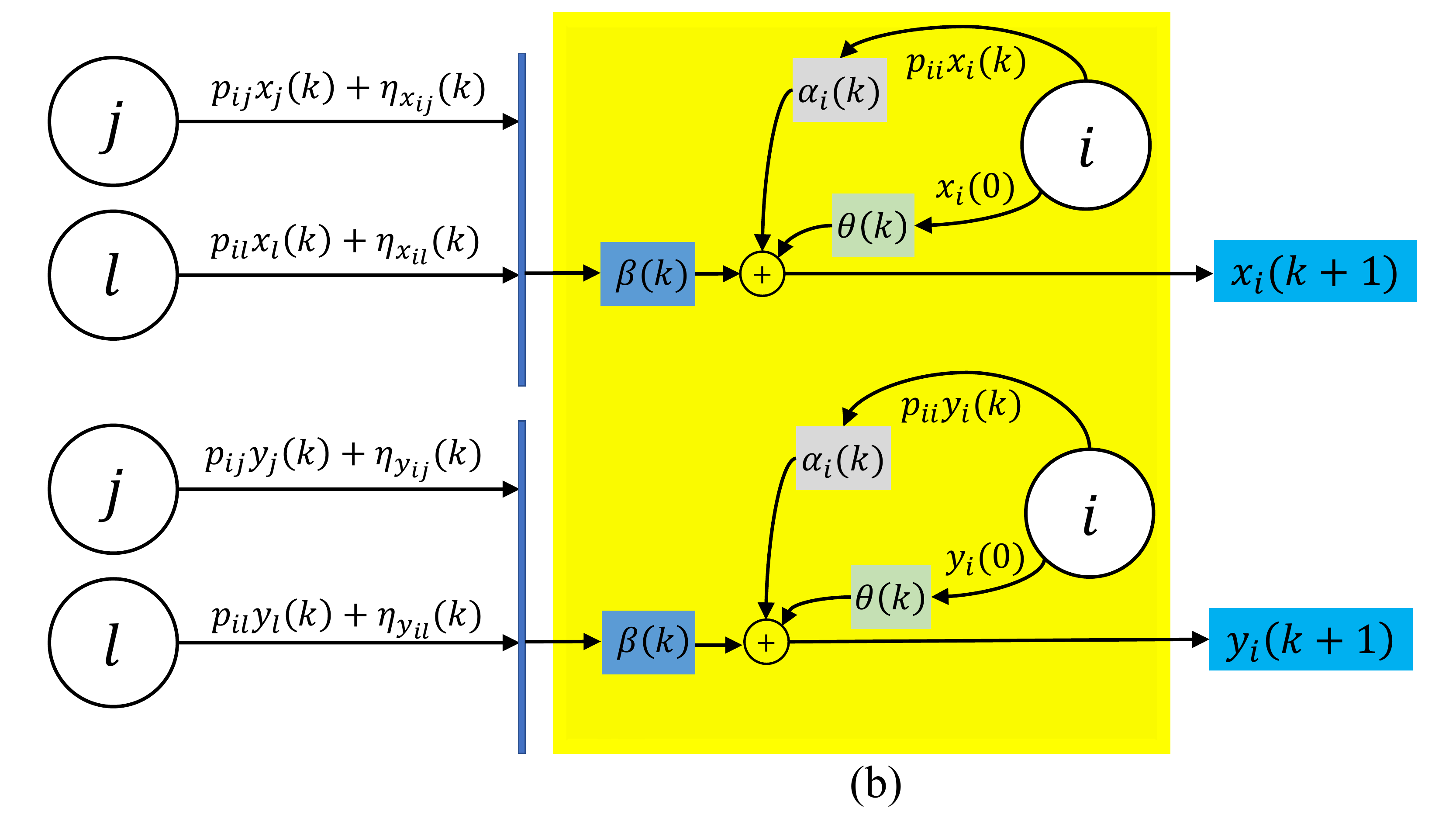}
  \caption{Estimate updates of the $\nrps$ algorithm: (a) agent $i$ receives noisy information from neighbors $j$ and $l$, (b) agent $i$ updates its estimates using the received information, its own current estimate and its initial values.}
  \label{fig:nrps} 
\end{figure}
respectively, where, $ x_i(0) = u_i$, and  $y_i(0) = 1$ for all $i \in [n]$. $\alpha_i (k) \in [0,1], \forall i \in [n], k,$ and $\beta(k) \in [0,1], \forall k$ are the relative weights that agent $i$ utilize to control the information mixing, and $\theta(k) \in \mathbb{R}, \forall k$, is utilized to control the influence of the noise-free initial values on the estimate updates. The variable $z_i(k)$ maintained by agent $i \in \mathcal{V}$ at time-instant $k$ is updated as in the PushSum algorithm described by
\begin{align}
z_i(k+1) &=  \textstyle \frac{1}{y_i(k+1)}x_i(k+1). \label{eq:consensus_ratio}
\end{align}
Fig.~\ref{fig:nrps} shows the estimate updates for an agent $i \in \V$ at any discrete-time instant $k$. Fig.~\ref{fig:nrps}(a) shows the communication path where agent $i$ receives information from its in-neighbors $j$ and $l$. Fig.~\ref{fig:nrps}(b) shows the computation step where agent $i$ updates its estimates based on the information from the neighbors $j$ and $l$, its own current state values $(x_i(k),y_i(k))$ and the noise-free initial values $(x_i(0),y_i(0))$. 
Define $\widehat{\p}$ such that
\begin{align}
  \widehat{\p}_{ij} &= \begin{cases}\label{eq:phat}
    \p_{ij}, & \text{if} \ i \neq j \\
    0, &  \text{if} \ i = j. 
\end{cases}
\end{align}

\noindent Note that due to Assumption~\ref{assp:strg_colmstoc} the matrix $\widehat{\p}$ defined in~(\ref{eq:phat}) is a column sub-stochastic matrix, i.e., $\sum_{i=1}^n \widehat{\p}_{ij} < 1 $ for all $j \in [n]$. The parameter $\alpha_i(k), i \in [n]$ is set as:
\begin{align}\label{eq:alpha}
    \alpha_i(k) = \textstyle \frac{1 - \beta(k)(1 - p_{ii})}{p_{ii}}.
\end{align}

\noindent Let $\A(k) := (\textbf{diag}(\alpha_i(k)) \odot \p)$ and $\B(k) := \beta(k) \widehat{\p}$. Using the defined notations in~(\ref{eq:notation_1})-(\ref{eq:notation_5}), updates~(\ref{eq:consensus_num})-(\ref{eq:consensus_ratio}) can be written as:
\begin{align}
\x(k+1) & =  \A(k) \x(k) + \B(k) \x(k) + \theta(k) \x(0) \nonumber \\ & \hspace{1.5in} + \beta(k) \eta_{x}(k), \label{eq:compact_num}\\
\y(k+1) & = \A(k) \y(k) + \B(k) \y(k) + \theta(k) \y(0) \nonumber \\
& \hspace{1.5in} + \beta(k) \eta_{y}(k), \label{eq:compact_den} \\
\z(k+1) & = \x(k+1) \oslash \y(k+1). \label{eq:compact_ratio}
\end{align}

\noindent The initial values of the agents in updates~(\ref{eq:consensus_num})-(\ref{eq:consensus_ratio}) help preserve the true information of the agents and reduce the effect of the communication noise. Moreover, the agents vary the weights $\beta(k)$ and $\theta(k)$  dynamically that allows the agents' estimates $x_i(k)$ and $y_i(k)$ reach a steady state. Specifically, the weights $\beta(k)$ and $\theta(k)$ satisfy the following assumptions:
\begin{assump}\label{assp:beta}
The sequence $\{\beta(k)\}_{k\geq 0}$ is such that:
\begin{itemize}
    \item[i)]  $\beta(k) \in [0, 1)$, for all $k$,
    \item[ii)] there exists a $K_\beta \geq 1$ such that
      \begin{align}\label{eq:beta}
          a k^{-q} \leq \beta(k) \leq b k^{-q},
      \end{align}
     for all $k \geq K_\beta$, where $q > 1$ and $0 \leq a \leq b < \infty$. 
\end{itemize}
\end{assump}
\noindent Note that the requirement~(\ref{eq:beta}) with a suitable $K_\beta$, allows more flexibility in choosing $\beta(k)$ as larger values of $a$ and $b$ can be chosen while maintaining $\beta(k) \in [0,1]$, $k \geq K_\beta$. In the sequel, the parameters $a,b$ and $q$ will be treated as fixed constants.

\begin{assump}\label{assp:theta}
The sequence $\{\theta(k)\}_{k \geq 0}$ satisfies:
\begin{align}\label{eq:theta}
 \theta(k) \in [0, \infty), \ \text{for all} \ k, \ \text{and} \ \sum_{k = 0}^{\infty} \theta(k) < \infty.
\end{align}
\end{assump}
\noindent Note that one can appropriately choose a $K_\theta > 0$ with large values of $\theta(k)$ chosen early on, i.e. when $k < K_\theta$, to give more importance to the initial values of the agents and then $\theta(k)$ be gradually reduced, i.e. $\theta(k) \in [0,1]$ for $k > K_{\theta}$ to satisfy Assumption~\ref{assp:theta}. In section~\ref{sec:sim_results}, we demonstrate the performance of the proposed $\nrps$ algorithm under different choices of the $\beta(k)$ and $\theta(k)$ sequences.

\begin{remark}
The proposed $\nrps$ algorithm inherits nice properties of the PushSum algorithm, i.e. the $\nrps$ algorithm is applicable on general directed graph topologies and can be both implemented and synthesized/designed distributively. Recall Fig.~\ref{fig:nrps}(a), every agent $j \in \V$ decide the weights $p_{ij}$ for sending its information over the out-going links to the neighbor agent $i$, independent of the other agents. The agent $i \in \V$ upon receiving the information from its in-neighbors adds up its own estimate and the initial value with the received information after weighing them by weights $\alpha_i(k), \theta(k)$ and $\beta(k)$ respectively. Note that the agents does not need to have any centralized information for weights $\beta(k)$ and $\theta(k)$. Each agent can dynamically update the weights $\beta(k)$ and $\theta(k)$ based on assumptions~\ref{assp:beta} and~\ref{assp:theta}.
\end{remark}

\begin{section}{Convergence Analysis}\label{sec:convgAnalysis}
We start this section by analyzing the convergence of the consensus protocol~(\ref{eq:compact_num})-(\ref{eq:compact_ratio}) in the case where the communication links are noiseless. Then, we extend our analysis to investigate the \textit{almost sure} convergence of the proposed $\nrps$ algorithm to a consensus value under the additive noise in the communication channels.

\subsection{Analysis under noiseless communication channels}
Here, we will analyze a special case of the scheme~(\ref{eq:consensus_num})-(\ref{eq:consensus_ratio}) where the communication channels are assumed to be noiseless. In particular, $\eta_{x_{ij}}(k) = 0 = \eta_{y_{ij}}(k)$ for all $i, j \in [n]$ at all time-instants $k \geq 0$.

\begin{lemma}\label{lem:ratio_well_def}
Let Assumption~\ref{assp:strg_colmstoc} hold. Assume that the communication channels in $\G(\V, \E)$ are \textbf{noiseless}. Let the sequences $\{\beta(k)\}_{k \geq 0}$ and $\{\theta(k)\}_{k \geq 0}$ satisfy assumptions~\ref{assp:beta} and~\ref{assp:theta} respectively. Under the initialization $y_i(0) = 1$ for all $i \in [n]$, there exist $y_o > 0$ such that we have $y_i(k) > y_o$ for all $k \geq 1, i \in [n]$. As a consequence the states $z_i(k)$ are well-defined for all $k \geq 1, i \in [n]$.
\end{lemma}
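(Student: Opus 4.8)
The plan is to produce a deterministic, time‑uniform positive lower bound on each $y_i(k)$ by keeping only the ``self'' term of the update~(\ref{eq:consensus_den}) and controlling the resulting infinite product. The first observation is that, because every $p_{ii}\in(0,1]$ by~(\ref{eq:pmat}) and $\beta(k)\in[0,1)$ by Assumption~\ref{assp:beta}(i), the definition~(\ref{eq:alpha}) gives the self‑weight in closed form,
\[
  \alpha_i(k)\,p_{ii} \;=\; 1-\beta(k)\bigl(1-p_{ii}\bigr)\;\in\;(0,1],
\]
where positivity follows from $\beta(k)(1-p_{ii})\le\beta(k)<1$. In the noiseless case~(\ref{eq:consensus_den}) reads $y_i(k+1)=\alpha_i(k)p_{ii}y_i(k)+\beta(k)\sum_{j\in\mathcal{N}^{I}_i}p_{ij}y_j(k)+\theta(k)y_i(0)$, and all three summands are nonnegative whenever the $y_j(k)$ are (using $\theta(k)\ge0$ from Assumption~\ref{assp:theta} and $y_i(0)=1$). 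Hence an induction on $k$, with base case $y_i(0)=1$, simultaneously yields $y_i(k)\ge0$ for all $i$ and
\[
  y_i(k)\;\ge\;\prod_{\ell=0}^{k-1}\bigl(1-\beta(\ell)(1-p_{ii})\bigr)\qquad\text{for all }k\ge0,\ i\in[n].
\]

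Next I would show the infinite product $P_i:=\prod_{\ell=0}^{\infty}\bigl(1-\beta(\ell)(1-p_{ii})\bigr)$ is strictly positive. The crucial point is that Assumption~\ref{assp:beta}(ii) with exponent $q>1$ forces $\sum_{k\ge0}\beta(k)<\infty$: indeed $\sum_{k\ge K_\beta}\beta(k)\le b\sum_{k\ge K_\beta}k^{-q}<\infty$, and the remaining finitely many terms are each in $[0,1)$. Consequently $\sum_{\ell}\beta(\ell)(1-p_{ii})\le(1-p_{ii})\sum_{\ell}\beta(\ell)<\infty$; since every factor lies in $(0,1]$, the standard criterion for convergence of infinite products gives $P_i>0$ (alternatively, $\beta(\ell)\to0$ so for $\ell$ large $\beta(\ell)(1-p_{ii})\le\tfrac12$, where $1-x\ge e^{-2x}$, making the tail product bounded below by $e^{-2(1-p_{ii})\sum_\ell\beta(\ell)}>0$, and the finitely many initial factors are positive). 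The partial products are nonincreasing in $k$, so $y_i(k)\ge P_i$ for every $k\ge0$. Taking $y_o:=\tfrac12\min_{i\in[n]}P_i>0$, a minimum over a finite set of positive numbers, we obtain $y_i(k)\ge\min_{i}P_i>y_o$ for all $k\ge1$ and $i\in[n]$; in particular $y_i(k)\neq0$, so the ratio $z_i(k)=x_i(k)/y_i(k)$ in~(\ref{eq:consensus_ratio}) is well defined for all $k\ge1$.

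The only mildly delicate step is establishing $P_i>0$, which is precisely where the summability $\sum_k\beta(k)<\infty$ — and hence the hypothesis $q>1$ in Assumption~\ref{assp:beta} — enters; the remainder is a monotone induction and bookkeeping. It is worth noting that, beyond $p_{ii}>0$ (guaranteed by~(\ref{eq:pmat})) and $\beta(k)<1$, this statement uses neither strong connectivity nor column stochasticity of $\p$; those hypotheses will be required for the convergence results that follow, not for the well‑definedness of the $z_i(k)$.
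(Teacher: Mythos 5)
Your proof is correct, but it takes a genuinely different and substantially more elementary route than the paper's. You discard everything except the self-loop term of the update, note that $\alpha_i(k)p_{ii}=1-\beta(k)(1-p_{ii})\in(0,1]$, and reduce the lemma to the positivity of the infinite product $\prod_{\ell}\bigl(1-\beta(\ell)(1-p_{ii})\bigr)$, which follows from the summability $\sum_k\beta(k)<\infty$ guaranteed by the exponent $q>1$ in Assumption~\ref{assp:beta}(ii). The paper instead works with the full matrix products $W(l,k)=\widehat{\prod}_{r=l}^{k}\M(r)$: it first proves by induction the row-sum bound $\sum_j W_{ij}(l,k)>w_o^{k-l+1}$ with $w_o=1-\max_l\beta(l)$ (which decays in $k$ and so only covers finitely many iterations), and then invokes Wolfowitz's theorem on products of cSIA matrices to write $W(0,k)=v(k)\mathbf{1}^\top+e(k)$ and obtain a uniform lower bound $n\min_i v_i(\ell_\varepsilon)-\varepsilon n/2$ for large $k$. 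Your argument avoids the Wolfowitz machinery entirely, yields an explicit and computable $y_o$, and — as you correctly observe — does not actually need strong connectivity or column stochasticity, so it proves the statement under weaker hypotheses. What the paper's heavier route buys is reuse: the decomposition $W(l,k)=v(k)\mathbf{1}^\top+e(k)$ and the row-sum claim~(\ref{eq:y_sum_bound}) are the workhorses of Theorem~\ref{thm:withoutnoise}, Theorem~\ref{thm:finite_time_estimate_mismatch} and Lemma~\ref{lem: y_well_def_in_noise}, so establishing them here amortizes the effort; your product bound, by contrast, is specific to this lemma. One small remark: since each $P_i>0$ already serves as a valid lower bound, the factor $\tfrac12$ in your definition of $y_o$ is only there to turn $\geq$ into the strict $>$ demanded by the statement, which is fine.
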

\begin{proof}
For $\eta_y(k) = 0$, the update $\y(k)$ in the iteration~(\ref{eq:compact_den}) becomes:
\begin{align}\label{eq:proof_y_update}
    \y(k+1) = \underbrace{(\A(k) + \B(k))}_{:= \M(k)}  \y(k)  +  \theta(k) \y(0).
\end{align}

\noindent Define the following operation, 
\begin{align}\label{eq:hatprod}
    \widehat{\prod}_{r = s}^l \M(r) := \begin{cases}
    \prod_{r=s}^l \M(r), \ \text{if} \ l \geq s \\
    \I_{n}, \ \text{otherwise}.
    \end{cases}
\end{align}
Recursive application of~(\ref{eq:proof_y_update}) using the notation in~(\ref{eq:hatprod}) gives,
\begin{align} \label{eq:y_equation_in_closed_form}
\y(k+1) &= \displaystyle \widehat{\prod}_{r=0}^k \M(r) \y(0) + \sum_{s=0}^{k} \widehat{\prod}_{r = s+1}^k \M(r) \theta(s) \y(0).
\end{align}

\noindent Further define, \begin{align}\label{eq:Wk}
    W(l,k) := \textstyle \widehat{\prod}_{r=l}^k \M(r). 
\end{align}
We begin by establishing that the following relation holds.

\textit{Claim}: $\sum_{j=1}^n W_{ij}(l,k) > w_o^{k-l+1}$, for all $k \geq l \geq 1$, where, $w_o > 0$ is a constant.\\
\textit{Proof}: 
For $\M(l), l \geq 1$ note that,
\begin{align}\label{eq:bound_on_m_row_sum}
    \textstyle \sum_{j=1}^n \M_{ij}(l) &= 1 - \beta(l)(1-p_{ii}) + \beta(l) \textstyle \sum_{j \neq i}^n p_{ij} \nonumber \\
    & = \textstyle 1 - \beta(l) + \beta(l) \sum_{j=1}^n p_{ij} \nonumber \\
    & > 1 - \beta(l) > 1 - \max_{l} \beta(l) = w_o > 0.
\end{align}
We use an induction based argument for the matrices of the form $W(l, l+t)$. Let $T = k-l$. For $t = 1$, note that $W(l,l+1) = \M(l) \M(l+1)$ and the $i$-$j^{th}$ entry of the matrix $W(l,l+1)$ is given by $W_{ij}(i,+1) = [\M(l)\M(l+1)]_{ij} = \sum_{s=1}^n \M_{is}(l) \M_{sj}(l+1)$. Taking the sum over $j$ we have,
\begin{align*}
    \textstyle \sum_{j=1}^n W_{ij}(l,l+1) &= \textstyle \sum_{j=1}^n \sum_{s=1}^n \M_{is}(l) \M_{sj}(l+1) \\
    & = \textstyle \sum_{s=1}^n \left[\M_{is}(l) \sum_{j=1}^n \M_{sj}(l+1)\right]\\
    & > \textstyle \sum_{s=1}^n \M_{is}(l) w_o \\
    & > w_o^2,
\end{align*}
where, we used~(\ref{eq:bound_on_m_row_sum}) in the last two inequalities. Assume, for $t = T = k-l$, that $\sum_{j=1}^n W_{ij}(l,l+T) = \sum_{j=1}^n W_{ij}(l,k) > w_o^{k-l+1}$. For $t = T+1$, 
\begin{align*}
     \sum_{j=1}^n W_{ij}(l,l+T+1) &=  \sum_{j=1}^n \sum_{s=1}^n W_{is}(l,l+T) \M_{sj}(l+T+1) \\
    & \hspace{-0.45in} = \textstyle \sum_{s=1}^n \left[W_{is}(l,l+T) \sum_{j=1}^n \M_{sj}(l+T+1)\right]\\
    & \hspace{-0.45in} > \textstyle \sum_{s=1}^n W_{is}(l,l+T) w_o \\
    & \hspace{-0.45in} > w_o^{k-l+1}w_o = w_o^{k-l+2},
\end{align*}
where, we used~(\ref{eq:bound_on_m_row_sum}) and the induction hypothesis in the last inequality. Therefore, induction holds. Thus, 
\begin{align}\label{eq:y_sum_bound}
   \textstyle \sum_{j=1}^n W_{ij}(l,k) > w_o^{k-l+1},
\end{align}
for all $k \geq l \geq 1$. \qed 

From the definition of matrices $\A(k)$ and $\B(k)$ under Assumption~\ref{assp:strg_colmstoc} it is easy to check that the matrix $\M(k)$ is column-stochastic for all $k \geq 0$ (see Appendix~\ref{sec:W_k_cSIA}). Further, note that $W(l,k)$ for all $k, l \geq 0$ is a column-Stochastic, Irreducible and Aperiodic (cSIA) matrix (see Appendix~\ref{sec:W_k_cSIA}). Column stochasticity follows by induction and the column stochasticity of $\M(k)$. From the expression for $W(l,k)$ in~(\ref{eq:Wk}), the zero/non-zero structure of $W(l,k)$ corresponds to a graph of $n$ nodes that includes all the edges in $\G(\V,\E)$ and since the original graph $\G(\V,\E)$ is strongly connected, $W(l,k)$ corresponds to a graph that is strongly connected and hence is irreducible \cite{brualdi1991combinatorial}. Since the graph corresponding to $W(l,k)$ is strongly connected and the diagonal entries of $W(l,k)$ are positive, therefore, $W(l,k)$ is aperiodic. 
We utilize the following result from \cite{Wolfowitz} on the product of column stochastic matrices:
\begin{theorem}\label{thm:wolfowitz}
Let $\mathcal{M}: = \{\M(1), \M(2), \dots\}$ be a collection of $n \times n$ column stochastic matrices. Define a word in $\mathcal{M}$ of length $\ell$ as the product of $\ell$ matrices from the collection $\mathcal{M}$ (repetitions permitted). Let $\mathcal{M}$ be such that any word in $\mathcal{M}$ is cSIA. Then given $\varepsilon > 0$ there exists an
integer $\ell_\varepsilon$ such that any word $\mathbf{N} \in \mathbb{R}^{n \times n}$ in $\mathcal{M}$ of length $\ell \geq \ell_\varepsilon$ satisfies $\max_j \max_{i_1,i_2} |\mathbf{N}_{j,i_1} - \mathbf{N}_{j,i_2}| < \varepsilon$.
\end{theorem}
Theorem~\ref{thm:wolfowitz} implies that any sufficiently long word in the $\mathcal{M}$
has all its columns such that the difference in the columns element-wise can be made arbitrarily small. Thus, given $\varepsilon > 0$, there exists a $\ell_\varepsilon > 0$ such that $W(l,k)$ (a word in $\mathcal{M}$ that is cSIA) is equal to  $v(k)\mathbf{1}^\top + e(k)$, for all $k > l + \ell_\varepsilon$, where $v(k)$ is an appropriate vector with positive entries, and
$e(k)$ is an error matrix such that the absolute value of its elements is smaller than $\varepsilon/2$ i.e., $|e_{ij}(k)| < \varepsilon/2$ for all $ i,j \in [n]$. 

Further, recall $\M(k) := \A(k) + \B(k)$. Based on the definition of $\A(k)$ and $\B(k)$, we can re-write $\M(k)$ as
\begin{align}\label{eq:Mk_redefine}
    \M(k):= \I_n + \beta(k) \tilde{\p},
\end{align}
where, $\tilde{\p} = \textbf{diag}(p_{ii}-1) + \widehat{\p}$. 

Given any $\varepsilon > 0$ for $k > \ell_\varepsilon$, note that $W(0,k) = v(k)\mathbf{1}^\top + e(k)$, with $|e_{ij}(k)| < \varepsilon/2$ for all $i,j$.
\begin{align}\label{eq:Wk_for_proof}
    W(0,k) &= \textstyle W(0,\ell_\varepsilon)\prod_{r=\ell_\varepsilon + 1}^k \M(r) \nonumber \\
    & = \textstyle [v(\ell_\varepsilon)\mathbf{1}^\top + e(\ell_\varepsilon)] \prod_{r=\ell_\varepsilon + 1}^k \M(r) \nonumber \\
    & = \textstyle [v(\ell_\varepsilon)\mathbf{1}^\top + e(\ell_\varepsilon)] \prod_{r=\ell_\varepsilon + 1}^k (\I_n + \beta(r) \tilde{\p}) \nonumber \\
    & = \textstyle v(\ell_\varepsilon)\mathbf{1}^\top \prod_{r=\ell_\varepsilon + 1}^k (\I_n + \beta(r) \tilde{\p}) \nonumber \\
    & \textstyle \hspace{0.5in} 
    + e(\ell_\varepsilon) \prod_{r=\ell_\varepsilon + 1}^k (\I_n + \beta(r) \tilde{\p}) \nonumber \\
    & = \textstyle v(\ell_\varepsilon)\mathbf{1}^\top + e(\ell_\varepsilon) W(\ell_\varepsilon + 1, k),
\end{align}
where, we used the fact that $v(\ell_\varepsilon)\mathbf{1}^\top \tilde{\p} = \mathbf{0}_{n\times n}$.

From~(\ref{eq:y_equation_in_closed_form}),~(\ref{eq:Wk})~(\ref{eq:Mk_redefine}),~(\ref{eq:Wk_for_proof}), and initialization $y_j(0) = 1$ for all $j$ we have, 
\begin{align*}
 y_i(k+1) &= \textstyle \sum_{j=1}^n W_{ij}(0,k) + \sum_{j=1}^n \sum_{s=0}^k \theta(s) W_{ij}(s+1,k) \\
 & > \textstyle  \sum_{j=1}^n W_{ij}(0,k) \\
 & = \textstyle  \sum_{j=1}^n \big[v(\ell_\varepsilon)\mathbf{1}^\top + e(\ell_\varepsilon) W(\ell_\varepsilon + 1, k) \big]_{ij} \\
 & > \textstyle  \sum_{j=1}^n \big[v(\ell_\varepsilon)\mathbf{1}^\top - \frac{\varepsilon}{2} \mathbf{1}\mathbf{1}^\top W(\ell_\varepsilon + 1, k) \big]_{ij} \\
 & > \textstyle  \sum_{j=1}^n \big[v(\ell_\varepsilon)\mathbf{1}^\top - \frac{\varepsilon}{2} \mathbf{1}\mathbf{1}^\top \big]_{ij} \\
 & = \textstyle  n v_i(\ell_\varepsilon) - n\frac{\varepsilon}{2} \\
 & >   n \min_{1 \leq i \leq n} v_i(\ell_\varepsilon) - \textstyle \varepsilon\frac{n}{2} > 0, 
\end{align*}
where, we used the fact that $W(l,k)$ is column stochastic for all $k,l \geq 0$ in the third last inequality.
For $k \leq \ell_\varepsilon$, from~(\ref{eq:y_sum_bound}) we have
\begin{align*}
 y_i(k) & > \textstyle w_o^{k-1} + \sum_{s=0}^{k-1} \theta(s) w_o^{k-s-1}\\
 & > w_o^{k-1} + \theta(0) w_o^{k-1} \geq (1 + \theta(0)) w_o^{\ell_\varepsilon} > 0.
\end{align*}
Therefore, $y_i(k) > y_o : =\min \{n \min\limits_{1 \leq i \leq n} v_i(\ell_\varepsilon), (1 + \theta(0)) w_o^{\ell_\varepsilon}\}$, for all $k \geq 1$. Thus, $z_i(k) = \frac{1}{y_i(k)}x_i(k)$ is well defined for all $i \in [n], k \geq 0$.
\end{proof}

\begin{theorem}\label{thm:withoutnoise}
Let Assumption~\ref{assp:strg_colmstoc} hold. Assume the edges in $\G (\V, \E)$ are \textbf{not} corrupted by additive noise in communication channels. Let $\{z_i(k)\}_{k \geq 0}^{i \in [n]}$ be the sequences generated by the iteration~(\ref{eq:consensus_ratio}). Let the sequence $\{\theta(k)\}_{k \geq 0}$ satisfy Assumption~\ref{assp:theta}. Then, the estimates $z_i(k)$ converges to the average $\overline{u}$, i.e., $\lim_{k \rightarrow \infty} z_i(k) = \frac{1}{n} \sum_{j=1}^n u_j$, for all $i \in [n]$.
\end{theorem}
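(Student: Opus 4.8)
The plan is to reduce the statement to showing that the residual vector
\[
\mathbf{r}(k) \;:=\; \x(k) - \overline{u}\,\y(k)
\]
tends to zero, and then to divide through by the denominators $y_i(k)$, which are uniformly positive by Lemma~\ref{lem:ratio_well_def}. Since $z_i(k) = x_i(k)/y_i(k)$, we have $z_i(k) - \overline{u} = r_i(k)/y_i(k)$, and because $y_i(k) > y_o > 0$ for all $k \ge 1$ by Lemma~\ref{lem:ratio_well_def}, it is enough to prove $\mathbf{r}(k) \to \mathbf{0}$. It suffices to argue one coordinate of the state at a time, so throughout we may take $p=1$ and regard $\x(k),\y(k),\mathbf{r}(k)\in\mathbb{R}^n$ and $\overline{u}\in\mathbb{R}$.

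Next I would write down the recursion for $\mathbf{r}(k)$. Subtracting $\overline{u}$ times~(\ref{eq:compact_den}) from~(\ref{eq:compact_num}) with $\eta_x(k)=\eta_y(k)=\mathbf{0}$, and using $\x(0)=\uu$, $\y(0)=\mathbf{1}$, gives the affine recursion
\begin{align*}
\mathbf{r}(k+1) = \M(k)\,\mathbf{r}(k) + \theta(k)\,\mathbf{g}, \qquad \mathbf{g} := \uu - \overline{u}\,\mathbf{1}, \quad \mathbf{r}(0) = \mathbf{g},
\end{align*}
where $\M(k) = \A(k)+\B(k)$ as in the proof of Lemma~\ref{lem:ratio_well_def}. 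Two structural facts drive the rest: first, $\M(k)$ is column stochastic for every $k$ (established in the proof of Lemma~\ref{lem:ratio_well_def}), hence $\mathbf{1}^\top \M(k) = \mathbf{1}^\top$; second, $\mathbf{1}^\top \mathbf{g} = \sum_i u_i - n\overline{u} = 0$. Together, by induction, these give $\mathbf{1}^\top \mathbf{r}(k) = 0$ for all $k$: the residual always lies in the consensus-orthogonal subspace. Unrolling the recursion with the transition matrices $W(l,k)$ of~(\ref{eq:Wk}),
\begin{align*}
\mathbf{r}(k) = W(0,k-1)\,\mathbf{g} \;+\; \sum_{s=0}^{k-1}\theta(s)\,W(s+1,k-1)\,\mathbf{g}.
\end{align*}

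The core of the proof is a contraction bound for the products $W(l,m)$ acting on $\mathbf{g}$. Reusing the argument in the proof of Lemma~\ref{lem:ratio_well_def}, every word in $\mathcal{M}=\{\M(1),\M(2),\dots\}$ is column stochastic, irreducible and aperiodic, so Theorem~\ref{thm:wolfowitz} yields, for each $\varepsilon>0$, an integer $\ell_\varepsilon$ such that $W(l,m) = v\,\mathbf{1}^\top + e$ with $\max_{ij}|e_{ij}|<\varepsilon$ whenever $m-l \ge \ell_\varepsilon$; since $\mathbf{1}^\top\mathbf{g}=0$ this gives $\|W(l,m)\mathbf{g}\|_\infty \le \varepsilon\|\mathbf{g}\|_1$ for all such $l,m$, while for the remaining short products one uses only the trivial bound $\|W(l,m)\mathbf{g}\|_\infty\le\|W(l,m)\mathbf{g}\|_1\le\|\mathbf{g}\|_1$, valid because $W(l,m)$ is nonnegative and column stochastic. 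Fixing $\varepsilon>0$ and splitting the range of $s$ in the display above at $s=k-1-\ell_\varepsilon$: the head ($s\le k-1-\ell_\varepsilon$, for which $W(s+1,k-1)$ has length at least $\ell_\varepsilon$) contributes at most $\varepsilon\|\mathbf{g}\|_1\sum_{s\ge0}\theta(s)$, which is $O(\varepsilon)$ by Assumption~\ref{assp:theta}; the tail ($k-\ell_\varepsilon\le s\le k-1$, only $\ell_\varepsilon$ terms) contributes at most $\|\mathbf{g}\|_1\sum_{s\ge k-\ell_\varepsilon}\theta(s)\to0$ as $k\to\infty$ since $\ell_\varepsilon$ is fixed and $\sum_s\theta(s)<\infty$; and the leading term $W(0,k-1)\mathbf{g}$ is at most $\varepsilon\|\mathbf{g}\|_1$ once $k\ge\ell_\varepsilon$. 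Hence $\limsup_{k\to\infty}\|\mathbf{r}(k)\|_\infty \le C\varepsilon$ with $C$ independent of $\varepsilon$, so $\mathbf{r}(k)\to\mathbf{0}$, and therefore $z_i(k) = \overline{u} + r_i(k)/y_i(k) \to \overline{u}$ for every $i$.

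I expect the main obstacle to be exactly this contraction step: obtaining the rank-one-plus-small-error decomposition of $W(l,m)$ with $\ell_\varepsilon$ uniform in the start index $l$, so that a single choice of $\ell_\varepsilon$ serves $W(0,k-1)$ and all the $W(s+1,k-1)$ simultaneously, and then correctly trading off the two sources of smallness in the forcing sum --- the ``old'' summands, small because the corresponding word is long and hence near rank one, against the ``recent'' summands, small only by summability of the $\theta(s)$. By contrast the reduction itself is clean: conservation of $\mathbf{1}^\top\mathbf{r}(k)$ follows from column stochasticity of $\M(k)$, and the uniform lower bound $y_i(k)>y_o$ is already supplied by Lemma~\ref{lem:ratio_well_def}.
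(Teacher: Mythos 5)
Your proposal is correct, and it runs on the same engine as the paper's proof --- unroll the noiseless recursions into $W(0,k)\x(0)+\sum_{s}\theta(s)W(s+1,k)\x(0)$, invoke Theorem~\ref{thm:wolfowitz} to write long words $W(l,m)$ as $v\mathbf{1}^\top$ plus an entrywise-$\varepsilon$ error, and split the $\theta$-forcing sum into old terms (long words, hence near rank one) and recent terms (controlled by summability of $\theta$) --- but it assembles the conclusion differently. The paper sandwiches $\x(k+1)$ and $\y(k+1)$ separately between $\chi_1(k)\x(0)\pm\varepsilon\mathbf{1}\mathbf{1}^\top$ and $\chi_1(k)\y(0)\pm\varepsilon\mathbf{1}$ and then divides the bounds, which forces the somewhat delicate ratio algebra at the end of its argument (tracking $v_i(k)$ and $\overline{\theta}$ through the quotient). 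You instead work with the residual $\mathbf{r}(k)=\x(k)-\overline{u}\,\y(k)$, note that it obeys the same affine recursion with a forcing vector $\mathbf{g}$ annihilated by $\mathbf{1}^\top$ so that $W(l,m)\mathbf{g}=e\,\mathbf{g}=O(\varepsilon)$ for long words, and divide once at the end by $y_i(k)>y_o$ from Lemma~\ref{lem:ratio_well_def}. This is a cleaner reduction: the zero-column-sum structure of $\mathbf{g}$ does the work that the paper's cancellations and term-by-term ratio manipulations do, and your moving split point $s=k-1-\ell_\varepsilon$ sidesteps the fact that the paper's bound on $\psi_1(k)$ implicitly requires $k$ to exceed roughly $2\tilde{k}$ rather than $\tilde{k}$ (harmless for a limit statement, but glossed over). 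The one obstacle you flag --- uniformity of $\ell_\varepsilon$ in the start index --- is already built into the statement of Theorem~\ref{thm:wolfowitz}, which bounds every word of length at least $\ell_\varepsilon$ regardless of which factors compose it; the cSIA property of every word is verified in Appendix~\ref{sec:W_k_cSIA}, exactly as used in Lemma~\ref{lem:ratio_well_def}.
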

\begin{proof}
Similar to~(\ref{eq:proof_y_update}) the update $\x(k)$ in the iteration~(\ref{eq:compact_num}) when $\eta_x(k) = 0$ becomes:
\begin{align}\label{eq:proof_x_update}
    \x(k+1) = \M(k)  \x(k)  +  \theta(k) \x(0).
\end{align}
Recursive application of~(\ref{eq:proof_x_update}) with the notation~(\ref{eq:Wk}) gives,
\begin{align} \label{eq:equations_in_closed_form}
\x(k+1) = \underbrace{W(0,k) \x(0)}_{\text{first}} + \underbrace{\sum_{s=0}^{k} \theta(s) W(s+1,k) \x(0)}_{\text{second}}.
\end{align}
Note that as $W(l,k)$ is column stochastic for all $k,l \geq 1$, $\|W(l,k)\| \leq \overline{w} < \infty$ for all $k,l$. 

Under Assumption~\ref{assp:theta} and from Theorem~\ref{thm:wolfowitz}, given $\varepsilon > 0$. Define $\hat{\varepsilon} := \frac{\varepsilon}{\max \{ \|\x(0)\|,n \}}$. There exists $\tilde{k}$ and $v(k) \in \mathbb{R}^n$ such that $\sum_{s = \tilde{k}}^\infty \theta(s) \leq \frac{\hat{\varepsilon}}{2\overline{w}}$ and $W(l,k) = v(k)\mathbf{1}^\top + e(k)$, for all $k > \tilde{k} + l$, where, $|e_{ij}(k)| \leq \frac{\hat{\varepsilon}}{2}$ for all $i,j \in [n]$.

Thus, for all $k > \tilde{k}$, the first term in~(\ref{eq:equations_in_closed_form}) becomes,
\begin{align}\label{eq:x1}
    W(0,k)\x(0) = [v(k)\mathbf{1}^\top + e(k)]\x(0)
\end{align}
where, $|e_{ij}(k)| < \hat{\varepsilon}/2$ for all $i,j \in [n]$. Consider the second term in~(\ref{eq:equations_in_closed_form}), 
\begin{align}\label{eq:inter_proof1}
    \sum_{s=0}^{k} \theta(s) W(s+1,k) \x(0) & = \underbrace{\sum_{s=0}^{\tilde{k}} \theta(s) W(s+1,k)}_{:= \psi_1(k)}\x(0) \nonumber \\ 
    & \hspace{-0.4in} + \underbrace{\sum_{s= \tilde{k}+ 1 }^{k}  \theta(s) W(s+1,k)}_{:= \psi_2(k)}\x(0).
\end{align}
Note, $k > \tilde{k} + s + 1$ for all the terms in $\psi_1(k)$. Substituting $W(s+1,k) := v(k)\mathbf{1}^\top + \hat{e}(k)$, with $|\hat{e}_{ij}(k)| < \hat{\varepsilon}/2$ for all $i,j \in [n]$ we get,
\begin{align}\label{eq:inter_proof2}
    \textstyle \psi_1(k)\x(0) = \sum_{s=0}^{\tilde{k}} \theta(s) [v(k)\mathbf{1}^\top + \hat{e}(k)]\x(0).
\end{align}

Consider $\psi_2(k)$, 
\begin{align*}
    \psi_2(k) & = \textstyle \sum_{s=\tilde{k} + 1}^{k} W(s+1,k) \theta(s) \\ 
    & = \theta(k) + \M(k) \theta(k-1) + W(k-1,k) \theta(k-2) + \dots  \\ 
    & \hspace{0.2in} + W(\tilde{k} + 2,k) \theta(\tilde{k} + 1).
\end{align*}
Therefore, 
\begin{align}
    \|\psi_2(k)\|& \leq \theta(k) + \overline{w} \theta(k-1) + \dots + \overline{w} \theta(\tilde{k}+1) \nonumber \\
    & \leq \textstyle \overline{w}\sum_{s = \tilde{k}}^\infty \theta(s) \leq \frac{\hat{\varepsilon}}{2}\label{eq:inter_proof3}.
\end{align}

\noindent Using~(\ref{eq:equations_in_closed_form})-(\ref{eq:inter_proof3}) we have for all $k > \tilde{k}$,
\begin{align}
    \x(k+1) &\leq \underbrace{ \Big(1 + \sum_{s=0}^{\tilde{k}}  \theta(s)\Big) v(k) \mathbf{1}^\top }_{:= \chi_1(k)} \x(0) + \varepsilon \mathbf{1}\mathbf{1}^\top, \label{eq:x_up_bd}\\
    \x(k+1) &\geq \textstyle \Big(1 + \sum_{s=0}^{\tilde{k}}  \theta(s)\Big) v(k) \mathbf{1}^\top \x(0) - \varepsilon \mathbf{1}\mathbf{1}^\top. \label{eq:x_lw_bd}
\end{align}
Following the similar procedure for~(\ref{eq:proof_y_update}) we get, for $k \geq \tilde{k}$,
\begin{align}
    \y(k+1) &\leq \chi_1(k) \y(0)  + \varepsilon \mathbf{1}, \ \text{and}, \label{eq:y_up_bd} \\
    \y(k+1) &\geq \chi_1(k)  \y(0) - \varepsilon \mathbf{1}.  \label{eq:y_lw_bd}   
\end{align}
From~(\ref{eq:compact_ratio}), and~(\ref{eq:x_up_bd})-(\ref{eq:y_lw_bd}), for any $i \in [n]$,
\begin{align}\label{eq:z_bound}
  \textstyle \frac{[\chi_1(k) \x(0) - \varepsilon \mathbf{1}\mathbf{1}^\top ]_i}{[\chi_1(k) \y(0) + \varepsilon \mathbf{1}]_i} \leq 
   z_i(k+1) \leq \frac{[\chi_1(k) \x(0) + \varepsilon \mathbf{1}\mathbf{1}^\top]_i}{[\chi_1(k) \y(0) - \varepsilon \mathbf{1}]_i},
\end{align}
where, $[.]_{i}$ denote the $i^{th}$ row ($i^{th}$ entry) of the corresponding matrices (vectors). Denote $\overline{\theta} := 1 + \sum_{s=0}^{\tilde{k}} \theta(s)$, and $v_i(k)$ as the $i^{th}$ entry of $v(k)$. 

For $k > \tilde{k}$, for any $i \in [n]$,  from~(\ref{eq:z_bound})
\begin{align*}
    z_i(k+1) &\leq \frac{[\chi_1(k) \x(0) + \varepsilon \mathbf{1}\mathbf{1}^\top]_i}{[\chi_1(k) \y(0) - \varepsilon \mathbf{1}]_i} \\
    & = \frac{\overline{\theta} v_i(k) \mathbf{1}^\top \x(0) + \varepsilon \mathbf{1}^\top}{\overline{\theta} v_i(k) \mathbf{1}^\top \y(0) - \varepsilon }\\
    & = \frac{\frac{\mathbf{1}^\top \x(0)}{n} + \frac{\varepsilon}{nv_i(k) \overline{\theta}} \mathbf{1}^\top}{1 - \frac{\varepsilon}{nv_i(k) \overline{\theta}}} \\
    & = \frac{\mathbf{1}^\top \x(0)}{n} + \left(\frac{\mathbf{1}^\top \x(0)}{n} + \mathbf{1}^\top \right)\left(\frac{ \frac{\varepsilon}{nv_i(k) \overline{\theta}}}{1 - \frac{\varepsilon}{nv_i(k) \overline{\theta}}} \right) 
\end{align*}
Similarly, for $k > \tilde{k} $, for any $i \in [n]$, 
\begin{align*}
    z_i(k+1) &\geq \frac{[\chi_1(k) \x(0) - \varepsilon \mathbf{1}\mathbf{1}^\top]_i}{[\chi_1(k) \y(0) + \varepsilon \mathbf{1}]_i} \\
    & = \frac{\overline{\theta} v_i(k) \mathbf{1}^\top \x(0) - \varepsilon \mathbf{1}^\top}{\overline{\theta} v_i(k) \mathbf{1}^\top \y(0) + \varepsilon }\\
    & = \frac{\frac{\mathbf{1}^\top \x(0)}{n} - \frac{\varepsilon}{nv_i(k) \overline{\theta}} \mathbf{1}^\top}{1 + \frac{\varepsilon}{nv_i(k) \overline{\theta}}} \\
    & = \frac{\mathbf{1}^\top \x(0)}{n} - \left(\frac{\mathbf{1}^\top \x(0)}{n} + \mathbf{1}^\top \right)\left(\frac{ \frac{\varepsilon}{nv_i(k) \overline{\theta}}}{1 + \frac{\varepsilon}{nv_i(k) \overline{\theta}}} \right) 
\end{align*}
Let $\overline{\varepsilon}$ be such that, $\varepsilon \leq \frac{\overline{\varepsilon}}{(1 + \overline{\varepsilon})} (n\overline{\theta}v_i(k))$. Therefore, for all $k > \tilde{k}$, and for all $i \in [n]$,
\begin{align*}
   \textstyle  -  (\frac{\mathbf{1}^\top \x(0)}{n} + \mathbf{1}^\top)\overline{\varepsilon} \leq z_i(k) - \frac{\mathbf{1}^\top \x(0)}{n} \leq \textstyle (\frac{\mathbf{1}^\top \x(0)}{n} + \mathbf{1}^\top)\overline{\varepsilon},
\end{align*}
where, $\frac{\mathbf{1}^\top \x(0)}{n} = \frac{1}{n} \sum_{j=1}^n u_j$. Note that $\overline{\varepsilon}$ can be made arbitrarily small. Thus, we have the desired result.
\end{proof}

\begin{theorem}\label{thm:finite_time_estimate_mismatch}
Let Assumption~\ref{assp:strg_colmstoc} hold. Assume the edges in $\G(\V,\E)$ are \textbf{not} corrupted by additive noise in communication channels. Let $\{x_i(k), y_i(k), z_i(k)\}_{k \geq 0}^{i \in [n]}$ be the sequences generated by iterations~(\ref{eq:consensus_num})-(\ref{eq:consensus_ratio}). Let the sequence $\{\theta(k)\}_{k \geq 0}$ satisfy Assumption~\ref{assp:theta}. Then, for all $j \in [n]$ and $k \geq 0$, 
\begin{align*}
 \Big \| z_j(k) - \frac{1}{n} \sum_{i=1}^n u_i \Big \| \leq  \textstyle C_1 \left[ \lambda^k + \sum_{s=0}^{k-1} \theta(s) \lambda^{k-s-1} + \frac{1}{C} \theta(k) \right],
\end{align*}
where, $C, C_1$ and $\lambda \in (0,1)$ are some positive constants.
\end{theorem}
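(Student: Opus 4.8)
The plan is to turn the qualitative sandwich argument in the proof of Theorem~\ref{thm:withoutnoise} into a quantitative one, the extra ingredient being the rate at which the products $W(l,k)$ approach their rank-one limit. In the noiseless case the closed forms~(\ref{eq:equations_in_closed_form}) and~(\ref{eq:y_equation_in_closed_form}), together with $\y(0)=\mathbf{1}$, give for every $k\ge 1$
\begin{align*}
x_j(k)-\overline{u}\,y_j(k) = \big[W(0,k-1)\,\delta\big]_j + \sum_{s=0}^{k-1}\theta(s)\big[W(s+1,k-1)\,\delta\big]_j ,
\end{align*}
where $\overline{u}=\tfrac{1}{n}\mathbf{1}^\top\x(0)$, $\delta:=\x(0)-\mathbf{1}\overline{u}$ satisfies $\mathbf{1}^\top\delta=0$, and $W(k,k-1)=\I_n$; the case $k=0$ is immediate since $z_j(0)=u_j$. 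Because $\delta$ is orthogonal to $\mathbf{1}$, only the deviation of each $W(l,k)$ from a rank-one matrix enters: with $W(l,k)=v(k)\mathbf{1}^\top+e(l,k)$ as in the proof of Theorem~\ref{thm:withoutnoise} we have $W(l,k)\,\delta=e(l,k)\,\delta$.

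The key step is a geometric-rate strengthening of the ergodicity used in Lemma~\ref{lem:ratio_well_def} and Theorem~\ref{thm:withoutnoise}: I would show there exist $\Gamma\ge 1$ and $\lambda\in(0,1)$, depending only on $\G(\V,\E)$ and the weights, with $\|e(l,k)\|\le\Gamma\,\lambda^{\,k-l}$, hence $\|W(l,k)\,\delta\|\le\Gamma\,\lambda^{\,k-l}\,\|\delta\|$ for all $k\ge l\ge 0$. This is the classical contraction estimate for products of column-stochastic SIA matrices: every $\M(r)$ carries the fixed zero/non-zero pattern of $\G(\V,\E)$ augmented with a self-loop at each node, so a product of $W$'s over any window of length comparable to the diameter of $\G(\V,\E)$ is scrambling, its coefficient of ergodicity contracts by a fixed factor on each such window, and iterating produces the exponential rate $\lambda$, uniformly in the starting index $l$. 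Substituting into the identity above,
\begin{align*}
\big|x_j(k)-\overline{u}\,y_j(k)\big| \le \Gamma\|\delta\|\,\lambda^{\,k-1} + \Gamma\|\delta\|\sum_{s=0}^{k-2}\theta(s)\,\lambda^{\,k-s-2} + \|\delta\|\,\theta(k-1),
\end{align*}
the last term being the $s=k-1$ summand for which $W(k,k-1)=\I_n$.

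Finally I would divide by $y_j(k)$ and invoke Lemma~\ref{lem:ratio_well_def}, which gives $y_j(k)>y_o>0$ for all $k\ge 1$, so that $\|z_j(k)-\overline{u}\|=\|x_j(k)-\overline{u}\,y_j(k)\|/y_j(k)\le y_o^{-1}$ times the right-hand side above. Pulling the factor $\lambda^{-1}$ out of the $\lambda^{k-1}$ term, writing the finite sum as $\lambda^{-1}\sum_{s=0}^{k-2}\theta(s)\lambda^{k-s-1}$, reabsorbing the $\theta(k-1)$ term into $\sum_{s=0}^{k-1}\theta(s)\lambda^{k-s-1}$ as its $s=k-1$ entry, and appending the nonnegative slack $\tfrac{1}{C}\theta(k)$, one arrives at exactly the asserted bound with $C_1$ absorbing $\Gamma$, $\|\delta\|$, $y_o^{-1}$ and the (constant) powers of $\lambda^{-1}$, and any $C>0$. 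The main obstacle is the geometric ergodicity estimate itself: Theorem~\ref{thm:wolfowitz} is only qualitative, so the substantive work is to extract an explicit rate $\lambda<1$ that does not depend on the window location. The natural route is to use $\M(r)=\I_n+\beta(r)\tilde{\p}$, the invariance of the disagreement subspace $\{v:\mathbf{1}^\top v=0\}$ under $\tilde{\p}$, and the scrambling/positivity inherited from strong connectivity of $\G(\V,\E)$, to bound the contraction of $W(l,k)$ on that subspace.
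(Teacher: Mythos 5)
Your overall architecture coincides with the paper's: both arguments start from the closed forms (\ref{eq:equations_in_closed_form}) and (\ref{eq:y_equation_in_closed_form}), both reduce the error to the deviation of the backward products $W(l,k)$ from a column-constant rank-one matrix $v(k)\mathbf{1}^\top$ at a geometric rate $C\lambda^{k-l}$, and both finish by dividing by $y_j(k)>y_o$ from Lemma~\ref{lem:ratio_well_def}. Your decomposition is in fact cleaner than the paper's: writing $\x(k)-\y(k)\overline{u}=W(0,k-1)\delta+\sum_{s}\theta(s)W(s+1,k-1)\delta$ with $\mathbf{1}^\top\delta=0$, so that only $e(l,k)=W(l,k)-v(k)\mathbf{1}^\top$ survives, replaces the paper's manipulation with $\mathbf{D}(l,k)$ and the cancellation $nv_i(k)\mathbf{1}^\top\x(k)=v_i(k)\mathbf{1}^\top\y(k)\mathbf{1}^\top\x(0)$, and your index bookkeeping (isolating the $s=k-1$ summand, where $W(k,k-1)=\I_n$, and reabsorbing it into the convolution sum) is correct. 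The one genuine divergence is the provenance of the ergodicity estimate: the paper obtains $|W_{ij}(l,k)-v_i(k)|\le C\lambda^{k-l}$ by citing Corollary 2 of \cite{nedic2014distributed}, whereas you propose to derive it from scratch by a scrambling/coefficient-of-ergodicity argument, and you correctly flag this as the crux.

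That derivation, however, does not go through under Assumption~\ref{assp:beta}, and this is a concrete gap. Since $\M(r)=\I_n+\beta(r)\tilde{\p}$ with $\beta(r)\le b\,r^{-q}$ and $q>1$, the positive off-diagonal entries $\beta(r)p_{ij}$ of the factors are not uniformly bounded away from zero; a window of fixed length starting at index $l$ contracts the coefficient of ergodicity by a factor that tends to $1$ as $l\to\infty$, so no rate $\lambda<1$ uniform in the window location can be extracted by iterating over fixed-length windows. Worse, summability of $\beta$ gives $\det W(l,k)=\prod_{r=l}^{k}\det(\I_n+\beta(r)\tilde{\p})$ bounded away from zero uniformly in $k$ once $l$ is large, so $W(l,k)$ converges as $k\to\infty$ to an \emph{invertible} matrix near $\I_n$ and cannot approach any rank-one matrix at all, let alone geometrically; the bound $\|e(l,k)\|\le\Gamma\lambda^{k-l}$ your estimate needs (for pairs with both $l$ and $k-l$ large, which do occur in the convolution term) is therefore unavailable in this regime. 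You have located the load-bearing step precisely --- it is exactly the step the paper outsources to \cite{nedic2014distributed}, whose hypotheses (a uniform lower bound on the positive entries of the factors) also fail here --- but the scrambling route you propose cannot close it; a correct argument would have to either assume non-summable mixing weights or replace $\lambda^{k-l}$ by a rate expressed through $\sum_{r=l}^{k}\beta(r)$.
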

\begin{proof}
Recall~(\ref{eq:y_equation_in_closed_form}) and~(\ref{eq:equations_in_closed_form}), 
\begin{align*}
\y(k+1) = \textstyle W(0,k) \y(0) + \sum_{s=0}^{k} \theta(s) W(s+1,k)  \y(0),\\
\x(k+1) = \textstyle W(0,k) \x(0) + \sum_{s=0}^{k} \theta(s)  W(s+1,k) \x(0).
\end{align*}
Since, $W(l,k), k \geq l \geq 0$ is column stochastic we have,
\begin{align}
    \mathbf{1}^\top \x(k+1) &= \textstyle \mathbf{1}^\top W(0,k) \x(0) \nonumber \\ 
    & \hspace{0.3in} \textstyle + \sum_{s=0}^{k} \theta(s)  \mathbf{1}^\top W(s+1,k) \x(0) \nonumber \\
    & = \textstyle \mathbf{1}^\top \x(0) + \sum_{s=0}^{k} \theta(s) \mathbf{1}^\top \x(0)  . \label{eq:rate_closed_x_1}
\end{align}
From \cite{nedic2014distributed} Corollary 2, there exists stochastic vectors $v(k) \in \mathbb{R}^n$ and constants $C, \lambda \in (0,1)$ such that for any given $i$, for all $j = 1,\dots,n,$
\begin{align}\label{eq:corr_2}
    |W_{ij}(l,k) - v_i(k)| \leq C \lambda^{k -l}, \ \mbox{for all} \ k \geq l \geq 0.
\end{align}
Multiplying $\M(k+1)$ on both sides to~(\ref{eq:equations_in_closed_form}) and $v(k+1)$ to~(\ref{eq:rate_closed_x_1}) and subtracting we get, 
\begin{align}
    & (\M(k+1) - v(k+1) \mathbf{1}^\top) \x(k+1) \nonumber \\
    & =  (W(0,k+1) - v(k+1) \mathbf{1}^\top) \x(0) \nonumber \\
    & \hspace{0.5in} + \textstyle  \sum_{s=0}^k \theta(s)  (W(s+1,k+1) - v(k+1) \mathbf{1}^\top) \x(0) \nonumber \\
    & \implies \M(k+1) \x(k+1) = v(k+1) \mathbf{1}^\top \x(k+1) \nonumber \\
    & \hspace{0.5in} + (W(0,k+1) - v(k+1) \mathbf{1}^\top) \x(0) \nonumber \\
    & \hspace{0.1in} + \textstyle \sum_{s=0}^k (W(s+1,k+1) - v(k+1) \mathbf{1}^\top) \x(0) \theta(s). \label{eq:rate_closed_x}
\end{align}
Define $\mathbf{D}(l,k) := W(l,k) - v(k) \mathbf{1}^\top$. Replacing $k+1$ by $k$ in~(\ref{eq:rate_closed_x}) and adding $\theta(k) \x(0)$ on both sides. Comparing with~(\ref{eq:proof_x_update}) we get,
\begin{align}\label{eq:rate_closed_x2}
    \x(k+1) & = \M(k) \x(k) + \theta(k) \x(0) \nonumber \\
    & \hspace{-0.6in} = v(k) \mathbf{1}^\top \x(k) + \mathbf{D}(0,k) \x(0) \textstyle + \sum_{s=0}^{k-1} \mathbf{D}(s+1,k) \x(0) \theta(s) \nonumber \\
    & \hspace{-.4in} + \theta(k)\x(0).
\end{align}
Similarly, for~(\ref{eq:y_equation_in_closed_form})
\begin{align}\label{eq:rate_closed_y2}
    \y(k+1) & = \M(k) \y(k) + \theta(k) \y(0) \nonumber \\
    & \hspace{-0.6in} = v(k) \mathbf{1}^\top \y(k) + \mathbf{D}(0,k) \y(0) \textstyle + \sum_{s=0}^{k-1} \mathbf{D}(s+1,k) \y(0) \theta(s) \nonumber \\
    & \hspace{-.4in} + \theta(k)\y(0).
\end{align}
Using~(\ref{eq:compact_ratio}),~(\ref{eq:rate_closed_x2}), and~(\ref{eq:rate_closed_y2}) for any agent $i$ and $k \geq 0$, 
\begin{align*}
    & \textstyle  z_i(k+1) - \frac{\mathbf{1}^\top \x(0)}{n} \\
    & = \textstyle \frac{[v(k) \mathbf{1}^\top \x(k) + \mathbf{D}(0,k) \x(0)]_i + \sum_{s=0}^{k-1} [\mathbf{D}(s+1,k) \x(0) \theta(s)]_i + \theta(k) \x_i(0)}{y_i(k)}\\
    & \hspace{0.2in} - \textstyle \frac{\mathbf{1}^\top \x(0)}{n} \\
    & = \textstyle \frac{v_i(k) \mathbf{1}^\top \x(k) + \mathbf{D}_i(0,k) \x(0) + \sum_{s=0}^{k-1} \theta(s) \mathbf{D}_i(s+1,k) \x(0)+ \theta(k) \x_i(0)}{y_i(k)}\\
    & \hspace{0.2in} - \textstyle \frac{\mathbf{1}^\top \x(0)}{n} \\
    & \textstyle = \frac{n(v_i(k) \mathbf{1}^\top \x(k) + \mathbf{D}_i(0,k) \x(0) + \sum_{s=0}^{k-1} \theta(s) \mathbf{D}_i(s+1,k) \x(0)  + \theta(k) \x_i(0))}{n y_i(k)} \\
    & \hspace{0.1in} - \textstyle \frac{(v_i(k) \mathbf{1}^\top \y(k) + \mathbf{D}_i(0,k) \y(0) + \sum_{s=0}^{k-1} \theta(s) \mathbf{D}_i(s+1,k) \y(0))\mathbf{1}^\top \x(0) }{n y_i(k)} \\
    &  \hspace{0.1in} - \textstyle \frac{\theta(k) \y_i(0)\mathbf{1}^\top \x(0) }{n y_i(k)} \\
    & \textstyle = \frac{\mathbf{D}_i(0,k) (n \x(0) - \y(0) \mathbf{1}^\top \x(0)) }{n y_i(k)} + \frac{\theta(k)(n \x_i(0) - \y_i(0) \mathbf{1}^\top \x(0)) }{n y_i(k)} + \\
    & \hspace{0.2in}+ \textstyle \frac{\sum_{s=0}^{k-1} \theta(s)  \mathbf{D}_i(s+1,k) (n \x(0) - \y(0) \mathbf{1}^\top \x(0))}{n y_i(k)},
\end{align*}
where, $[.]_i$ denote the $i^{th}$ row of the corresponding matrix and in the last equality we used the fact $\frac{ nv_i(k) \mathbf{1}^\top \x(k)}{n y_i(k)} - \frac{v_i(k) \mathbf{1}^\top \y(k) \mathbf{1}^\top \x(0)}{n y_i(k)} = 0$.
Note $\| n \x(0) - \y(0) \mathbf{1}^\top \x(0) \| \leq \|n \I_n - \mathbf{1}\mathbf{1}^\top\| \|\x(0)\| \leq 2n^2 \|\x(0)\|$. From Lemma~\ref{lem:ratio_well_def}, there exists $y_o >0$ such that $\y_i(k) > y_o$ for all $i$ and $k \geq 0$. Therefore,
\begin{align*}
   & \textstyle \left \| z_i(k) - \frac{\mathbf{1}^\top \x(0)}{n} \right \| \\
   & \leq \textstyle  \frac{\|\mathbf{D}_i(0,k)\| \|n \x(0) - \y(0) \mathbf{1}^\top \x(0)\| }{n\|y_i(k)\|} + \frac{\theta(k)\|n \x_i(0) - \y_i(0) \mathbf{1}^\top \x(0)\| }{n\|y_i(k)\|} + \\
    & \hspace{0.2in}+ \textstyle \frac{\sum_{s=0}^{k-1} \theta(s) \| \mathbf{D}_i(s+1,k)\| \|n \x(0) - \y(0) \mathbf{1}^\top \x(0)\|}{n\|y_i(k)\|},\\
   & \textstyle \leq  \frac{2n^3 \|\x(0)\| C \lambda^{k}}{ny_o}  + \frac{2n^3 \|\x(0)\|\theta(k)}{ny_o}   + \frac{2n^3 \|\x(0)\| C \sum_{s=0}^{k-1} \theta(s) \lambda^{k-s-1}}{ny_o}\\
   & = \textstyle \frac{2n^2 \|\x(0)\| C}{y_o} \left[ \lambda^k + \sum_{s=0}^{k-1} \theta(s) \lambda^{k-s-1} + \frac{1}{C} \theta(k) \right].
\end{align*}
\end{proof}
\begin{corollary}
Let Assumption~\ref{assp:strg_colmstoc} hold. Assume the edges in $\G(\V,\E)$ are \textbf{not} corrupted by additive noise in communication channels. Let $\{x_i(k), y_i(k), z_i(k)\}_{k \geq 0}^{i \in [n]}$ be the sequences generated by the iterations~(\ref{eq:consensus_num})-(\ref{eq:consensus_ratio}). Let $\theta(k) = \rho^k$, for all $k \geq 1$ such that $\rho \in (0,1/\alpha)$, where $\alpha > 1$. Then, for all $j \in [n]$ and $k \geq 1$, 
\begin{align*}
    \textstyle \left \| z_j(k) - \frac{1}{n} \sum_{i=1}^n u_i \right \| \leq \Upsilon \varrho^k,
\end{align*}
where, $\Upsilon$ and $\varrho \in (0,1)$ are positive constants.
\end{corollary}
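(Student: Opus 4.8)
The plan is to substitute $\theta(k)=\rho^k$ directly into the finite-time estimate of Theorem~\ref{thm:finite_time_estimate_mismatch} and then show that each of the three terms on its right-hand side decays geometrically at one common rate. First note that $\theta(k)=\rho^k$ with $\rho\in(0,1/\alpha)\subset(0,1)$ is nonnegative and satisfies $\sum_{k\geq 0}\theta(k)<\infty$, so Assumption~\ref{assp:theta} holds and Theorem~\ref{thm:finite_time_estimate_mismatch} applies. Writing $C_1:=\frac{2n^2\|\x(0)\|C}{y_o}$ for the constant produced there, we obtain, for every $j\in[n]$ and $k\geq 1$,
\begin{align*}
 \Big\| z_j(k) - \tfrac{1}{n}\textstyle\sum_{i=1}^n u_i \Big\| \leq C_1\Big[\lambda^k + \textstyle\sum_{s=0}^{k-1}\rho^s \lambda^{k-s-1} + \tfrac{1}{C}\,\rho^k\Big].
\end{align*}
Set $\mu:=\max\{\rho,\lambda\}\in(0,1)$ and fix any $\varrho$ with $\mu<\varrho<1$; the claim reduces to bounding each bracketed term by a constant multiple of $\varrho^k$.

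The first and third terms are immediate: $\lambda^k\leq\mu^k\leq\varrho^k$ and $\rho^k\leq\mu^k\leq\varrho^k$. For the middle (convolution) term I would bound each summand by $\rho^s\lambda^{k-s-1}\leq\mu^{s}\mu^{k-s-1}=\mu^{k-1}$, so that $\sum_{s=0}^{k-1}\rho^s\lambda^{k-s-1}\leq k\,\mu^{k-1}=\tfrac{1}{\mu}\,k\,\mu^{k}$. Since $\mu/\varrho\in(0,1)$, the quantity $M:=\sup_{k\geq 1}k(\mu/\varrho)^k$ is finite, whence $k\,\mu^{k}\leq M\,\varrho^{k}$ and $\sum_{s=0}^{k-1}\rho^s\lambda^{k-s-1}\leq (M/\mu)\,\varrho^{k}$ for all $k\geq 1$. (When $\rho\neq\lambda$ one may instead sum the geometric series exactly, $\sum_{s=0}^{k-1}\rho^s\lambda^{k-s-1}=\frac{\rho^k-\lambda^k}{\rho-\lambda}$, which is at most $\frac{2\varrho^k}{|\rho-\lambda|}$; the argument above has the advantage of covering the degenerate case $\rho=\lambda$ uniformly.)

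Combining the three estimates yields $\big\| z_j(k) - \frac1n\sum_{i=1}^n u_i\big\|\leq C_1\big(1+\tfrac{M}{\mu}+\tfrac1C\big)\,\varrho^{k}$ for all $k\geq 1$, which is the assertion with $\Upsilon:=C_1\big(1+\tfrac{M}{\mu}+\tfrac1C\big)$ and $\varrho\in(0,1)$ as chosen. I do not anticipate a genuine obstacle here: the only point needing care is the convolution term, where a term-by-term bound costs a factor $k$ that must be absorbed by picking the decay rate $\varrho$ strictly above $\max\{\rho,\lambda\}$ (equivalently, by treating the borderline case $\rho=\lambda$ separately); the hypothesis $\rho<1/\alpha$ serves only to ensure $\rho<1$ and plays no deeper role.
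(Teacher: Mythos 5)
Your proof is correct and follows essentially the same route as the paper: substitute $\theta(k)=\rho^k$ into Theorem~\ref{thm:finite_time_estimate_mismatch} and dominate all three resulting terms by a common geometric rate. The only difference is in the convolution term --- the paper sets $\varrho:=\max\{\lambda,\alpha\rho\}$ and sums the geometric series $\sum_s(\rho/\varrho)^s$ exactly (this is where the hypothesis $\rho<1/\alpha$ actually earns its keep, guaranteeing both $\varrho<1$ and $\varrho-\rho>0$), whereas you pick $\varrho$ strictly above $\max\{\rho,\lambda\}$ and absorb the factor $k$; both work, and your variant handles $\rho=\lambda$ without any case split.
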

\begin{proof}
Let $ \varrho:= \max \{ \lambda, \alpha \rho \}$. 
From Theorem~\ref{thm:finite_time_estimate_mismatch}, 
\begin{align*}
    \big \| z_j(k) - \frac{1}{n} \sum_{i=1}^n u_i \big \| 
   & \textstyle \leq C_1 \left [ \lambda^k + \sum_{s=0}^{k-1} \theta(s) \lambda^{k-s-1}  + \frac{1}{C}\theta(k) \right] \\
   & \textstyle \hspace{-0.1in} \leq C_1  \left [ \lambda^k + \sum_{s=0}^{k-1} \lambda^{k-s-1} \rho^s + \frac{1}{C}\rho^k \right] \\
   & \textstyle \hspace{-0.1in}  \leq C_1 \left [ \lambda^k + \sum_{s=0}^{k-1} \varrho^{k-s-1} \rho^s + \frac{1}{C}\rho^k \right] \\
   & \textstyle \hspace{-0.1in} \leq C_1  \left [ \varrho^k + \frac{\varrho^k}{\varrho} \sum_{s=0}^{k-1} (\frac{\rho}{\varrho})^s + \frac{1}{C}\varrho^k \right] \\
   & \textstyle \hspace{-0.1in} \leq C_1 \left [ 1 + \frac{1}{\varrho}\sum_{s=0}^{\infty} (\frac{\rho}{\varrho})^s + \frac{1}{C} \right]  \varrho^k \\
   & \textstyle \hspace{-0.1in} = C_1  \left[ 1 + \frac{1}{\varrho - \rho} + \frac{1}{C} \right]  \varrho^k = \Upsilon \varrho^k,
\end{align*}
where, $\Upsilon :=  C_1  \left[ 1 + \frac{1}{\varrho - \rho} + \frac{1}{C} \right]$.
\end{proof}

\begin{subsection}{Convergence under noisy communication channels}
Here, we analyze the performance of the proposed $\nrps$ algorithm under the communication noise. We first introduce conditions on the additive noise in the communication links. In particular, we make the following assumption:
\begin{assump}\label{assp:noise}
The noise in communication channels in $\G(\V,\E)$ is bounded, i.e. there exists $\delta_{i,j}$ such that $|\eta_{x_{ij}}(k)| \leq \delta_{i,j}, |\eta_{y_{ij}}(k)| \leq \delta_{i,j}$ for $i,j \in [n]$ and $k \geq 1$.
\end{assump}

\begin{lemma}\label{lem: y_well_def_in_noise}
Let Assumption~\ref{assp:strg_colmstoc} hold. Let sequence $\{\beta(k)\}_{k \geq 0}$ satisfy Assumption~\ref{assp:beta}. Assume that the channel noise satisfy Assumption~\ref{assp:noise}. Then there exist sequences $\{\theta(k)\}_{k \geq 0}$ satisfying Assumption~\ref{assp:theta} and $\tilde{y}_o > 0$ such that $y_i(k) > \tilde{y}_o$ for all $k \geq 1$ and $i \in [n]$, i.e., the estimate $y_i(k)$ remain positive at all time-instants and the states $z_i(k)$ are well-defined for all $k, i \in [n]$.
\end{lemma}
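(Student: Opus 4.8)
The plan is to establish a strictly positive lower bound on $y_i(k)$ that holds uniformly in $k$ and $i$, by carefully controlling the negative contribution that the noise terms $\eta_{y_{ij}}(k)$ can make to the $y$-iteration. Recall from~(\ref{eq:compact_den}) that $\y(k+1) = \M(k)\y(k) + \theta(k)\y(0) + \beta(k)\eta_y(k)$, where $\M(k) = \A(k)+\B(k)$ is column stochastic. Iterating this recursion exactly as in the proof of Lemma~\ref{lem:ratio_well_def} but now keeping the noise term gives, for each $i$,
\begin{align*}
 y_i(k+1) = \textstyle \sum_{j=1}^n W_{ij}(0,k) + \sum_{s=0}^{k}\theta(s)\sum_{j=1}^n W_{ij}(s+1,k) + \sum_{s=0}^{k}\beta(s)\sum_{j=1}^n W_{ij}(s+1,k)\,\eta_{y_j}(s).
\end{align*}
The first two sums are exactly the noiseless quantities analyzed in Lemma~\ref{lem:ratio_well_def}, so the first sum alone is bounded below by a fixed $y_o > 0$ (the constant produced there). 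The only new object is the last sum, the noise contribution.

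First I would bound the noise contribution in absolute value. Using Assumption~\ref{assp:noise}, $|\eta_{y_j}(s)| \le \sum_{\ell \in \mathcal{N}^I_j}\delta_{j,\ell} =: \Delta_j \le \Delta := \max_j \Delta_j$, and using column stochasticity of $W(s+1,k)$ together with $W_{ij} \in [0,1]$ we get $\big|\sum_{j=1}^n W_{ij}(s+1,k)\eta_{y_j}(s)\big| \le n\Delta$ (or a sharper bound via $\sum_i W_{ij}=1$; either suffices). Hence the magnitude of the noise term is at most $n\Delta \sum_{s=0}^{\infty}\beta(s)$, and by Assumption~\ref{assp:beta}(ii) the series $\sum_s \beta(s)$ converges because $q>1$; call its value $S_\beta < \infty$. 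So the noise perturbs $y_i(k+1)$ by at most $n\Delta S_\beta$ in absolute value, a \emph{finite} constant, uniformly in $k$ and $i$.

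The key step — and the reason $\theta$ must be chosen, not just assumed — is to make the (nonnegative) $\theta$-driven term dominate this finite perturbation. Since $\sum_j W_{ij}(s+1,k) > w_o^{k-s}$ by the Claim~(\ref{eq:y_sum_bound}) in Lemma~\ref{lem:ratio_well_def}, the $\theta$-term is at least $\theta(k)\cdot 1$ (from the $s=k$ summand, where $W(k+1,k)=\I_n$) plus nonnegative contributions, but more robustly I would simply choose $\theta(\cdot)$ large enough that $\sum_{s=0}^\infty \theta(s)$ is any prescribed finite value $M$, and argue that for $k$ beyond the index $\ell_\varepsilon$ where $\sum_j W_{ij}(0,k) \ge n\min_i v_i(\ell_\varepsilon) - n\varepsilon/2 =: y_o > 0$, we have $y_i(k+1) \ge y_o + 0 - n\Delta S_\beta$; if this is already positive we are done, and otherwise I would instead leverage the $\theta$-term: pick $\theta(0) \ge$ a constant so that, for the initial block $k \le \ell_\varepsilon$, the bound $y_i(k) \ge (1+\theta(0))w_o^{\ell_\varepsilon} - n\Delta S_\beta$ stays positive, and note $\theta(0)$ can be taken arbitrarily large while still having $\sum_s\theta(s)<\infty$ (Assumption~\ref{assp:theta} only constrains the tail). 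The cleanest route: since both $y_o$-type lower bounds in Lemma~\ref{lem:ratio_well_def} scale linearly with an additive $\theta(0)$ term while the noise damage $n\Delta S_\beta$ is a fixed constant independent of $\theta$, choosing $\theta(0)$ sufficiently large (and the remaining $\theta(k)$ summable, e.g. $\theta(k)=\rho^k$) guarantees $y_i(k) > \tilde y_o := \tfrac12\big(\text{noiseless lower bound}\big) > 0$ for all $k\ge 1$, $i\in[n]$. Then $z_i(k) = x_i(k)/y_i(k)$ is well-defined.

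The main obstacle I anticipate is purely bookkeeping rather than conceptual: one must ensure the chosen $\theta$ sequence simultaneously (a) satisfies Assumption~\ref{assp:theta} (nonnegative, summable), (b) produces the lower bound from Lemma~\ref{lem:ratio_well_def} \emph{quantitatively} large enough to absorb $n\Delta S_\beta$ for the finite initial window $k \le \ell_\varepsilon$, and (c) works for the asymptotic window $k > \ell_\varepsilon$ where the $W(0,k)$ term already gives a $\theta$-independent positive floor $y_o$ — there one may need $y_o > n\Delta S_\beta$, which is \emph{not} automatic, so the honest fix is to also exploit the strictly positive lower bound $\sum_{s=0}^k \theta(s)\sum_j W_{ij}(s+1,k) \ge \theta(0) w_o^{k}$ is too weak for large $k$; instead use that for $k>\ell_\varepsilon$ the $\theta$-term equals $\bar\theta\, v_i(k)\mathbf{1}^\top\y(0) + O(\varepsilon) \approx \bar\theta\, n v_i(k)$ with $\bar\theta = 1+\sum_s\theta(s)$, which again scales linearly in the freely-chosen total mass of $\theta$ minus one. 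So taking $\sum_s \theta(s)$ large enough makes $\bar\theta\, n\min_i v_i(k)$ exceed $n\Delta S_\beta$ for all large $k$, closing the argument. I would present this by splitting into the two ranges $1\le k\le \ell_\varepsilon$ and $k>\ell_\varepsilon$, mirroring Lemma~\ref{lem:ratio_well_def} line by line and tracking the single extra additive term $n\Delta S_\beta$ throughout.
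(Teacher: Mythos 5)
Your overall strategy is sound but takes a genuinely different route from the paper, and as written it has a gap at the decisive step. The paper's proof never bounds the accumulated noise in aggregate; instead it constructs the family $\Theta$ in~(\ref{eq:theta_family}) so that $\theta(s)\geq n\delta\beta(s)$ \emph{at every time step} $s$, which makes the combined forcing term $\theta(s)\y(0)+\beta(s)\eta_y(s)$ entrywise nonnegative. Since $W(s+1,k)$ has nonnegative entries, the entire $\theta$-plus-noise contribution can then simply be dropped from the lower bound, and $y_i(k+1)>\sum_j W_{ij}(0,k)$ reduces the problem verbatim to the noiseless analysis of Lemma~\ref{lem:ratio_well_def}. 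This pointwise-domination trick is what lets the paper avoid everything you struggle with in your last paragraph.

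The gap in your version is the uniform lower bound you need on the $\theta$-driven term for $k>\ell_\varepsilon$. You correctly observe that the $\theta$-independent floor $y_o$ from $W(0,k)$ need not exceed $n\Delta S_\beta$, and that $\sum_{s=0}^k\theta(s)\sum_j W_{ij}(s+1,k)\geq \theta(0)w_o^{k}$ is useless for large $k$; but your replacement, ``$\approx \bar\theta\, n v_i(k)$ with $\bar\theta$ chosen large,'' silently requires $\inf_{s,k}\min_i v_i$ (equivalently $\inf_{s,k}\sum_j W_{ij}(s+1,k)$) to be strictly positive, and nothing in your argument establishes this: the only row-sum bound available from~(\ref{eq:y_sum_bound}) is $w_o^{k-s+1}$, which vanishes as $k-s\to\infty$. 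The hole is closable — e.g.\ factor $W(s+1,k)=W(s+1,s+1+\ell_\varepsilon)\,W(s+2+\ell_\varepsilon,k)$, apply the rank-one decomposition to the \emph{first} factor (whose row sums exceed $w_o^{\ell_\varepsilon+1}$ uniformly in $s$), and use column stochasticity of the second to get a row-sum floor $w_o^{\ell_\varepsilon+1}-2n\varepsilon>0$ uniform in both $s$ and $k$ — but this extra argument must be supplied; without it the claim that a large total mass $\sum_s\theta(s)$ absorbs the fixed perturbation $n\Delta S_\beta$ does not follow. If you do supply it, your aggregate-bounding route is legitimate and arguably more informative (it quantifies the worst-case noise damage as $n\Delta S_\beta$), at the price of considerably more bookkeeping than the paper's one-line nonnegativity claim.
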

\begin{proof}
Let $\delta := \max\limits_{i,j} \delta_{ij}$. Consider a family $\Theta$ of sequences with any element $\theta(k) \in \Theta$ such that,
\begin{align}\label{eq:theta_family}
    \theta(k) = \begin{cases}
                 d_1 n \delta \beta(k) \in \mathbb{R}, & \ k \leq K_\theta, \\
                  \textstyle \frac{d_2nb\delta}{k^q} \in [0,1],  & \ k > K_\theta,
               \end{cases}
\end{align}
where, $ 1 \leq d_1 < \infty$ and $ 1 \leq d_2 \leq K_\theta^q / nb\delta$ and $K_{\theta} \geq \max \{K_\beta, (n\delta b)^{1/q} \}$, where $K_\beta$ is the time index in Assumption~\ref{assp:beta}, such that $a k^{-q} \leq \beta(k) \leq b k^{-q}$ for all $k \geq K_\beta$, with $q > 1$ and $0 \leq a \leq b < \infty$. Different choices of $d_1$ and $d_2$ lead to different sequences in $\Theta$. Note that every sequence $\theta(k)$ in the family $\Theta$ satisfy Assumption~\ref{assp:theta}. Using~(\ref{eq:compact_den}) and proceeding as in~(\ref{eq:y_equation_in_closed_form}) we have,
\begin{align}\label{eq:y_for_proof_with_noise}
     \y(k+1) & = \textstyle W(0,k) \y(0) + \sum_{s=0}^k  \theta(s) W(s+1,k) \y(0) \nonumber \\
    & \hspace{0.2in} \textstyle  + \sum_{s=0}^k \beta(s) W(s+1,k)  \eta_y(s) \\
    & \hspace{-0.55in}  = \textstyle W(0,k) \y(0) + \sum_{s=0}^k W(s+1,k) (\theta(s)  \y(0) + \beta(s) \eta_y(s)). \nonumber
\end{align}

\textit{Claim:} Let $\theta(k) \in \Theta$, then $(\theta(k)  \y(0) + \beta(k) \eta_y(k)) > 0$ for all $k \geq 0$. \\
\textit{Proof:} Let $k \leq K_\theta$, 
\begin{align*}
    \theta(k)  \y(0) + \beta(k) \eta_y(k) &= d_1 n \delta \beta(k) \y(0) + \beta(k) \eta_y(k) \\
    & \hspace{-1in} \geq d_1 n \delta \beta(k) \mathbf{1}^\top - \delta \beta(k)  \mathbf{1}^\top = ( d_1 n - 1)\delta \beta(k) \mathbf{1}^\top > 0.
\end{align*}
For $k > K_\theta$, 
\begin{align*}
    \theta(k)  \y(0) + \beta(k) \eta_y(k) &= \textstyle \frac{d_2 n b \delta}{k^q} \y(0) + \beta(k) \eta_y(k) \\
    & \hspace{-0.4in} \geq \textstyle \frac{d_2 n b \delta}{k^q} \mathbf{1}^\top - \frac{b \delta}{k^q}  \mathbf{1}^\top = \textstyle ( d_2 n - 1) \frac{b \delta}{k^q} \mathbf{1}^\top > 0. \hspace{0.4in} \qed
\end{align*} 

Let the sequence $\theta(k) \in \Theta$ as defined in~(\ref{eq:theta_family}). We follow the same steps in the proof of Lemma~\ref{lem:ratio_well_def}. Given, $\varepsilon >0$, for any  $i \in [n]$ and $k > \ell_\varepsilon$,
\begin{align*}
 y_i(k+1) &= \textstyle \sum_{j=1}^n W_{ij}(0,k) y_j(0) \\
 & \hspace{-0.2in} \textstyle + \sum_{j=1}^n \sum_{s=0}^k W_{ij}(s+1,k) (\theta(s)  y_j(0) + \beta(s) \eta_{y_j}(s)) \\
 & > \textstyle  \sum_{j=1}^n W_{ij}(0,k) \\
 & = \textstyle  \sum_{j=1}^n \big[v(\ell_\varepsilon)\mathbf{1}^\top + e(\ell_\varepsilon) W(\ell_\varepsilon + 1, k) \big]_{ij} \\
 & > \textstyle  \sum_{j=1}^n \big[v(\ell_\varepsilon)\mathbf{1}^\top - \frac{\varepsilon}{2} \mathbf{1}\mathbf{1}^\top W(\ell_\varepsilon + 1, k) \big]_{ij} \\
 & > \textstyle  \sum_{j=1}^n \big[v(\ell_\varepsilon)\mathbf{1}^\top - \frac{\varepsilon}{2} \mathbf{1}\mathbf{1}^\top \big]_{ij} \\
 & = \textstyle  n v_i(\ell_\varepsilon) - n\frac{\varepsilon}{2} \\
 & >   n \min_{1 \leq i \leq n} v_i(\ell_\varepsilon) - \textstyle \varepsilon\frac{n}{2} > 0,
\end{align*}
where, we used~(\ref{eq:y_sum_bound}), the claim above and the fact that $W(l,k)$ is column stochastic for all $k,l \geq 0$.

For $k \leq \ell_\varepsilon$, from~(\ref{eq:y_sum_bound}) we have
\begin{align*}
 y_i(k) & > \textstyle w_o^{k-1} > \theta(0)w_o^{\ell_\varepsilon} > 0.
\end{align*}
Therefore, $y_i(k) > \tilde{y}_o : =\min \{n \min\limits_{1 \leq i \leq n} v_i(\ell_\varepsilon), \theta(0) w_o^{\ell_\varepsilon}\} \ \forall k \geq 1$ with any $\theta(k) \in \Theta$ and $\beta(k)$ satisfying Assumption~\ref{assp:beta}. 
\end{proof}

\begin{theorem}\label{thm:withnoise}
Let Assumption~\ref{assp:strg_colmstoc} hold. Let sequence $\{\beta(k)\}_{k \geq 0}$ satisfy Assumption~\ref{assp:beta}. Assume that the channel noise satisfy Assumption~\ref{assp:noise}. Let $\{z_i(k)\}_{k \geq 0}^{i \in [n]}$ be the sequences generated by iteration~(\ref{eq:consensus_ratio}). Given $\mu > 0$, let $\{\theta(k)\}_{k \geq 0}$ be a sequence from the family $\Theta$ defined in Lemma~\ref{lem: y_well_def_in_noise} with $\theta(0) := n\delta \sum_{s=0}^\infty \beta(s) / \mu$, where, $\delta = \max\limits_{i,j} \delta_{ij}$. Then, for all $i \in [n]$, \textit{almost surely},
\begin{align*}
\textstyle \frac{\frac{1}{n} \sum_{j=1}^n u_j - \mu}{1 + \mu} \leq \lim_{k \to \infty} z_i(k) \leq \frac{\frac{1}{n} \sum_{j=1}^n u_j + \mu}{1 - \mu}.
\end{align*}
\end{theorem}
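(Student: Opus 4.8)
The plan is to carry the closed-form machinery behind Lemma~\ref{lem:ratio_well_def} and Theorem~\ref{thm:finite_time_estimate_mismatch} through the \emph{noisy} recursion~(\ref{eq:compact_num})--(\ref{eq:compact_den}). Unrolling these recursions with $W(l,k):=\widehat{\prod}_{r=l}^{k}\M(r)$ gives
\begin{align*}
\x(k+1) &= W(0,k)\x(0) + \textstyle\sum_{s=0}^{k}\theta(s)W(s+1,k)\x(0)\\
&\quad + \textstyle\sum_{s=0}^{k}\beta(s)W(s+1,k)\eta_x(s),\\
\y(k+1) &= W(0,k)\y(0) + \textstyle\sum_{s=0}^{k}\theta(s)W(s+1,k)\y(0)\\
&\quad + \textstyle\sum_{s=0}^{k}\beta(s)W(s+1,k)\eta_y(s).
\end{align*}
As in the proof of Theorem~\ref{thm:finite_time_estimate_mismatch}, Corollary~2 of \cite{nedic2014distributed} lets us write $W(l,k)=v(k)\mathbf{1}^\top+\mathbf{D}(l,k)$ with $v(k)$ stochastic and $|\mathbf{D}_{ij}(l,k)|\le C\lambda^{k-l}$ for some $C>0$, $\lambda\in(0,1)$. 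Substituting this and using $\mathbf{1}^\top\x(0)=n\overline{u}$, $\mathbf{1}^\top\y(0)=n$, the $i$-th components separate as
\begin{align*}
x_i(k+1) &= v_i(k)N_k + E_x^{(i)}(k), & y_i(k+1) &= v_i(k)D_k + E_y^{(i)}(k),
\end{align*}
where $\overline{\theta}_k := 1+\sum_{s=0}^{k}\theta(s)$, $S_x(k):=\sum_{s=0}^{k}\beta(s)\mathbf{1}^\top\eta_x(s)$, $S_y(k):=\sum_{s=0}^{k}\beta(s)\mathbf{1}^\top\eta_y(s)$, $N_k:=n\overline{u}\,\overline{\theta}_k+S_x(k)$, $D_k:=n\overline{\theta}_k+S_y(k)$, and $E_x^{(i)}(k),E_y^{(i)}(k)$ collect exactly the $\mathbf{D}(\cdot,\cdot)$ contributions. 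The point is that the (possibly non-convergent) ``consensus direction'' $v_i(k)$ multiplies the same scalar structure in the numerator and in the denominator.

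Next I would show that $N_k,D_k$ converge while the remainders vanish. Assumption~\ref{assp:beta} gives $B:=\sum_{s\ge0}\beta(s)<\infty$ and $\beta(s)\to0$; Assumption~\ref{assp:theta} gives $\sum_{s\ge0}\theta(s)<\infty$; and Assumption~\ref{assp:noise} bounds $|\mathbf{1}^\top\eta_x(s)|$ and $|\mathbf{1}^\top\eta_y(s)|$ by $n^2\delta$, where $\delta:=\max_{i,j}\delta_{ij}$. Hence $\overline{\theta}_k\to\overline{\theta}_\infty:=1+\sum_{s\ge0}\theta(s)$, the series $S_x(k),S_y(k)$ converge absolutely --- for every noise realization --- to some $S_x^\infty,S_y^\infty$, and therefore $N_k\to N_\infty:=n\overline{u}\,\overline{\theta}_\infty+S_x^\infty$ and $D_k\to D_\infty:=n\overline{\theta}_\infty+S_y^\infty$. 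For the remainders, each summand in $E_x^{(i)}(k)$ is dominated by a constant times $\lambda^{k}$, $\theta(s)\lambda^{k-s-1}$, or $\beta(s)\lambda^{k-s-1}$; splitting such a convolution at $s=\lfloor k/2\rfloor$ --- the early block bounded by $\lambda^{\lfloor k/2\rfloor}$ times $\sum_s\theta(s)$ (resp.\ $\sum_s\beta(s)$), the late block by the vanishing tail $\sum_{s>k/2}\theta(s)$ (resp.\ $\sum_{s>k/2}\beta(s)$) --- shows $E_x^{(i)}(k)\to0$, and identically $E_y^{(i)}(k)\to0$.

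Combining the two previous steps, for every $i$,
\begin{align*}
z_i(k+1) &= \frac{x_i(k+1)}{y_i(k+1)} = \frac{v_i(k)N_k+E_x^{(i)}(k)}{v_i(k)D_k+E_y^{(i)}(k)}\\
&= \frac{N_k+E_x^{(i)}(k)/v_i(k)}{D_k+E_y^{(i)}(k)/v_i(k)}.
\end{align*}
By Lemma~\ref{lem: y_well_def_in_noise}, which applies because the prescribed $\theta$ belongs to the family $\Theta$, we have $y_i(k)>\tilde{y}_o>0$; since $v_i(k)\le1$ and $D_k$ is uniformly bounded, this forces $v_i(k)\ge\underline{v}>0$ for all large $k$, so $E_x^{(i)}(k)/v_i(k)\to0$ and $E_y^{(i)}(k)/v_i(k)\to0$, and the same lower bound gives $D_\infty\ge\tilde{y}_o>0$. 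Consequently $z_i(k)\to N_\infty/D_\infty$ for every $i$ --- a limit independent of $i$, i.e.\ the agents reach consensus --- and since Assumption~\ref{assp:noise} holds surely this convergence holds almost surely.

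It remains to bound the limit. Writing
\begin{align*}
\lim_{k\to\infty}z_i(k) &= \frac{N_\infty}{D_\infty} = \frac{\overline{u}+\varepsilon_x}{1+\varepsilon_y},\\
\varepsilon_x &:= \frac{S_x^\infty}{n\overline{\theta}_\infty},\qquad \varepsilon_y := \frac{S_y^\infty}{n\overline{\theta}_\infty},
\end{align*}
the prescribed $\theta(0)=n\delta\sum_{s\ge0}\beta(s)/\mu$ yields $\overline{\theta}_\infty\ge1+\theta(0)\ge n\delta B/\mu$, and combined with $|S_x^\infty|,|S_y^\infty|\le n^2\delta B$ this gives $|\varepsilon_x|\le\mu$ and $|\varepsilon_y|\le\mu$; a short computation using $\mu<1$ (so the denominator stays positive) then yields $\frac{\overline{u}-\mu}{1+\mu}\le\lim_{k}z_i(k)\le\frac{\overline{u}+\mu}{1-\mu}$, which is the claim since $\overline{u}=\frac1n\sum_{j=1}^{n}u_j$. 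The delicate part of the argument is the bookkeeping of the first two steps: because $v_i(k)$ need not itself converge it must be kept factored out so that it cancels in the ratio, and one must verify that all $\mathbf{D}(\cdot,\cdot)$-terms decay --- the geometric-times-summable convolution estimate being the only genuinely non-routine computation --- while dividing by $v_i(k)$ is legitimate only because Lemma~\ref{lem: y_well_def_in_noise} keeps $y_i(k)$, hence $v_i(k)$, bounded away from zero.
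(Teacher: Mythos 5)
Your proposal is correct and follows the same overall strategy as the paper's proof: unroll the noisy recursions for $\x$ and $\y$, approximate each product $W(l,k)$ by a rank-one matrix $v(k)\mathbf{1}^\top$ plus a small remainder, observe that the rank-one part multiplies the same scalar structure in numerator and denominator (namely $n\overline{u}\,\overline{\theta}_\infty+\sum_s\beta(s)\mathbf{1}^\top\eta_x(s)$ versus $n\overline{\theta}_\infty+\sum_s\beta(s)\mathbf{1}^\top\eta_y(s)$) so that $v_i(k)$ cancels in the ratio, and finally bound the resulting $z_\infty$ using $\theta(0)=n\delta\sum_s\beta(s)/\mu$. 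The one substantive difference is the tool controlling the remainder: the paper invokes Wolfowitz's theorem (Theorem~\ref{thm:wolfowitz}), fixes $\varepsilon$ and a horizon $\tilde{k}$, and shows $z_i(k)$ eventually lies in an $O(\overline{\varepsilon})$-neighbourhood of $a/b$ with \emph{truncated} sums $N_x(\tilde{k})$, $\sum_{s\le\tilde{k}}\theta(s)$; you instead use the geometric bound $|W_{ij}(l,k)-v_i(k)|\le C\lambda^{k-l}$ from Corollary~2 of \cite{nedic2014distributed} --- the same bound the paper itself relies on in Theorem~\ref{thm:finite_time_estimate_mismatch} --- which lets you run the clean convolution estimate (split at $s=\lfloor k/2\rfloor$) and exhibit an honest limit $N_\infty/D_\infty$ with untruncated series. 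Your write-up also makes explicit two points the paper leaves implicit: that dividing by $v_i(k)$ is legitimate because the lower bound $y_i(k)>\tilde{y}_o$ from Lemma~\ref{lem: y_well_def_in_noise} together with the boundedness of $D_k$ keeps $v_i(k)$ bounded away from zero, and that the ``almost sure'' qualifier is automatic because the entire argument is pathwise under the bounded-noise Assumption~\ref{assp:noise}. Both routes produce the same expression for $z_\infty$ and the same final inequality, so this is a legitimate (and arguably tidier) variant of the paper's argument rather than a different proof.
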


\begin{proof}

Recall~(\ref{eq:compact_num}) and proceed as in~(\ref{eq:equations_in_closed_form}),
\begin{align}\label{eq:x_eq_with_noise}
     \x(k+1) & = \underbrace{W(0,k) \x(0)}_{\text{first}} + \underbrace{\sum_{s=0}^k  \theta(s) W(s+1,k) \x(0)}_{\text{second}}  \nonumber \\
    & \hspace{0.2in} + \underbrace{\sum_{s=0}^k \beta(s) W(s+1,k)  \eta_x(s)}_{\text{third}} 
\end{align}

Note that as $W(l,k)$ is column stochastic for all $k,l \geq 1$, $\|W(l,k)\| \leq \overline{w} < \infty$ for all $k,l$. 

Under Assumption~\ref{assp:theta} and from Theorem~\ref{thm:wolfowitz}, given $\varepsilon > 0$. Define $\hat{\varepsilon} := \frac{\varepsilon}{\max \{ \|\x(0)\|,n \}}$. There exists $\tilde{k}$ and $v(k) \in \mathbb{R}^n$ such that $\sum_{s = \tilde{k}}^\infty \theta(s) < \frac{\hat{\varepsilon}}{3\overline{w}}, \sum_{s = \tilde{k}}^\infty \beta(s) \leq \frac{\hat{\varepsilon}}{3 n \delta \overline{w}} $ and $W(l,k) = v(k)\mathbf{1}^\top + e(k)$, for all $k > \tilde{k} + l$, where, $|e_{ij}(k)| \leq \frac{\hat{\varepsilon}}{3}$ for all $i,j \in [n]$.

Thus the first and second term in~(\ref{eq:x_eq_with_noise}) can be re-formulated in the same manner as~(\ref{eq:x1})-(\ref{eq:inter_proof3}). Consider, the third term in~(\ref{eq:x_eq_with_noise}),  
\begin{align}\label{eq:noise_inter_proof1}
    \sum_{s=0}^{k} \beta(s) W(s+1,k)  \eta_x(s) & = \underbrace{\sum_{s=0}^{\tilde{k}} \beta(s) W(s+1,k)  \eta_x(s)}_{:= \zeta_1(k)} \nonumber \\ 
    & \hspace{-0.4in} + \underbrace{\sum_{s= \tilde{k} + 1 }^{k}\beta(s) W(s+1,k)  \eta_x(s)}_{:= \zeta_{x}(k)}
\end{align}
Note, $k > \tilde{k} + s + 1$ for all the terms in $\zeta_1(k)$. Substituting $W(s+1,k) := v(k)\mathbf{1}^\top + \tilde{e}(k)$, with $|\tilde{e}_{ij}(k)| < \hat{\varepsilon}/3$ for all $i,j \in [n]$ we get,
\begin{align}\label{eq:noise_inter_proof2}
    \textstyle \zeta_1(k) = \sum_{s=0}^{\tilde{k}} \beta(s) [v(k)\mathbf{1}^\top + \tilde{e}(k)]\eta_x(s).
\end{align}
Now consider $\zeta_x(k)$, 
\begin{align*}
    \zeta_x(k) & = \textstyle \sum_{s= \tilde{k} + 1 }^{k}\beta(s) W(s+1,k)  \eta_x(s) \\ 
    & = \beta(k) \eta_x(k) +  \beta(k-1) \M(k) \eta_x(k-1) + \dots  \\ 
    & \hspace{0.2in} + \beta(\tilde{k} +1) W(\tilde{k} + 2,k) \eta_x(\tilde{k} +1).
\end{align*}
Therefore, 
\begin{align}
    \|\zeta_x(k)\|& \leq n \delta  \beta(k) +  n \delta \overline{w} \beta(k-1) \dots + n \delta \overline{w} \beta(\tilde{k} +1) \nonumber \\
    & \leq \textstyle n \delta \overline{w} \sum_{s = \tilde{k}}^\infty \beta(s) \leq \frac{\hat{\varepsilon}}{3}. \label{eq:noise_inter_proof3}
\end{align}
Using~(\ref{eq:x_eq_with_noise}), (\ref{eq:x1})-(\ref{eq:inter_proof3}), and~(\ref{eq:noise_inter_proof1})-(\ref{eq:noise_inter_proof3}), for all $k > \tilde{k}$,
\begin{align}
    \x(k+1) &\leq v(k) \mathbf{1}^\top (\underbrace{ \x(0)  + \sum_{s=0}^{\tilde{k}}  \theta(s) \x(0)  + N_x(\tilde{k}) }_{:= \chi_x(k)}) + \varepsilon \mathbf{1}\mathbf{1}^\top,  \label{eq:noise_x_up_bd}\\
    \x(k+1) &\geq v(k) \mathbf{1}^\top (\x(0)  + \sum_{s=0}^{\tilde{k}}  \theta(s) \x(0)  + N_x(\tilde{k})) - \varepsilon \mathbf{1}\mathbf{1}^\top \label{eq:noise_x_lw_bd},
\end{align}
where, $N_x(\tilde{k}) := \sum_{s=0}^{\tilde{k}} \beta(s)\eta_x(s)$,$N_y(\tilde{k}) := \sum_{s=0}^{\tilde{k}} \beta(s)\eta_y(s)$. Following the similar procedure for~(\ref{eq:y_for_proof_with_noise}) we get, for $k \geq 1$,
\begin{align}
    & \y(k+1) \leq v(k) \mathbf{1}^\top \chi_y(k)  + \varepsilon \mathbf{1}, 
    \label{eq:noise_y_up_bd} \\
    & \y(k+1) \geq v(k) \mathbf{1}^\top \chi_y(k) - \varepsilon \mathbf{1}.  \label{eq:noise_y_lw_bd}   
\end{align}
From~(\ref{eq:compact_ratio}), and~(\ref{eq:noise_x_up_bd})-(\ref{eq:noise_y_lw_bd}), 
\begin{align}\label{eq:noise_z_bound}
  \hspace{-0.1in} \textstyle \frac{[v(k) \mathbf{1}^\top \chi_x(k)  - \varepsilon \mathbf{1}\mathbf{1}^\top]_i}{[v(k) \mathbf{1}^\top \chi_y(k)  + \varepsilon \mathbf{1}]_i} \leq 
   z_i(k+1) \leq \frac{[v(k) \mathbf{1}^\top \chi_x(k)  + \varepsilon \mathbf{1}\mathbf{1}^\top]_i}{[v(k) \mathbf{1}^\top \chi_y(k)  -\varepsilon \mathbf{1}]_i},
\end{align}
where, $[.]_{i}$ denote the $i^{th}$ row ($i^{th}$ entry) of the corresponding matrices (vectors). Denote $\overline{\theta} := 1 + \sum_{s=0}^{\tilde{k}} \theta(s)$, and $v_i(k)$ as the $i^{th}$ entry of $v(k)$.

For $k > \tilde{k}$, for any $i \in [n]$, from~(\ref{eq:noise_z_bound}), 
\begin{align*}
    z_i(k+1) &\leq  \frac{[v(k) \mathbf{1}^\top \chi_x(k)  + \varepsilon \mathbf{1}\mathbf{1}^\top]_i}{[v(k) \mathbf{1}^\top \chi_y(k)  -\varepsilon \mathbf{1}]_i}  \\
    & \hspace{ -0.4in} \leq \frac{v_i(k) \mathbf{1}^\top \x(0) + \overline{\theta} v_i(k) \mathbf{1}^\top \x(0) + v_i(k) \mathbf{1}^\top N_x(\tilde{k}) + \varepsilon \mathbf{1}^\top }{v_i(k) \mathbf{1}^\top \y(0) + \overline{\theta} v_i(k) \mathbf{1}^\top \y(0) + v_i(k) \mathbf{1}^\top N_y(\tilde{k}) - \varepsilon }\\
    & \leq \frac{\frac{\mathbf{1}^\top \x(0)}{n} + \frac{\overline{\theta} \mathbf{1}^\top \x(0)}{n} +  \frac{\mathbf{1}^\top N_x(\tilde{k})}{n} + \varepsilon \mathbf{1}^\top }{ 1 + \overline{\theta} + \frac{\mathbf{1}^\top N_y(\tilde{k})}{n} - \varepsilon} \\
    & \leq \frac{\frac{\mathbf{1}^\top \x(0)}{n} \left( 1 + \overline{\theta} \right) +  \frac{\mathbf{1}^\top N_x(\tilde{k})}{n} + \varepsilon \mathbf{1}^\top }{ 1 + \overline{\theta} + \frac{\mathbf{1}^\top N_y(\tilde{k})}{n} - \varepsilon}\\
    & \leq \frac{\frac{\mathbf{1}^\top \x(0)}{n} +  \frac{\mathbf{1}^\top N_x(\tilde{k})}{n\left( 1 + \overline{\theta} \right)} + \frac{\varepsilon}{\left( 1 + \overline{\theta} \right)} \mathbf{1}^\top }{ 1 + \frac{\mathbf{1}^\top N_y(\tilde{k})}{n \left( 1 + \overline{\theta} \right)} - \frac{\varepsilon}{\left( 1 + \overline{\theta} \right)}} \\
    & \leq \frac{a}{b} + \left( \frac{a}{b} + \mathbf{1}^\top \right) \left(\frac{\frac{\varepsilon}{b (1+ \overline{\theta})}}{1 + \frac{\varepsilon}{b (1+ \overline{\theta})}}\right).
\end{align*}
where, $a := \frac{\mathbf{1}^\top \x(0)}{n} +  \frac{\mathbf{1}^\top N_x(\tilde{k})}{n\left( 1 + \overline{\theta} \right)}, b := 1 + \frac{\mathbf{1}^\top N_y(\tilde{k})}{n \left( 1 + \overline{\theta} \right)}$
Similarly, for $k > \tilde{k}$, for any $i \in [n]$,
\begin{align*}
    z_i(k+1) &\geq \frac{a}{b} - \left( \frac{a}{b} + \mathbf{1}^\top \right) \left(\frac{\frac{\varepsilon}{b (1+ \overline{\theta})}}{1 - \frac{\varepsilon}{b (1+ \overline{\theta})}}\right)
\end{align*}
Let $\overline{\varepsilon}$ be such that, $\varepsilon \leq \frac{\overline{\varepsilon}}{b(1 + \overline{\varepsilon})} (1+ \overline{\theta})$. Therefore, for all $k > \tilde{k}$, and for all $i \in [n]$,
\begin{align*}
   & - \left( \frac{a}{b} + \mathbf{1}^\top \right) \overline{\varepsilon} \leq z_i(k) - \frac{\frac{\mathbf{1}^\top \x(0)}{n} +  \frac{\mathbf{1}^\top N_x(\tilde{k})}{n\left( 1 + \overline{\theta} \right)}}{1 + \frac{\mathbf{1}^\top N_y(\tilde{k})}{n \left( 1 + \overline{\theta} \right)}}  \leq \left( \frac{a}{b} + \mathbf{1}^\top \right) \overline{\varepsilon} .
\end{align*}
Note that $\overline{\varepsilon}$ can be made arbitrarily small. Thus, for all $i \in [n]$,
\begin{align*}
   \textstyle \lim_{k \to \infty} z_i(k) = \frac{\frac{\mathbf{1}^\top \x(0)}{n} +  \frac{\mathbf{1}^\top N_x(\tilde{k})}{n\left( 1 + \overline{\theta} \right)}}{1 + \frac{\mathbf{1}^\top N_y(\tilde{k})}{n \left( 1 + \overline{\theta} \right)}} := z_\infty.
\end{align*}
Moreover, under assumptions~\ref{assp:beta}-\ref{assp:noise},
\begin{align*}
   \textstyle \frac{ \frac{\sum_{j=1}^n u_j}{n} - \frac{n \delta \sum_{s=0}^\infty \beta(s)}{\left(1 + \sum_{s=0}^{\infty} \theta(s) \right)} \mathbf{1}^\top }{ 1 + \frac{\delta \sum_{s=0}^\infty \beta(s)}{\left(1 + \sum_{s=0}^{\infty} \theta(s) \right)} } \leq z_{\infty} \leq \frac{ \frac{\sum_{j=1}^n u_j}{n} + \frac{n \delta \sum_{s=0}^\infty \beta(s)}{\left(1 + \sum_{s=0}^{\infty} \theta(s) \right)} \mathbf{1}^\top}{ 1 - \frac{\delta \sum_{s=0}^\infty \beta(s)}{\left(1 + \sum_{s=0}^{\infty} \theta(s) \right)} },
\end{align*}
where, we used $\frac{\mathbf{1}^\top \x(0)}{n} = \frac{1}{n} \sum_{j=1}^n u_j$. Note, that if given $\mu > 0$, if $\theta(0) := n \delta \sum_{s=0}^\infty \beta(s)/ \mu$, we have
\begin{align*}
   \textstyle \frac{ \frac{\sum_{j=1}^n u_j}{n} - \mu }{ 1 + \mu} \leq z_{\infty} \leq \frac{ \frac{\sum_{j=1}^n u_j}{n} + \mu}{ 1 - \mu}.
\end{align*}
\end{proof}

\begin{remark}
Theorem~\ref{thm:withnoise} states that the estimates $z_i(k)$ converge to the same value $z_\infty$ (achieve consensus) under the additive communication noise. The upper and lower bounds on $z_\infty$ can be made very close to the average $\overline{u}$ by appropriately choosing the sequences $\theta(k)$ and $\beta(k)$. In particular, if $\sum_{s=0}^\infty \theta(s) >> \sum_{s=0}^\infty \beta(s)$ then $z_\infty$ is very close to the actual average $\overline{u}$. This can be achieved via suitable choices of $K_\theta$ and the starting values of the sequence $\theta(k), k < K_\theta$ in~(\ref{eq:theta_family}). We demonstrate this using numerical simulation studies with different noise realizations in Section~\ref{sec:sim_results}.
\end{remark}

\end{subsection}

\end{section}

\begin{section}{Simulation Study}\label{sec:sim_results} 
In this section, we demonstrate the performance of the proposed $\nrps$ algorithm. A network of 10 agents generated using the Erdos-Renyi model \cite{erdHos1960evolution} with a connectivity probability of 0.35 is considered for demonstration. The initial values of the agents chosen for the numerical study are $u = [1,2,3,4,5,6,7,8,9,10]$ with the average value being $ \frac{1}{n} \sum_{j=1}^n u_j = 5.5$. The examples in the numerical study are implemented in MATLAB, and run on a desktop computer with 16 GB RAM and an Intel Core i7 processor running at 1.90 GHz. The noise realizations are obtained by simulating random variables that are drawn independently from a probability distribution at each iteration of the algorithm under study. We present the plots of the evolution of agents' estimates against the algorithm iterations. 

First, we present the performance of the proposed $\nrps$ algorithm under perfect communication. Then we present the performance under imperfect communication via the following case studies: (i) performance of $\nrps$ algorithm under different algorithm parameters, (ii) performance under different magnitudes of the additive noise and (iii) robustness of the $\nrps$ algorithm under the violation of the upper bound in Assumption~\ref{assp:noise} during the run-time of the algorithm. Lastly, we present a comparison between the proposed $\nrps$ algorithm and other existing algorithms in the literature.

\subsection{Performance under perfect communication}
Fig.~\ref{fig:without_noise_den} presents the agent state trajectories for the $\nrps$ algorithm when the communication links are noiseless. The algorithm parameters are: $\theta(k) = 0.7^k$ for all $k \geq 1$ and $K_\beta = 200$ such that  $\beta(k) = 0.35, \forall k < 200$ and $\beta(k) = 100/k^{1.1}, \forall k \geq 200$. Figs.~\ref{fig:without_noise_den}(a) and~\ref{fig:without_noise_den}(b) show the estimates $x_i(k)$ and $y_i(k)$ (called the numerator and denominator states respectively) for all the agents. The numerator and denominator states reach a steady state. The agent states $z_i(k)$ are plotted in Fig.~\ref{fig:without_noise_den}(c); it can be seen that the agents states $z_i(k)$ achieve consensus to the average value of $5.5$.
\begin{figure}[h] 
\centering
    \includegraphics[scale=0.2,trim={0.35cm 5.2cm 1.1cm 6.1cm},clip] {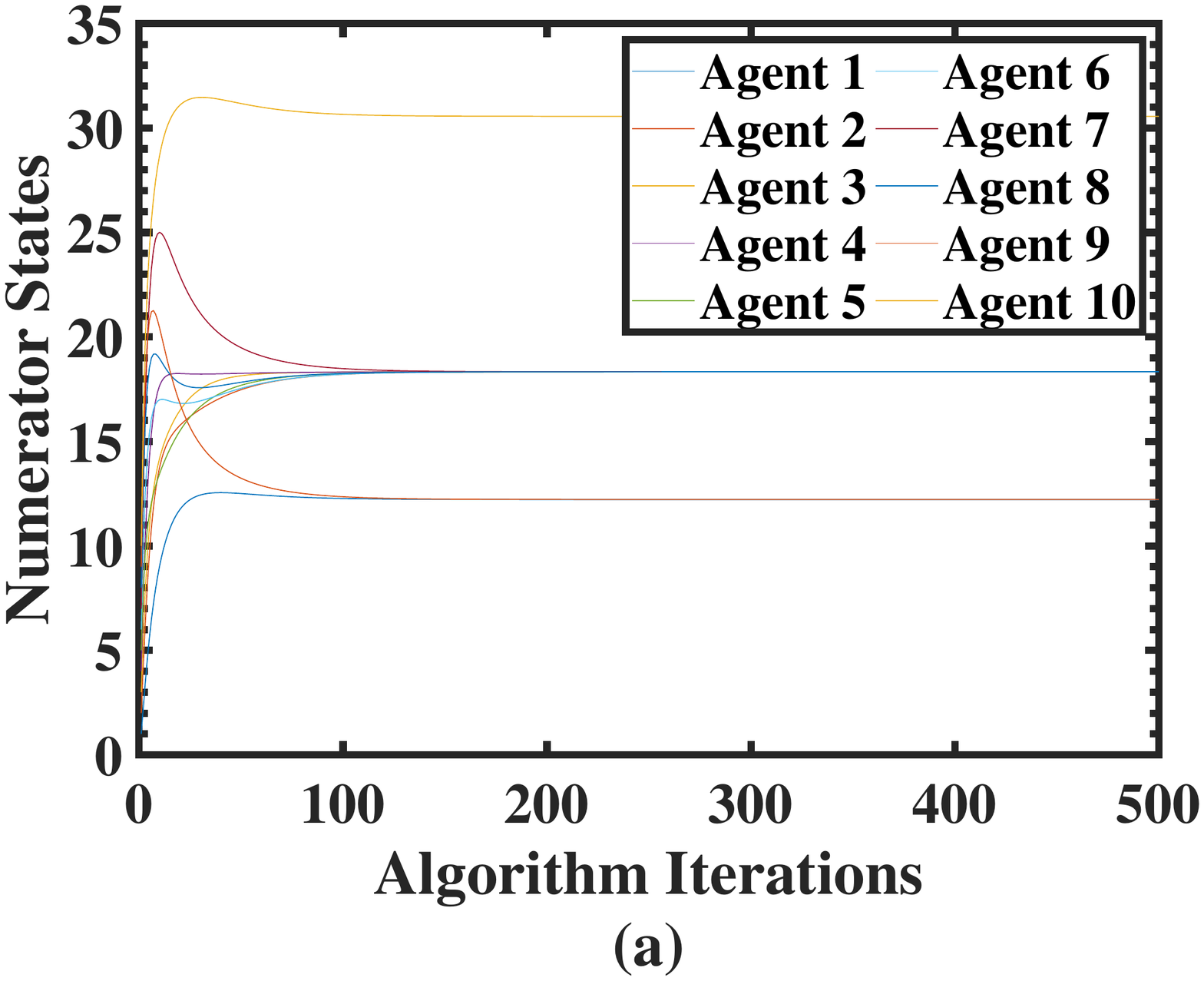}
    \includegraphics[scale=0.2,trim={0.35cm 5.2cm 1.1cm 6.1cm},clip] {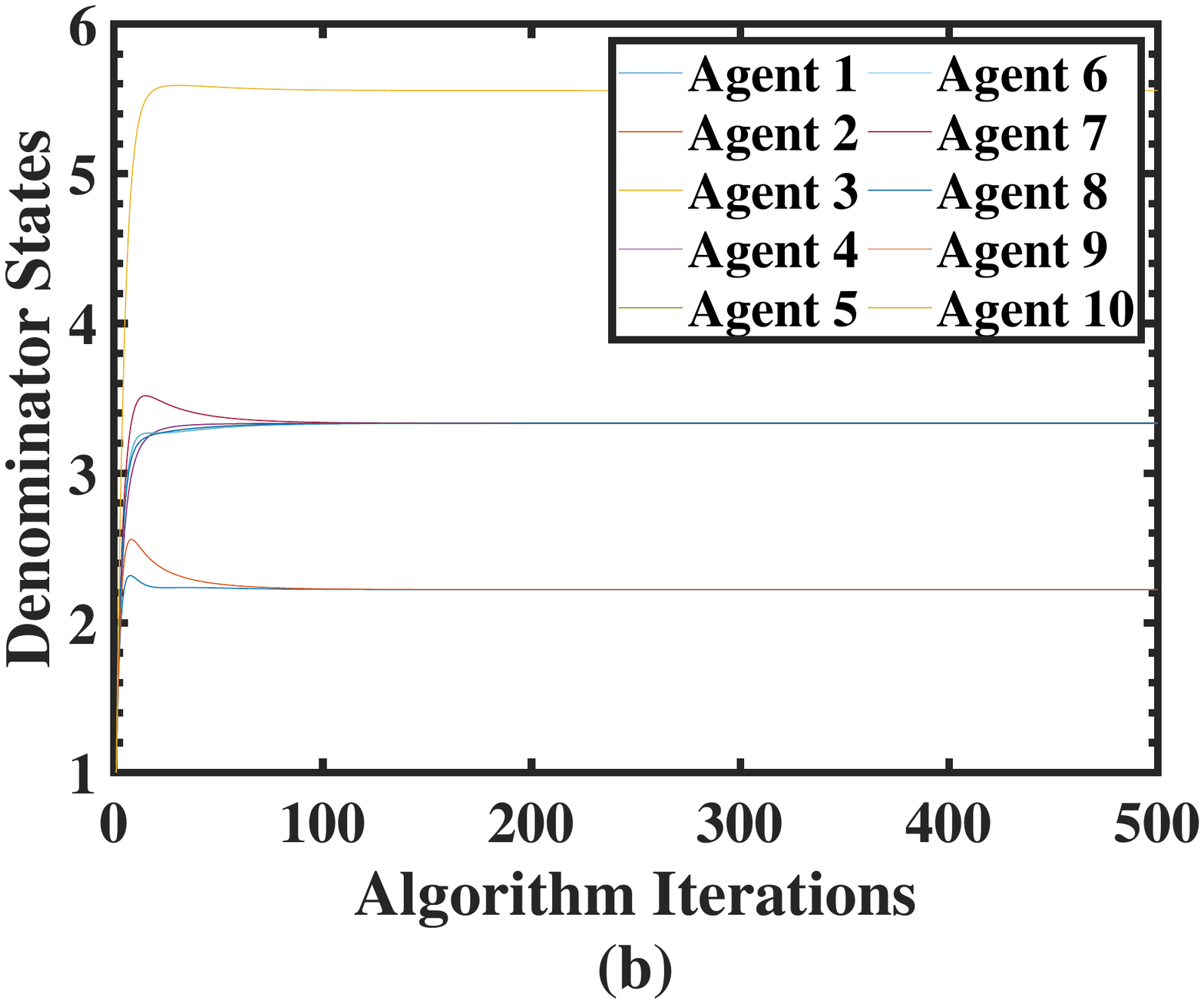}\\
    \includegraphics[scale=0.2,trim={0.35cm 5.2cm 1.1cm 6.1cm},clip] {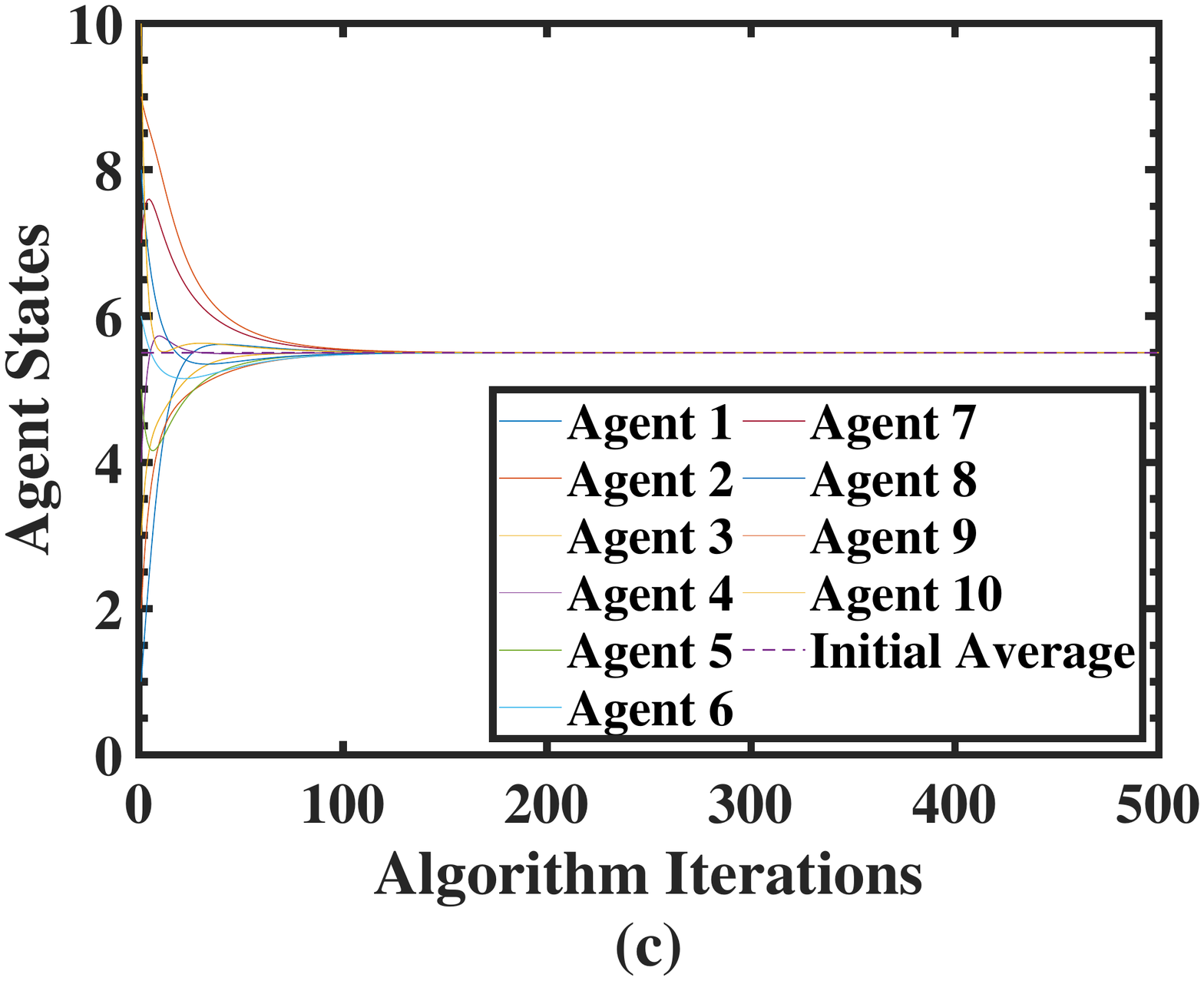} 
  \caption{The agent state achieve consensus to the average value with the communication links being noiseless.}
  \label{fig:without_noise_den} 
\end{figure}

\begin{figure}[b] 
\centering
    \includegraphics[scale=0.195,trim={5.8cm 0.2cm 5.9cm 0.35cm},clip] {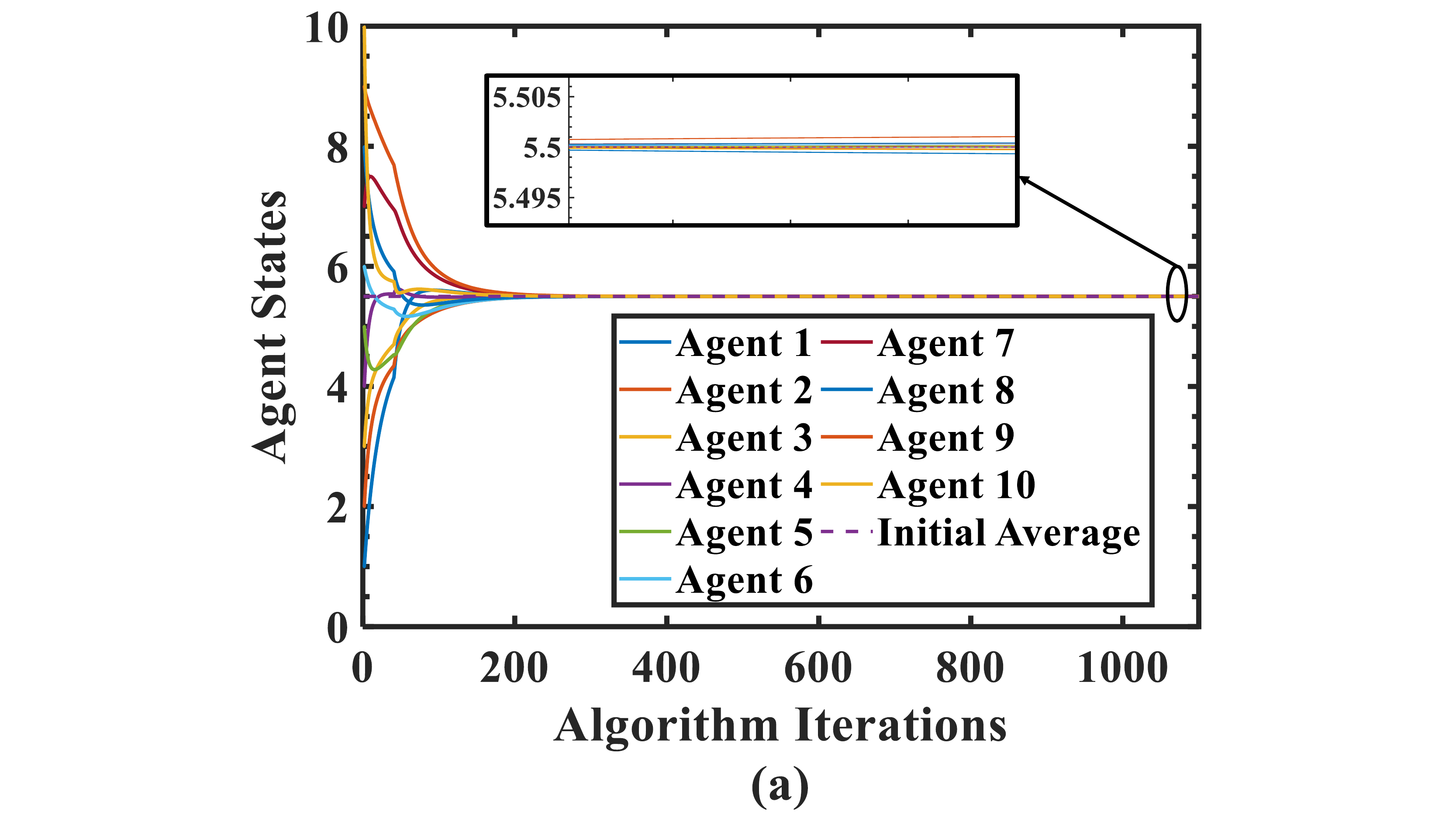}
    \includegraphics[scale=0.195,trim={6.1cm 0.2cm 5.8cm 0.45cm},clip] {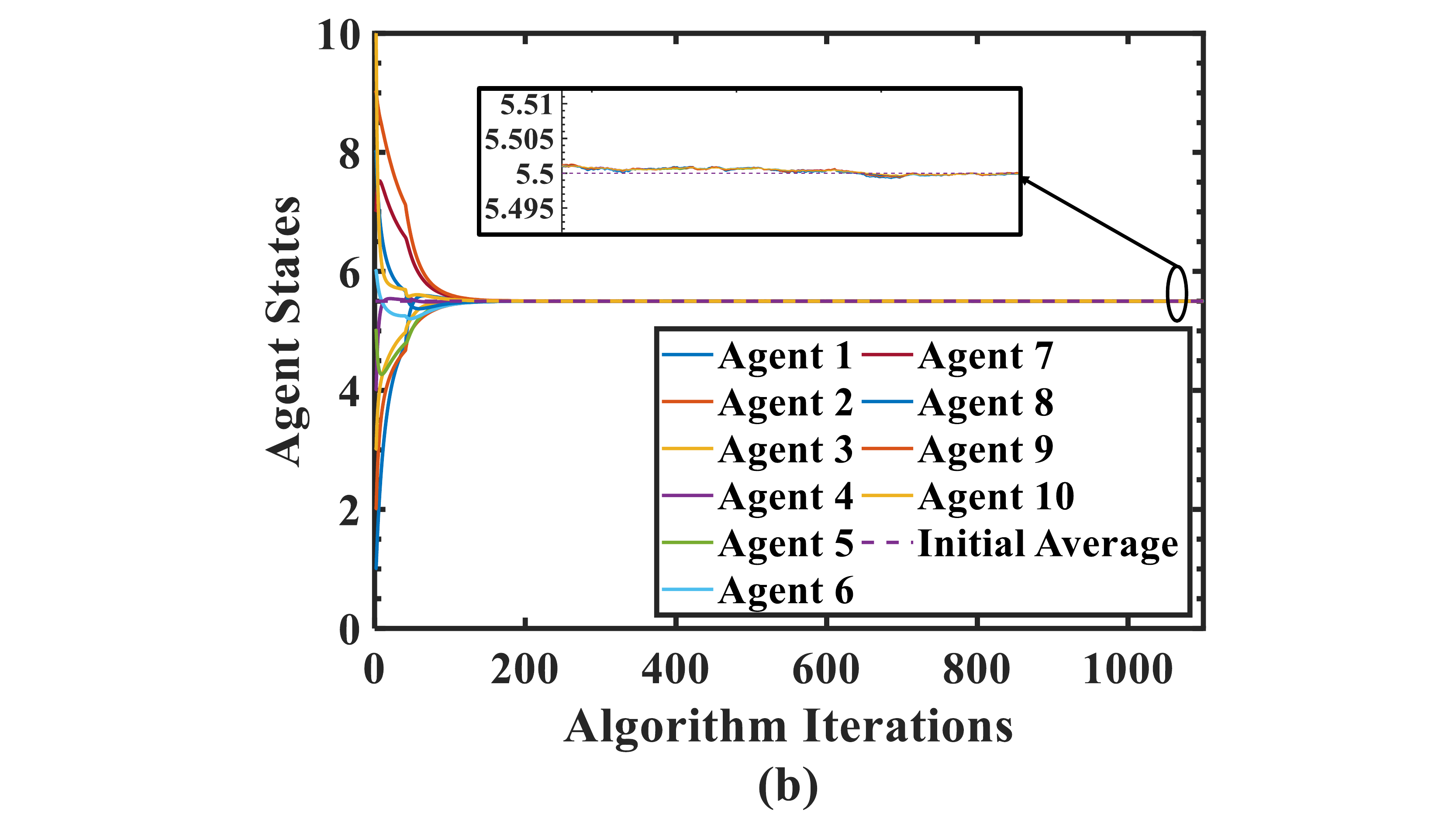}
  \caption{Performance of the $\nrps$ algorithm under noisy communication with different choices of algorithm parameters: (a) Row~(a) of Table~\ref{tab:parameters}, (b) Row~(b) of Table~\ref{tab:parameters}.}
  \label{fig:beta_theta_seq} 
\end{figure}

\subsection{Performance under communication noise}
In this section we present the performance of the proposed $\nrps$ algorithm under noisy communication links. \\
\noindent \textbf{1. Different choices of algorithm parameters:} Here, we demonstrate that the $\nrps$ algorithm achieves average consensus under the additive noise. The noise in communication links is realized from a uniform distribution  $\mathcal{U}(-1,1)$. We employ two different choices of $\beta(k)$ sequences and accordingly choose the $\theta(k)$ sequences from the family of sequences in~(\ref{eq:theta_family}). The choices are summarized in Table~\ref{tab:parameters}.
\begin{table}[h]
\centering
\captionsize
\caption{Example sequences for $\nrps$}
\begin{tabular}{c|c|c} 
\hline 
(a) & $\beta(k) = 0.2, \forall k < 500,$ & $\theta(k) = 100, \forall k < 500,$ \\
& $\beta(k) = 1/k^{1.5}, \forall k \geq 500 $  & $\theta(k) = 10/k^{1.5},  \forall k \geq 500$ \\[0.5ex] \hline
(b) & $\beta(k) = 0.35, \forall k < 200,$  & $\theta(k) = 100, \forall k < 200,$  \\
& $ \beta(k) = 100/k^{1.1}, \forall k \geq 200 $ & $\theta(k) = 150/k^{1.1},  \forall k \geq 200 $ \\[0.5ex]  
\end{tabular}
\label{tab:parameters}
\end{table}

Figs.~\ref{fig:beta_theta_seq}(a) and~\ref{fig:beta_theta_seq}(b) present the agent state trajectories for the $\nrps$ algorithm realizations with sequences in Table~\ref{tab:parameters} rows (a) and~(b) respectively under the additive noise. The inset in the Figs.~\ref{fig:beta_theta_seq}(a) and~\ref{fig:beta_theta_seq}(b) show that the proposed $\nrps$ algorithm has strong performance leading to consensus among the agents under additive noise. Moreover, the final consensus value of the agents is close to the initial average of $5.5$. \\

\noindent \textbf{2. Different magnitudes of the additive noise:} 
The effect of the magnitude of the communication noise relative to the initial values of the agents is presented here. The noise is realized as a uniform random variable. For the numerical experiment the magnitude of the noise realizations are scaled to: (i) $5$ times the original value, i.e. $\mathcal{U}(-5,5)$ and (ii) $10$ times the original value i.e. $\mathcal{U}(-10,10)$. We choose the sequence $\beta(k) = 0.35, \forall k < 200, \beta(k) = 1/k^{1.2}, \forall k \geq 200$. The sequence $\theta(k)$ is modified according to the noise upper bound. For the two cases, the following $\theta(k)$ sequences are chosen in accordance to the family~(\ref{eq:theta_family}) for the $\nrps$  algorithm: (i) $\theta(k) = 100, \forall k < 200, \theta(k) = 50/k^{1.2}, \forall k \geq 200$ and (ii) $\theta(k) = 100, \forall k < 200, \theta(k) = 100/k^{1.2}, \forall k \geq 200$. Figs.~\ref{fig:noise_mag}(a) and~\ref{fig:noise_mag}(b) provide the state trajectories of different agents under the $\nrps$ algorithm for cases (i) and (ii) respectively. It can be seen that the agent states achieve consensus close to the initial average value of $5.5$.
\begin{figure}[h] 
\centering
    \includegraphics[scale=0.195,trim={5.8cm 0.2cm 5.9cm 0.45cm},clip] {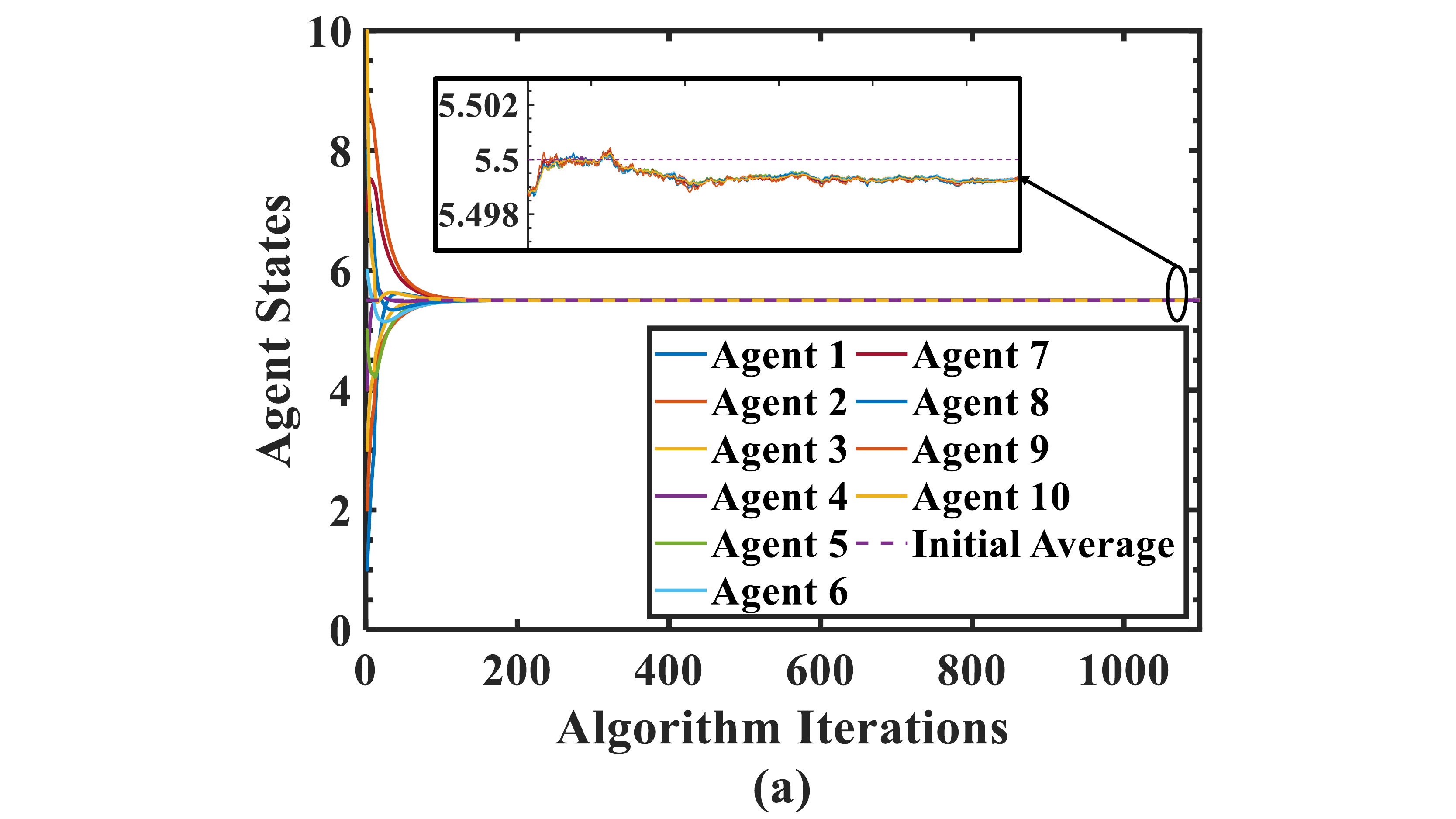}
    \includegraphics[scale=0.195,trim={6.1cm 0.2cm 5.8cm 0.45cm},clip] {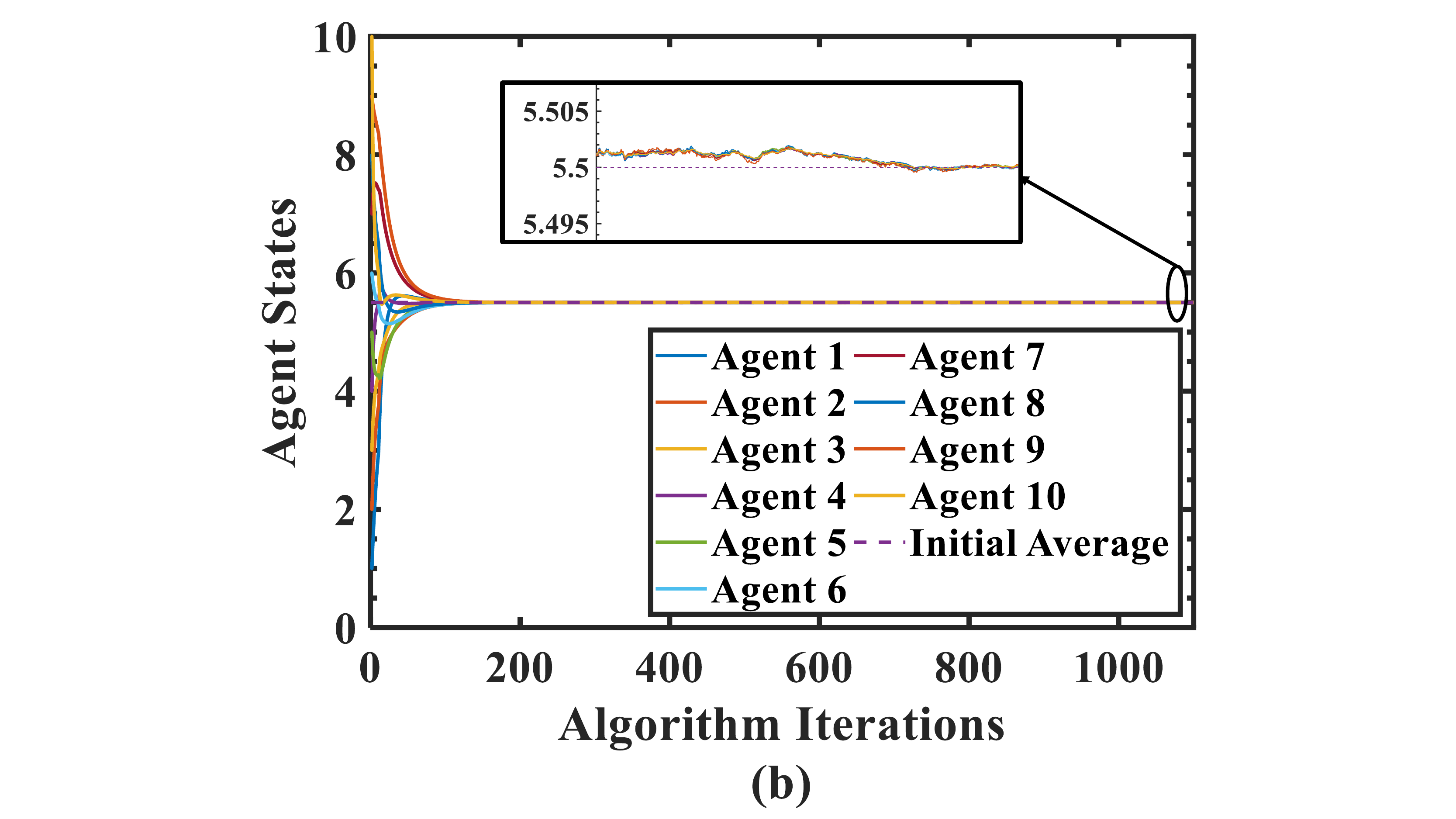}
  \caption{Performance of the $\nrps$ algorithm under the variation of the magnitude of additive communication noise: (a) noise realized via $\mathcal{U}(-5,5)$, (b) noise realized via $\mathcal{U}(-10,10)$. In both the cases, state estimates under $\nrps$ with appropriate algorithmic parameters converge to the initial average.}
  \label{fig:noise_mag} 
\end{figure}

\noindent \textbf{3. Robustness to noise with extreme magnitudes:} 
The parameters of the proposed $\nrps$ algorithm depend on an upper bound on the communication noise (see~(\ref{eq:theta_family})). To test the applicability of the $\nrps$ algorithm in extreme scenarios, here we test the robustness of the $\nrps$ algorithm under situations when the communication noise violates the upper bound in Assumption~\ref{assp:noise}. In particular, we design the algorithm parameters assuming that the communication noise is a uniform random variable $\mathcal{U}(-1,1)$. We test the following two scenarios: (i) communication noise in the worst case becomes $400$ times at every $50$ iterations, i.e. noise is realized as $\mathcal{U}(-400,400)$ at every iteration of the form $50i$, where, $i=1,2,\dots$, (ii) communication noise in the worst case becomes $400$ times at every $10$ iterations, i.e. noise is realized as $\mathcal{U}(-400,400)$ at every iteration of the form $10i$, where, $i=1,2,\dots$. The two situations are designed to test the effect of the number of violations of the bound in Assumption~\ref{assp:noise} during run-time on the convergence of the $\nrps$ algorithm. Figs.~\ref{fig:robustness}(a) and~\ref{fig:robustness}(b) show the agent states for the two test situations. The frequency of the event when the noise value is greater than the original upper bound effects the final converged value of the algorithm. In case (i) (Fig.~\ref{fig:robustness}(a)) with less number of violations the $\nrps$ algorithm converge closer to the initial average value of $5.5$, while the estimates in situation (ii) have a larger deviation from the original average value of $5.5$ (Fig.~\ref{fig:robustness}(b)). However, we emphasize that the proposed $\nrps$ algorithm performs well and remain stable under the extreme violations of the design assumptions as shown by the state trajectories in Figs.~\ref{fig:robustness}(a) and~\ref{fig:robustness}(b). 

\begin{figure}[t] 
\centering
    \includegraphics[scale=0.195,trim={5.8cm 0.2cm 5.9cm 0.2cm},clip] {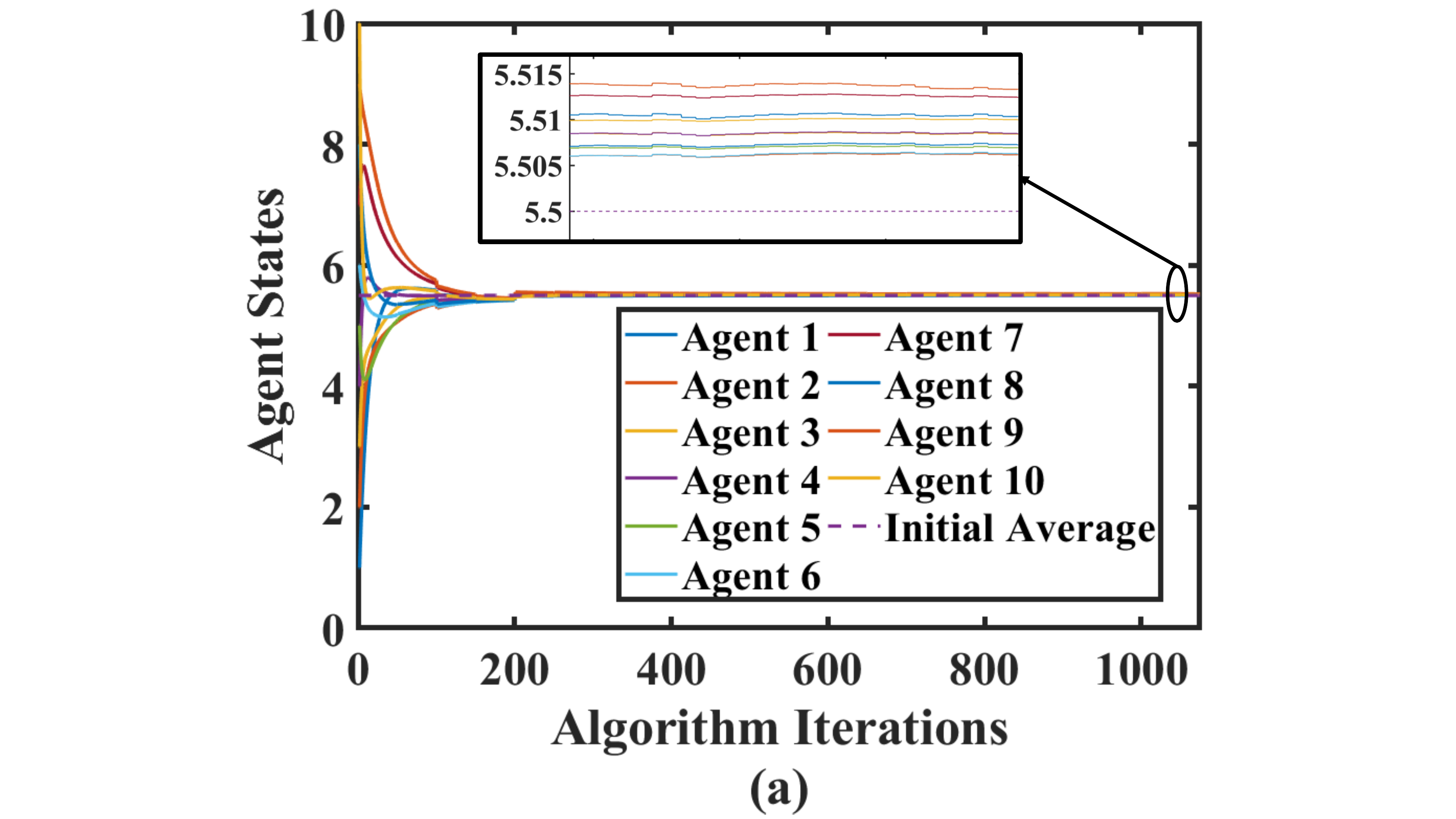}
    \includegraphics[scale=0.195,trim={5.9cm 0.2cm 5.6cm 0.18cm},clip] {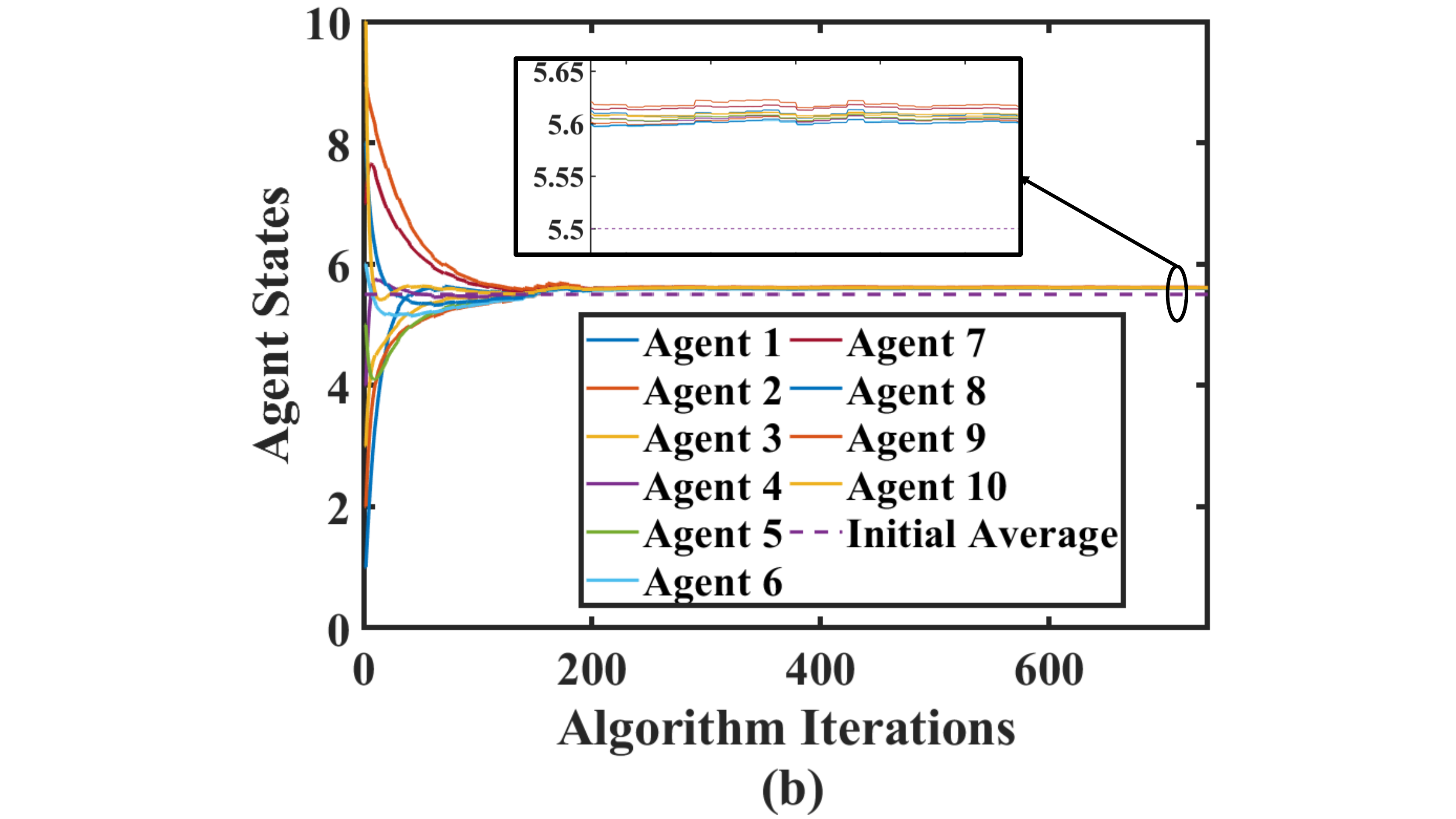}
  \caption{Performance of the $\nrps$ algorithm under scenarios with the noise realizations violating the upper bound utilized to the algorithms parameters. Figs.~\ref{fig:robustness}(a) and~\ref{fig:robustness}(b) demonstrate the robustness of the proposed $\nrps$ algorithm under extreme violation scenarios. }
  \label{fig:robustness} 
\end{figure}

\begin{figure}[b] 
\centering
    \includegraphics[scale=0.21,trim={0.6cm 5.2cm 0.8cm 6cm},clip] {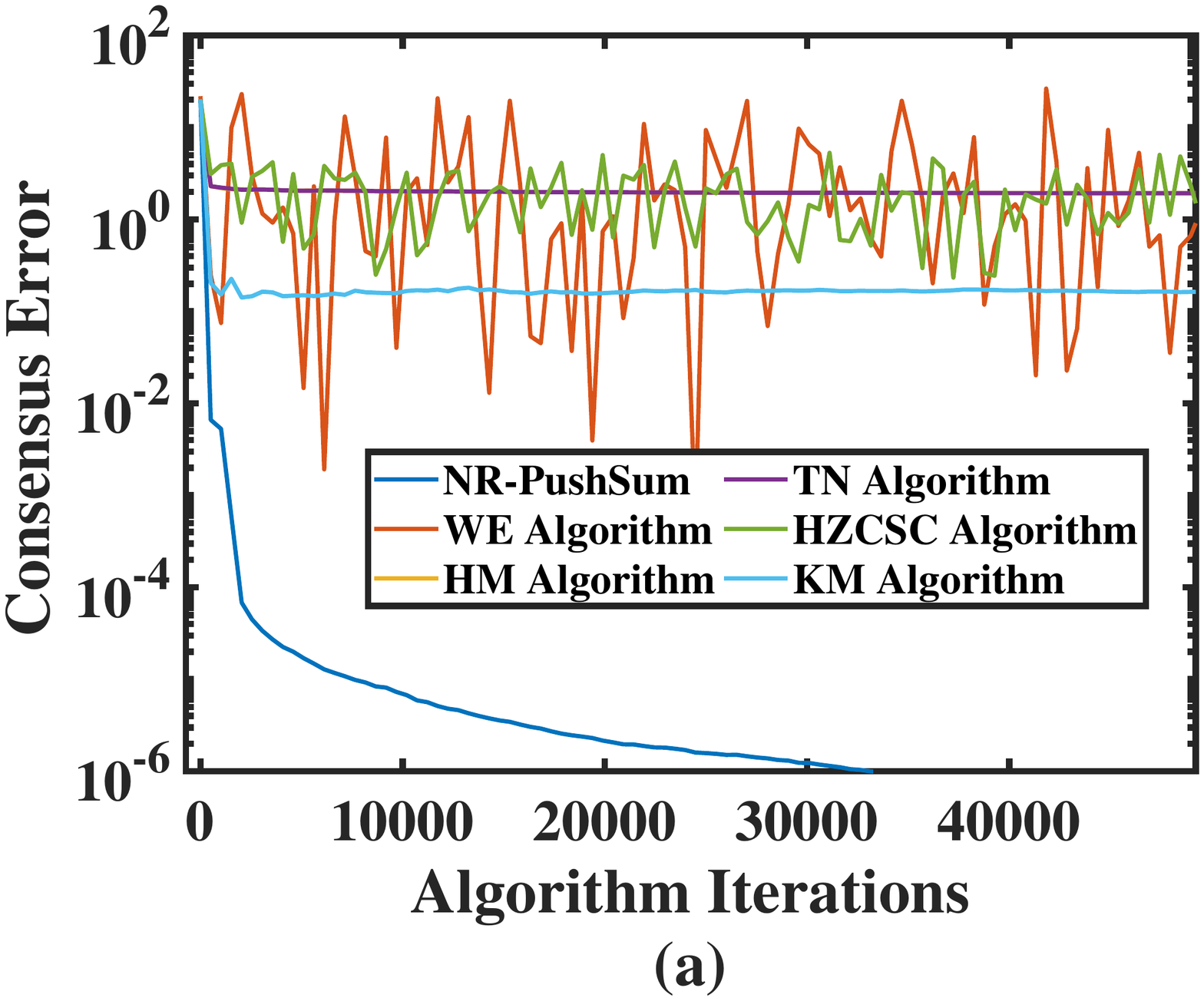}
    \includegraphics[scale=0.21,trim={0.6cm 5.2cm 0.8cm 6cm},clip] {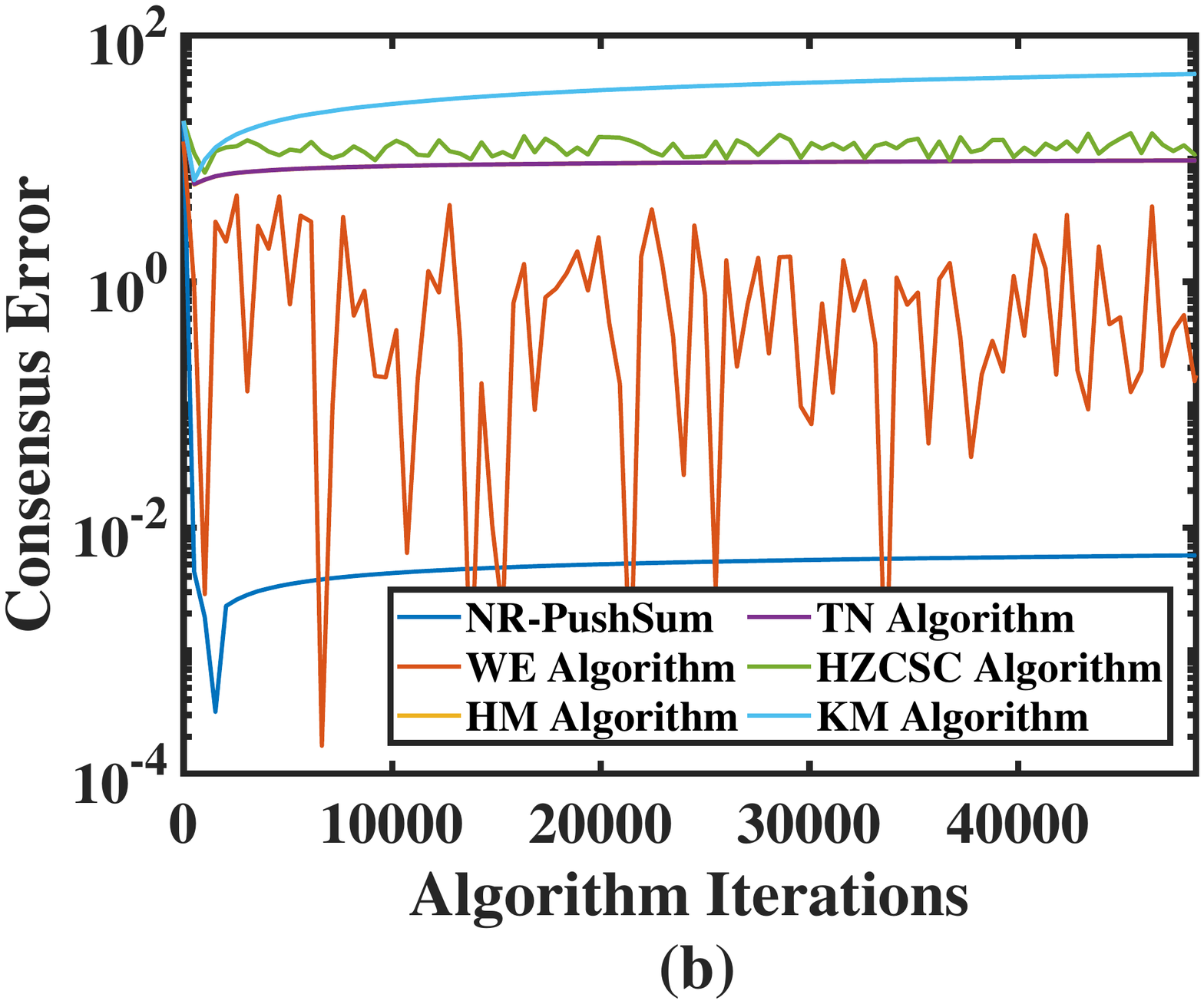}
  \caption{Performance comparison of the $\nrps$ algorithm with existing algorithms in the literature: (a) performance under bounded noise with zero mean and (b) performance under bounded communication noise with non-zero mean.}
  \label{fig:comparison} 
\end{figure}

\subsection{Comparison with other algorithms}
In this section, we present a performance comparison between the proposed $\nrps$ algorithm and existing algorithms in the literature. Here, we refer the existing algorithms via the initial of the authors' names. We compare with the following algorithms: WE algorithm of \cite{wang2013distributed}, HM algorithm in \cite{huang2009coordination}, TN algorithm in \cite{touri2009distributed}, HZCSC algorithm in \cite{he_bounded}, and the KM algorithm in \cite{kar2008distributed}. Consensus error for the $\nrps$ algorithm at any iteration $k$ is defined as, $\sum_{i=1}^n(z_i(k) - \frac{1}{n}\sum_{j=1}^n u_j)^2 = \sum_{i=1}(z_i(k) - 5.5)^2$. To obtain the consensus error for any other algorithm we replace the $z_i(k)$ estimates with the corresponding estimates in the chosen algorithm. We compare the results under two different noise realizations: (i) we first consider the communication noise realized as a uniform random variable $\mathcal{U}(-1,1)$. (ii) Next, we compare the performance of all the algorithms when the communication noise has a \textbf{non-zero} mean. In particular, we consider the noise realizations to be generated as a non-zero mean uniform random variable $\mathcal{U}(0,1)$. In both the cases we choose the following sequences for the $\nrps$ algorithm: $\beta(k) = 0.2, \forall k < 500, \beta(k) = 1/k^{1.5}, \forall k \geq 500 $ and $\theta(k) = 100, \forall k < 500,\theta(k) = 10/k^{1.5},  \forall k \geq 500$. To present the performance comparison, we plot the consensus error for all the algorithms with respect to the iteration count under the zero mean and non-zero mean cases in Fig.~\ref{fig:comparison}(a) and~\ref{fig:comparison}(b) respectively. It can be seen that the proposed $\nrps$ algorithm performs the best in both cases. The $\nrps$ algorithm has a consensus error that decays faster compared to the other algorithms when the communication noise has zero mean (see Fig.~\ref{fig:comparison}(a)). When the communication noise has non-zero mean, it can be seen that the consensus error for the existing algorithms in the literature is greater than $1$ and increases with the iteration count of the algorithm. In this case too the proposed $\nrps$ algorithm performs well. The consensus error for the $\nrps$ algorithm is less than $10^{-2}$ for all the iterations.

Summarizing, the simulation results indicate that the proposed $\nrps$ algorithm has a good performance under communication noise and show robustness to uncertain change in the communication noise. Compared to the existing state-of-the-art algorithms the $\nrps$ algorithm have superior performance with respect to the ability to achieve consensus to the initial average under communication noise.

\end{section}
\begin{section}{Conclusion}\label{sec:conclusion}
In this article, we consider the problem of designing an average consensus algorithm over directed graphs when the communication links are corrupted by noise. First we show that the widely used PushSum algorithm to design distributed decision making schemes over directed graphs does not work well under communication noise. Then we propose the $\nrps$ algorithm where each agent utilizes its own noise-free weighted initial value to update its estimates during every iteration. Further, each agent controls the amount of noise to be allowed into its state updates by appropriately weighting the information sent by its neighbors over the noisy communication links. The carefully chosen weights allow the agents to compensate for the noisy estimates received from their neighbors and steer towards consensus. We establish that the proposed $\nrps$ algorithm achieves consensus at a geometric rate in the noiseless communication case. We prove that under additive noise the agents' estimates following the $\nrps$ algorithm reach to a consensus value \textit{almost surely}. The efficacy of the proposed algorithm is demonstrated via numerical case studies involving different noise realizations and various combinations of the algorithm parameters. Utilizing the proposed $\nrps$ algorithm to design distributed optimization algorithms over directed graphs in the presence of communication noise is a future research endeavor by the authors.
\end{section}

\appendix
\subsection{Proof of matrix $W(l,k)$ being cSIA for all $l,k$}\label{sec:W_k_cSIA}
Recall $\M(k) := \A(k) + \B(k)$. Consider the column sum, for any column $j$,
\begin{align*}
    \sum_{i=1}^n \M_{ij}(k) &= \alpha_{ii}(k) p_{ii} + \beta(k) \sum_{i=1, i \neq j}^n p_{ij}\\
    &= \frac{1 - \beta(k)(1-p_{ii})}{p_{ii}} p_{ii} + \beta(k) \sum_{i=1, i \neq j}^n p_{ij} \\
    & = 1 - \beta(k)(1-p_{ii}) + \beta(k) (1 - p_{ii}) = 1,
\end{align*}
where, we used~(\ref{eq:alpha}) in the second equality and the fact that matrix $\p$ satisfy Assumption~\ref{assp:strg_colmstoc}. Since $j$ is arbitrary, we conclude that $\M(k)$ is column-stochastic for all $k$.

\textit{Claim:} Let $R$ and $S$ be two $n \times n$ column stochastic matrices. Then $RS \in \mathbb{R}^{n\times n}$ is column stochastic. \\
\textit{Proof:}
Consider, the sum
\begin{align*}
    \sum_{i = 1}^n [RS]_{ij} &= \sum_{i=1}^n \sum_{k=1}^n R_{ik}S_{kj} \\
    &=  \sum_{k=1}^n \sum_{i=1}^n R_{ik} S_{kj} \\
    &= \sum_{k=1}^n S_{kj} = 1,
\end{align*}
where, we used the fact that $R$ and $S$ are column-stochastic. Thus, $RS$ is column-stochastic. \qed

Note that,
\begin{align*}
    W_{ij}(l,k) &= [W(l,k-1)\M(k)]_{ij} \\
    &= [W(l,k-1)(\A(k) + \B(k))]_{ij} \\
    &= [W(l,k-1)\A(k)]_{ij} + [W(l,k-1)\B(k)]_{ij} \\
    & \hspace{-0.25in}= \sum_{t = 1}^n W_{it}(l,k-1)\A_{tj}(k) + \sum_{t = 1}^n W_{it}(l,k-1)\B_{tj}(k)\\
    & \hspace{-0.25in} = W_{ij}(l,k-1)\alpha_{j}(k)p_{jj} + \beta(k) \sum_{t = 1, t\neq j}^n W_{it}(l,k-1)p_{tj}\\
    & \hspace{-0.2in} = W_{ij}(l,k-1)\alpha_{j}(k)p_{jj} + 
    W_{ii}(l,k-1)\beta(k)p_{ij} \\ 
    & \hspace{0.4in} + \beta(k) \sum_{t = 1, t \neq i, t\neq j}^n W_{it}(l,k-1)p_{tj}.
\end{align*}
This implies, if $p_{ij} \neq 0$ then $W_{ij}(l,k) \neq 0$. Thus, the graph structure generated using $W(l,k)$ is strongly connected as $\mathcal{G}(\mathcal{V},\mathcal{E})$ (graph generated using $\p$) is strongly connected. Hence, $W(l,k)$ is irreducible. As, $W_{jj}(l,k) > 0$ for all $j$ and the graph induced by $W(l,k)$ is strongly connected, $W(l.k)$ is aperiodic. 
\bibliography{references}

\end{document}